\title{Pliability and Approximating Max-CSPs\thanks{An extended abstract of this work appeared in the Proceedings of SODA'21~\cite{rwz21:soda}. Work done while Miguel Romero and Marcin Wrochna were at the University of Oxford. Stanislav \v{Z}ivn\'y was supported by a Royal Society University Research Fellowship. This project has received funding from the European Research Council (ERC) under the European Union's Horizon 2020 research and innovation programme (grant agreement No 714532). The paper reflects only the authors' views and not the views of the ERC or the European Commission. The European Union is not liable for any use that may be made of the information contained therein. This work was also supported by UKRI EP/X024431/1. For the purpose of Open Access, the authors have applied a CC BY public copyright licence to any Author Accepted Manuscript version arising from this submission. All data is provided in full in the results section of this paper.}}
\author{
 	Miguel Romero\\
  Pontificia Universidad Cat\'{o}lica de Chile, Chile\\
  \texttt{mgromero@uc.cl}
 	\and
 	Marcin Wrochna\\
 	University of Warsaw, Poland\\
 	\texttt{m.wrochna@mimuw.edu.pl}
 	\and
 	Stanislav \v{Z}ivn\'{y}\\
 	University of Oxford, UK\\
 	\texttt{standa.zivny@cs.ox.ac.uk}
}
\date{\today}
\newtheorem{theorem}{Theorem}
\newtheorem*{theorem*}{Theorem}
\numberwithin{theorem}{section} 
\newtheorem{lemma}[theorem]{Lemma}
\newtheorem*{lemma*}{Lemma}
\newtheorem{corollary}[theorem]{Corollary}
\newtheorem*{corollary*}{Corollary}
\newtheorem{proposition}[theorem]{Proposition}
\newtheorem*{proposition*}{Proposition}
\newtheorem{observation}[theorem]{Observation}
\newtheorem*{observation*}{Observation}
\theoremstyle{definition}
\newtheorem{remark}[theorem]{Remark}
\newtheorem{example}[theorem]{Example}
\newtheorem{definition}[theorem]{Definition}
\newtheorem{conjecture}[theorem]{Conjecture}
\newtheorem{question}[theorem]{Question}
\newcommand{\eps}{\varepsilon}
\newcommand{\tuple}[1]{\mathbf{#1}}
\DeclareMathOperator{\val}{value}
\newcommand{\tup}[1]{\text{tup}(#1)}
\DeclareMathOperator{\ar}{ar}
\newcommand{\defeq}{\vcentcolon=}
\newcommand{\Oh}{\ensuremath{\mathcal{O}}}
\def\qplus{\mathbb{Q}_{\geq 0}}
\newcommand{\toset}[1]{\text{Set}(#1)}
\newcommand\overcasts{\mathbin{\succeq}}
\def\structA{\mathbb{A}}
\def\structB{\mathbb{B}}
\def\structC{\mathbb{C}}
\def\A{\mathcal{A}}
\def\B{\mathcal{B}}
\def\Gg{\mathcal{G}}
\newcommand{\opt}[1]{\mathrm{opt}(#1)}
\newcommand{\optfrac}[2]{\mathrm{opt}_{#1}(#2)}
\newcommand{\AOver}[2]{\A_{#1}^{#2}}
\newcommand{\ZZ}{\mathbb{Z}}
\newcommand{\NN}{\mathbb{N}}
\newcommand{\QQ}{\mathbb{Q}}
\newcommand{\RR}{\mathbb{R}}
\def\Qn{\QQ_{\geq 0}}
\DeclareMathOperator*{\EX}{\mathbb{E}}
\def\Pp{\mathcal{P}}
\newcommand{\dopt}{\mathrm{d}_{\mathrm{opt}}}
\DeclareMathOperator{\dedit}{d_{1}}
\DeclareMathOperator{\Img}{\mathbb{I}m}
\DeclareMathOperator{\tw}{tw}
\DeclareMathOperator{\pw}{pw}
\DeclareMathOperator{\td}{td}
\DeclareMathOperator{\mad}{mad}
\newcommand{\Hadwiger}{\text{Hadwiger}}
\DeclareMathOperator{\indeg}{in-deg}
\DeclareMathOperator{\param}{p}
\DeclareMathOperator{\id}{id}
\DeclareMathOperator{\cc}{cc}
\DeclareMathOperator{\size}{size}
\DeclareMathOperator{\Gaifman}{G}
\newcommand\vvec[2]{v^{(#1)}\if\relax\detokenize{#2}\relax\else_{#2}\fi}
\newcommand\wvec[2]{w^{(#1)}\if\relax\detokenize{#2}\relax\else_{#2}\fi}
\newcommand\CSP{Max-CSP\xspace}
\newcommand\CSPs{Max-CSPs\xspace}
\newcommand\HOM{\ensuremath{\mbox{Max-Hom}}\xspace}
\newcommand\bHOM[2]{\HOM(#1,#2)}
\newcommand\aCSP[1]{\ensuremath{\mbox{Max-}#1\mbox{-CSP}}}
\newcommand\aCSPs[1]{\ensuremath{\mbox{Max-}#1\mbox{-CSPs}}\xspace}
\newcommand\bCSP{\aCSP{2}}
\newcommand\bCSPs{\aCSPs{2}}
\newcommand\rCSP{\aCSP{r}}
\newcommand\rCSPs{\aCSPs{r}}
\newcommand\gCSPs{graph \CSPs}
\begin{document}
\maketitle

\begin{abstract}
We identify a sufficient condition, \emph{treewidth-pliability}, that gives a polynomial-time algorithm for an arbitrarily good approximation of the optimal value in a large class of Max-2-CSPs parameterised by the class of allowed constraint graphs (with arbitrary constraints on an unbounded alphabet). Our result applies more generally to the maximum homomorphism problem between two rational-valued structures.

The condition unifies the two main approaches for designing a polynomial-time approximation scheme. One is Baker's layering technique, which applies to sparse graphs such as planar or excluded-minor graphs. The other is based on Szemer\'edi's regularity lemma and applies to dense graphs. We extend the applicability of both techniques to new classes of Max-CSPs. On the other hand, we prove that the condition cannot be used to find solutions (as opposed to approximating the optimal value) in general.
  
Treewidth-pliability turns out to be a robust notion that can be defined in several equivalent ways, including characterisations via size, treedepth, or the Hadwiger number. We show connections to the notions of fractional-treewidth-fragility from structural graph theory, hyperfiniteness from the area of property testing, and regularity partitions from the theory of dense graph limits. These may be of independent interest. In particular we show that a monotone class of graphs is hyperfinite if and only if it is fractionally-treewidth-fragile and has bounded degree.

\end{abstract}

\section{Introduction}

The problem of finding a \emph{maximum cut} in a graph (Max-Cut) is one of the
most studied problems from Karp's original list of 21 NP-complete
problems~\cite{Karp72:reducibility}. While Max-Cut is NP-hard to solve
optimally, 
there is a trivial $0.5$-approximation algorithm~\cite{Sahni76:jacm} and
the celebrated $0.878$-approximation algorithm of Goemans and
Williamson~\cite{Goemans95:jacm}.
Papadimitriou and Yannakakis established that Max-Cut is
Max-SNP-hard~\cite{Papadimitriou91:jcss}. By the work of Arora, Lund, Motwani,
Sudan, and Szegedy~\cite{Arora1998:jacm-proof} this implies that, unless P=NP,
there is no polynomial-time approximation scheme (PTAS) for Max-Cut in general
graphs. However, non-trivial results exist for important special cases. On the
one hand, Max-Cut is solvable exactly in planar graphs, as shown by
Hadlock~\cite{Hadlock75:sicomp}, and more generally, Max-Cut admits a PTAS on
graph classes excluding a fixed minor, as shown by Demaine, Hajiaghayi, and
Kawarabayashi~\cite{Demaine05:focs}. On the other hand, Arora, Karger, and
Karpinski showed a PTAS for Max-Cut in dense graphs~\cite{Arora99:jcss},
where a graph class is dense if every graph in it contains at least a constant fraction of all
possible edges. 

Max-Cut is an example of \emph{maximum constraint satisfaction problem}
(\CSP), although a very special one (the alphabet size is $2$, in particular constant, and every constraint uses the same symmetric predicate ``$x \neq y$'' of arity 2).
Another well-known example is Max-$r$-SAT, with alphabet~size~2 and $r$-ary clauses.
Motivated by results on planar, excluded-minor, and dense graph classes,
our goal in this paper is to understand the following question: 

\begin{quote}
	\centering\emph{What structure allows for the existence of a PTAS for \CSPs?}
\end{quote}

We adopt a permissive definition of PTAS here: given a \CSP instance and an
arbitrarily small $\varepsilon>0$, the goal is to find a $(1-\varepsilon)$
multiplicative approximation of the value of an optimal solution to the instance
(but, unlike in most papers, we do not require that the algorithm should find a
solution achieving the bound).

We focus on two computational problems.
First, we study the general $\bCSP(\Gg)$ problem parameterised  by the class
of underlying constraint graphs (a.k.a. \emph{primal} or \emph{Gaifman} graphs).
The input is a graph $G \in \Gg$, an alphabet $\Sigma_v$ for each vertex,
and a valued constraint $f_{uv} \colon \Sigma_u \times \Sigma_v \to \QQ_{\geq 0}$
for each edge $uv$.
The goal is to find an assignment $h(v) \in \Sigma_v$ maximising $\sum_{uv} f_{uv}(h(u),h(v))$.
Similarly, in $\rCSP(\Gg)$ a constraint may appear on any $r$-clique in $G$.
The constraints are arbitrary (non-negative) and the alphabets are not fixed, making the problem very expressive.%
\footnote{One could attempt to generalise counting problems by maximising $\prod_{uv} f_{uv}(h(u),h(v))$ instead, or equivalently its logarithm $\sum_{uv} \log f_{uv}(h(u),h(v))$. However, the requirement $f_{uv} \geq 0$ and the approximation ratio change.
This changes the complexity: for example, approximating the number of 3-colourings requires deciding whether there is at least one in polynomial time, which is NP-hard already in 4-regular planar graphs~\cite{Dailey80}.}

Second, we consider a more general framework called the \emph{maximum
	homomorphism problem} (\HOM) of computing the maximum value of any map between
two given $\qplus$-valued structures $\structA$ and $\structB$; the value
will be denoted by $\opt{\structA,\structB}$ (see Section~\ref{sec:prelims} for precise definitions).
Intuitively, the left-hand-side structure
describes the (weighted) scopes of the constraints and the right-hand-side
structure describes the different types of constraints.
Following Grohe's notation~\cite{Grohe07:jacm}, for a class of structures
$\A$ we denote by $\HOM(\A,-)$ the restriction of \HOM to instances
$(\structA,\structB)$ with $\structA\in\A$ and $\structB$ arbitrary.
This framework captures the $\rCSP(\Gg)$ problem as a particular case: it is equivalent to $\HOM(\AOver{\Gg}{(r)},-)$,
where by $\AOver{\Gg}{(r)}$ we denote the class of all valued structures with an underlying graph in $\Gg$ and arity $r$.
Another example is the case of \emph{graph} \CSP, by which we mean a \bCSP{} that uses the same symmetric
predicate in all constraints (as in Max-Cut or Max-$q$-Cut); this case is equivalent to $\HOM(\A,-)$ where the structures in $\A$ are graphs.

The question of what structure allows to solve \CSPs \emph{exactly} in polynomial time is well understood.
A standard dynamic approach works for $\rCSP(\Gg)$ when $\Gg$ is a class of graphs of bounded treewidth.
Grohe, Schwentick, and Segoufin~\cite{GroheSS01} in fact proved the converse: if $\Gg$ has unbounded treewidth
then $\rCSP(\Gg)$, in fact already deciding the existence of a solution satisfying all constraints,
cannot be solved in polynomial time (assuming FPT$\neq$W[1]).
Grohe's theorem~\cite{Grohe07:jacm} then extended it to the more general framework:
for a class of relational (or $\{0,1\}$-valued) structures $\A$ of bounded arity,
the decision problem Hom$(\A, -)$ can be solved in polynomial time if and only if
the \emph{cores} of structures in $\A$ have bounded treewidth.
(The core is the smallest homomorphically equivalent substructure;
for example, bipartite graphs all have the single edge graph $K_2$ as a core,
so Hom$(\A, -)$ is easy when $\A$ is a class of bipartite graphs).
This was recently extended further to exact optimisation with valued structures  by
Carbonnel, Romero, and \v{Z}ivn\'y~\cite{crz22:sicomp}.

\rCSPs do not admit a PTAS in general, since already Max-Cut does not.
On the other hand, the techniques that give PTASes for Max-Cut on sparse and dense graphs
apply more generally (in fact to a variety of problems beyond Max-CSPs).
Our main contribution is a unifying condition, \emph{treewidth-pliability}, 
that captures
all known PTASes for $\rCSP(\Gg)$ and $\HOM(\A,-)$ problems.

We call a class of structures $\A$ $\tw$-pliable if it is uniformly close to structures of bounded treewidth.
More formally, for any $\eps>0$ there is a
$k=k(\eps)$ such that every structure in $\A$ has an $\eps$-close structure with treewidth at most $k$.
Here we consider two structures $\structA $ and $\structB$ to be $\eps$-close if $\opt{\structA,\structC}$ is $\eps$-close to
$\opt{\structB,\structC}$ for all $\structC$ (details in Section~\ref{sec:Dist}; this notion of distance, which we also characterise combinatorially, may be of independent interest).
While the structure of bounded treewidth is not known and cannot be efficiently computed,
we show that the Sherali-Adams LP relaxation gives a PTAS for $\HOM(\A,-)$. 

\begin{theorem}\label{thm:main1}
	If $\A$ is a $\tw$-pliable class of structures of bounded arity, then $\HOM(\A,-)$ admits a PTAS.
\end{theorem}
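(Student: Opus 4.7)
The plan is to prove the theorem algorithmically by solving the Sherali--Adams linear programming relaxation of $\HOM(\structA,\structB)$ at level $\ell=k(\eps)+1$, where $k(\eps)$ is the pliability constant; for each fixed $\eps$ this LP has $n^{\Oh(\ell)}$ variables and is solved in polynomial time. The whole argument pivots on the existence, guaranteed by $\tw$-pliability, of an $\eps$-close structure $\structA'$ with $\tw(\structA')\leq k(\eps)$; we exploit $\structA'$ only existentially, as it cannot be computed efficiently.

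First I would argue that the LP value $L$ is already within $\Oh(\eps)$ (appropriately normalised) of $\opt{\structA,\structB}$. The easy direction $L\geq\opt{\structA,\structB}$ follows from the LP being a relaxation. For the matching upper bound I would chain
\[
L \;\leq\; \text{LP}_\ell(\structA',\structB) + \Oh(\eps) \;=\; \opt{\structA',\structB} + \Oh(\eps) \;\leq\; \opt{\structA,\structB} + \Oh(\eps),
\]
where the first inequality uses that the LP value is essentially linear in the weights of $\structA$, so the closeness notion from Section~\ref{sec:Dist} propagates from integral optima to LP values; the equality is the classical tightness of level-$(k+1)$ Sherali--Adams on left-hand structures of treewidth at most $k$; and the last inequality is $\eps$-closeness applied with $\structC \defeq \structB$. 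Thus the LP value is already a PTAS-quality numerical estimate of the optimum.

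Second, I would extract a feasible integral assignment $h$ with $\val(h)\geq L-\Oh(\eps)$. The natural plan is bag-by-bag conditional sampling from the LP marginals along a width-$\leq k$ tree decomposition of $\structA'$: on bounded-treewidth structures this is exact and uses only marginals of size at most $\ell$, which are all supplied by the level-$\ell$ LP. The main obstacle is that the relevant tree decomposition belongs to the unknown $\structA'$. I would circumvent this by phrasing the rounding purely as a random process over the LP marginals on $\structA$, derandomised by conditional expectations on $\structA$ itself; the existence of any width-$\leq k$ decomposition of any $\eps$-close structure suffices to bound the error, and pliability guarantees at least one. The output assignment is feasible for $\structA$, its expected value differs from $L$ by at most $\Oh(\eps)$, and derandomisation then yields a deterministic polynomial-time PTAS.
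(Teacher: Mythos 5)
Your first step — computing the level-$\ell$ Sherali--Adams LP and arguing that its value is within a $(1\pm\Oh(\eps))$ factor of $\opt{\structA,\structB}$ by chaining through the unknown bounded-treewidth structure $\structA'$ — is exactly the paper's approach. The only caveat is that the justification ``LP value is essentially linear in the weights of $\structA$'' is not quite the mechanism the paper uses: $\dopt(\structA,\structA')\leq\eps$ yields an \emph{overcast} from $\structA'$ to $e^{-\eps}\structA$ via Proposition~\ref{prop:overcast}, and the paper's key point (Proposition~\ref{prop:overcast-sa}) is that overcasts transfer to the level-$k$ LP, i.e.\ $\structA'\overcasts_k e^{-\eps}\structA$. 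That is a non-trivial structural fact about the hierarchy, not a linearity observation, and the resulting error is multiplicative ($e^{\pm\eps}$), not additive $+\Oh(\eps)$.

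Your second step, however, is a genuine error. You propose to round the LP to an actual assignment $h$ with $\val(h)\geq L-\Oh(\eps)$ by conditional sampling along a width-$\leq k$ tree decomposition. This is impossible in general: the paper proves in Example~\ref{ex:noFind} that there is a $\tw$-pliable class $\A$ (a weighted variant of triangle-glued graphs homomorphically equivalent to $K_3$) for which finding a $(1-\eps)$-optimal map is NP-hard, by reduction from approximate 3-colouring. The theorem therefore only asserts a PTAS for \emph{estimating} the optimum value, and the paper is explicit about this limitation. Concretely, your rounding idea breaks for two reasons. First, $\structA'$ is a structure at small opt-distance from $\structA$, but opt-distance compares $\opt{\cdot,\structC}$ over all $\structC$; it imposes no relationship between the domains of $\structA$ and $\structA'$. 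The tree decomposition of $\structA'$ lives on $\structA'$'s vertex set, while the LP marginals are indexed by subsets of $\structA$'s vertex set, so there is no meaningful way to ``condition bag by bag'' on $\structA$. Second, even if there were such a correspondence, the conditional-expectations derandomisation you invoke requires actually knowing the decomposition order to process, and no polynomial-time procedure can reconstruct it. The correct conclusion of the proof is to output the LP value itself, which is computable in $(|A|\cdot|B|)^{\Oh(k(\eps))}$ time; no rounding step is needed or possible.
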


We emphasise the generality of Theorem~\ref{thm:main1}.\footnote{However, the
generality comes at a cost, as detailed in Section~\ref{subsec:robust}: while an
approximate optimum can be found, an approximate solution cannot be constructed
unless P=NP.}
Firstly, the computational problem (\HOM) captures many fundamental problems,
including graph homomorphisms~\cite{Hell:graphs}, Max-Cut, Max-DiCut, Max-SAT,
\CSPs, and query related problems
coming from database theory~\cite{Grohe07:jacm}.
Secondly, the
notion of pliability captures many previously discovered cases of structures
that admit a PTAS. In particular, we now discuss how Theorem~\ref{thm:main1}
extends the applicability of the two main approaches for obtaining PTASes. 

\subsection{Sparse structures: Baker's technique and fragility}
Perhaps the best known technique for solving problems on planar graphs is Lipton and Tarjan's planar separator theorem~\cite{LiptonTarjan} and the divide~\&~conquer approach it enables~\cite{Lipton80:sicomp}.
It can be used to give a PTAS for \CSPs with fixed alphabet size 
on planar graphs (this extends to excluded-minor graphs~\cite{AlonST90} and more~\cite{DvorakN16}) of bounded degree.

This approach was superseded by Baker's technique~\cite{Baker94:jacm}, which provides better running times and is easily applied to general $\rCSPs$ on arbitrary planar graphs (see e.g.~\cite{Khanna96:stoc}).
The idea is very elegant: we partition a planar graph into Breadth-First-Search layers, remove every $\ell$-th layer, and show that the remaining components of $\ell-1$ consecutive layers have bounded treewidth (and so can be solved exactly).
By trying different starting layers we can ensure that the removed layers intersect an unknown optimal solution at most $\Oh(\frac{1}{\ell})$ times, giving a $1\pm\Oh(\frac{1}{\ell})$ approximation.

From planar graphs this was extended to graphs of bounded genus by Eppstein~\cite{Eppstein0d:algorithmica-diameter} and later to all graph
classes excluding a fixed minor by Grohe~\cite{Grohe03} and Demaine et al.~\cite{Demaine05:focs}.
The structural property needed for this approach, originally proved for excluded-minor graphs by DeVos, Ding, Oporowski, Sanders, Reed, Seymour, and Vertigan~\cite{DeVos+04}, is $\tw$-\emph{fragility}: they can be partitioned into any constant number of parts such that removing any one part leaves a graph of bounded treewidth.
As shown by Hunt, Marathe, and Stearns~\cite{Marathe97,Marathe98} (see also~\cite{Hunt98}) as well as Grigoriev and Bodlaender~\cite{Grigoriev2007}, the same property applies to some geometrically-defined graph classes that do not exclude any minor. One example is intersection graphs of unit disks whose centers are at least some constant apart (capturing some applications of the closely related shifting technique of Hochbaum and Maass~\cite{HochbaumM85} for geometric packing and covering problems).
Another example is 1-planar graphs, or more generally graphs drawn on a fixed surface with a bounded number of intersections per edge.

An important generalisation, \emph{fractional-$\tw$-fragility}, was introduced by Dvo\v{r}\'ak~\cite{Dvorak16}:
it suffices that the parts whose removal results in a graph of bounded treewidth are nearly-disjoint (Definition~\ref{def:fragile}).
This applies to $d$-dimensional variants of the geometric classes mentioned above (for any constant~$d$), in particular to $d$-dimensional grids, which are not $\tw$-fragile~\cite{BergerDN18}; this also includes classes of polynomial growth~\cite{KrauthgamerL07,Dvorak2021}.
Another large family of fractionally-$\tw$-fragile classes are classes of bounded degree with strongly sublinear separators~\cite{Dvorak16} (equivalently, bounded degree and polynomial expansion~\cite{DvorakN16}).
For such concrete examples of fragile classes, known proofs show that the nearly-disjoint parts can be computed efficiently.
A PTAS can then easily be designed from the definition~\cite{Dvorak16}.

We show that the assumption about efficient construction is not needed.
We do this by proving that if $\Gg$ is \emph{any} fractionally-$\tw$-fragile class of graphs
(intuitively, any class where a Baker-like technique is known to work),
then the class $\AOver{\Gg}{(r)}$ of all possible structures of bounded arity
$r$ and with Gaifman graph in $\Gg$ is $\tw$-pliable.

\begin{theorem}\label{thm:fragileIsNice}
	Let $\Gg$ be a fractionally-$\tw$-fragile class of graphs.
Then $\AOver{\Gg}{(r)}$  is $\tw$-pliable for every~$r$.
	Consequently, $\rCSP(\Gg)$ admits a PTAS.
\end{theorem}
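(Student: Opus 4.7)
By Theorem~\ref{thm:main1} it suffices to prove the first statement, namely that $\AOver{\Gg}{(r)}$ is $\tw$-pliable: every $\structA \in \AOver{\Gg}{(r)}$ admits, for every $\eps>0$, an $\eps$-close structure of treewidth at most some $k=k(\eps,r)$. Note that Theorem~\ref{thm:main1} needs only the existence of such an approximating structure, not an efficient algorithm to find it, so we do not have to assume that fractional-$\tw$-fragility comes with an algorithmic witness.

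Fix $\eps>0$ and $\structA$ with Gaifman graph $G \in \Gg$; let $W$ denote the total weight of the tuples of $\structA$. Applying fractional-$\tw$-fragility to $G$ with parameter $\eps/r$ yields vertex sets $X_1,\ldots,X_t \subseteq V(G)$ such that $\tw(G - X_i) \leq k(\eps/r)$ for every $i$ and each vertex of $G$ lies in at most $\tfrac{\eps}{r}\, t$ of the sets $X_i$. Each tuple of $\structA$ involves at most $r$ vertices, so by a union bound it meets at most $\eps\, t$ of the sets. Summing the weighted indicator ``tuple meets $X_i$'' over all tuples and all indices gives a quantity at most $\eps\, t\, W$, so by averaging there exists an index $i^*$ for which the total weight of tuples of $\structA$ meeting $X_{i^*}$ is at most $\eps W$. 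Let $\structA'$ be obtained from $\structA$ by deleting exactly these tuples, keeping the universe unchanged.

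The Gaifman graph of $\structA'$ is a subgraph of $G - X_{i^*}$ together with isolated vertices, so $\tw(\structA') \leq k(\eps/r)$. For closeness, observe that for any target $\structC$ and any assignment $h$, the objectives on $\structA$ and on $\structA'$ differ only by the (nonnegative) contribution of the deleted tuples, which is bounded above by $\eps W$. Hence $|\opt{\structA,\structC} - \opt{\structA',\structC}| \leq \eps W$ uniformly in $\structC$. I expect the main obstacle to be the final bookkeeping: matching this additive-in-$W$ bound against the precise quantitative distance $\dopt$ introduced in Section~\ref{sec:Dist}, which may be normalised differently and might force replacing $\eps$ by $\eps/C$ for some constant $C=C(r)$ (or, if the distance is multiplicative, restricting attention to $\structC$ for which $\opt{\structA,\structC}$ is comparable to $W$ via a standard amplification/padding trick). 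The PTAS in the ``consequently'' part then follows by feeding $\AOver{\Gg}{(r)}$ into Theorem~\ref{thm:main1}.
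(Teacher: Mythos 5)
Your proposal has a genuine gap that the hedging in the last paragraph gestures towards but does not resolve. You pick a \emph{single} index $i^*$ by averaging and delete from $\structA$ the tuples meeting $X_{i^*}$. This yields an \emph{additive} bound: for every $\structC$, $\opt{\structA,\structC}-\opt{\structA',\structC}\leq \eps W\cdot\|\structC\|_\infty$. But $\dopt$ is multiplicative (a supremum of $|\ln\opt{\structA,\structC}-\ln\opt{\structB,\structC}|$), and the averaging that selects $i^*$ was done before fixing $\structC$, so there is nothing stopping a specific $\structC$ from concentrating all of its weight on exactly the tuples you deleted. Concretely, the construction in the paper's Lemma~\ref{lem:fragileIffNiceGen} takes a $\sigma$-structure in which each edge has its own symbol; if the tuple $(f_e,\tuple{x})$ is deleted then $\opt{\structA',\structC}=0$ for a $\structC$ supported only on $f_e$, while $\opt{\structA,\structC}>0$, giving $\dopt(\structA,\structA')=\infty$ regardless of how small $\eps$ is. The ``amplification/padding'' and ``divide by $C(r)$'' fallbacks you mention do not repair this, because the bad $\structC$ exists for any nonempty set of deleted tuples.

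The paper avoids this by \emph{not} collapsing the distribution to a single modulator. For an $\eps'$-thin distribution $\pi$ of modulators, it sets $\structB := \biguplus_{X\in\supp(\pi)} \pi(X)\,(\structA-X)$ (a disjoint union of rescaled copies), which still has $\tw(\structB)\leq k$ since treewidth is a maximum over components. Now \emph{every} tuple of $\structA$ is retained, with its weight spread across the components $\structB_X$ for those $X$ that do not meet it; the single deterministic map gluing all components back onto $\structA$ is an overcast from $\structB$ to $(1-r\eps')\structA$ because by a union bound each tuple loses at most an $r\eps'$-fraction of its weight. Together with the trivial overcast $\structA\overcasts\structB$ this gives $\dopt(\structA,\structB)\leq\eps$. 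So the fix is to carry the whole distribution $\pi$ into the approximating structure via a weighted disjoint union, rather than choosing one representative $X_{i^*}$.
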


This captures all graph classes $\Gg$ where a PTAS for $\rCSP(\Gg)$ is known.

\subsection{Dense structures: the regularity lemma}
It is perhaps more surprising that dense structures admit a PTAS.
Here a class is \emph{dense} if a constant factor of all possible constraints is present in every structure in the class, e.g. graphs with $\Omega(n^2)$ edges.
Arora, Karger, and Karpinski~\cite{Arora99:jcss} showed that \rCSPs admit a PTAS in the dense regime if the
alphabet size is constant (in fact Boolean); de la Vega~\cite{FernandezdelaVega96:rsa} independently gave a PTAS for dense Max-Cut. 
Frieze and Kannan~\cite{Frieze96:focs} proved that these results are essentially possible because of Szemer\'edi's regularity lemma~\cite{Szemeredi78:reglar}: intuitively, every graph can be approximated to within an additive $\pm\eps n^2$ error by a random graph (with a constant number of parts, depending on $\eps$ only, so that the edges between two parts form a uniformly random graph of some density).
For dense graphs, the additive error translates to a relative error, giving a PTAS.
They also showed a variant of the regularity lemma that is still applicable to \rCSPs with constant alphabet size, yet avoids its infamous tower-type dependency on $\eps$.

Goldreich, Goldwasser, and Ron~\cite{GoldreichGR98} connected these results to the area of \emph{property testing}, spawning an entirely new direction of research.
They gave constant-time algorithms estimating the optimum value of some \gCSPs.
In fact, Alon, de la Vega, Kannan, and Karpinski~\cite{Alon03:jcss} (see also Andersson and Engebretsen~\cite{AnderssonE02}) showed that \rCSPs with a fixed alphabet can be approximated with accuracy $\pm\eps n^r$ by sampling a constant number of vertices (polynomial in $\frac{1}{\eps}$) and finding the optimum on the resulting (constant-size) induced substructure. 

None of these results apply to any $\rCSP(\Gg)$ and $\HOM(\A,-)$ problem, that is, to unbounded alphabets.
We give the first such example: undirected graphs with $\Omega(n^2)$ edges.
\begin{theorem}\label{thm:denseIsNice}
	Let $c>0$ and let $\A$ be a class of graphs with at least $cn^2$ edges.
	Then $\A$ is $\tw$-pliable.
	Consequently, $\HOM(\A,-)$ admits a PTAS.
\end{theorem}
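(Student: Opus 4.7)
The plan is to establish $\tw$-pliability for dense graph classes via the Frieze--Kannan weak regularity lemma, and then invoke Theorem~\ref{thm:main1} for the PTAS.

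Given $G\in\A$ on $n$ vertices with at least $cn^2$ edges and $\eps>0$, I first apply the weak regularity lemma with an appropriately small parameter (polynomial in $\eps$ and $c$) to obtain a partition $V_1,\ldots,V_p$ of $V(G)$ into $p=p(\eps,c)$ parts such that the signed difference between $G$ and the ``reduced blow-up'' $W^*$ (where each pair $(u,v)\in V_i\times V_j$ carries weight $d_{ij}$, the density of edges between $V_i$ and $V_j$) has cut norm $\Oh(\eps|E(G)|)$. The approximating structure $G'$ is then defined as the weighted graph on $[p]$ with edge weight $d_{ij}|V_i||V_j|$ for $i\neq j$ and self-loop weight $d_{ii}\binom{|V_i|}{2}$ at each $i$. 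Since $|V(G')|=p$, we have $\tw(G')\leq p$, independent of $n$.

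The task then reduces to showing $\dopt(G,G')\leq\eps$ using the combinatorial characterization of $\dopt$ from Section~\ref{sec:Dist}. The easy direction, $\opt{G,\structC}\geq \opt{G',\structC}-\Oh(\eps n^2\max|f_\structC|)$, follows by extending any homomorphism $h'\colon[p]\to V(\structC)$ to a constant-on-parts homomorphism of $G$ and invoking the weak regularity cut-norm guarantee. For the reverse direction, I approximate $\opt{G,\structC}$ by the bilinear form
\[
M_{\mathrm{frac}}(\structC) = \max_{\mu}\sum_{ij} d_{ij}|V_i||V_j|\sum_{a,b}f_\structC(a,b)\mu_i(a)\mu_j(b)
\]
over tuples of probability distributions $\mu_i$ on $V(\structC)$, and then randomly round $h(i)\sim\mu_i$ independently to obtain an integer homomorphism of $G'$ whose expected value equals $M_{\mathrm{frac}}(\structC)$ on off-diagonal $(i,j)$-pairs and differs from it on self-loops by at most $\sum_i\binom{|V_i|}{2}\max|f_\structC|=\Oh(n^2/p)\cdot\max|f_\structC|$. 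Choosing $p=\Theta(1/\eps)$ and normalizing by $|E(G)|=\Omega(n^2)$ yields $\dopt(G,G')=\Oh(\eps)$.

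The main obstacle is controlling the cut-norm approximation error when the target $\structC$ has unbounded alphabet: the naive bound via color-class decomposition gains a factor of $|V(\structC)|^2$ that is prohibitive in the $\HOM(\A,-)$ setting. This is circumvented by mediating the comparison between $\opt{G,\structC}$ and $\opt{G',\structC}$ through the bilinear form $M_{\mathrm{frac}}(\structC)$, which relates to both quantities without explicit dependence on $|V(\structC)|$, as will follow from the combinatorial characterization of $\dopt$ developed in Section~\ref{sec:Dist}. Once $\tw$-pliability is established, Theorem~\ref{thm:main1} yields the PTAS for $\HOM(\A,-)$.
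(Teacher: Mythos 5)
The key gap is in the ``reverse direction'': your claim that $\opt{G,\structC}$ is approximated by $M_{\mathrm{frac}}(\structC)$ to within a relative error independent of $|V(\structC)|$ is unsubstantiated, and in fact the Frieze--Kannan cut-norm guarantee cannot deliver it. Plugging the empirical distributions $\mu_i(a)\defeq |h^{-1}(a)\cap V_i|/|V_i|$ of an optimal $h\colon V(G)\to V(\structC)$ into $M_{\mathrm{frac}}$ yields a gap of
$\sum_{ij}\sum_{u\in V_i,\,v\in V_j}\bigl(w_G(u,v)-d_{ij}\bigr)f_\structC(h(u),h(v))$.
To control this via cut norm you must decompose $f_\structC$ by colour pairs $(a,b)$ and apply the cut-norm bound once per pair; each application costs $\pm\eps' n^2$ (not $\pm\eps' |V_i||V_j|$ --- this is precisely the weakness of Frieze--Kannan regularity), and the terms cannot be combined because $f_\structC(h(u),h(v))$ is not a product of a function of $u$ and a function of $v$. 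You correctly identified this $|V(\structC)|^2$ obstruction, but the proposed fix --- mediating through $M_{\mathrm{frac}}$ --- merely relocates the problem to the step $\opt{G,\structC}\lesssim M_{\mathrm{frac}}(\structC)$ without resolving it. The remark ``as will follow from the combinatorial characterization of $\dopt$'' is doing all the work and has no argument behind it: the overcast characterisation does not by itself translate a cut-norm statement into a uniform-over-$\structC$ bound.

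What the paper actually does is quite different: it never compares $\opt{G,\structC}$ to $\opt{G',\structC}$ directly. Instead it constructs an explicit overcast from $G_{/\Pp}$ to $e^{-\eps}G$ (a distribution over maps $[k]\to V(G)$ that covers each edge of $G$ with sufficient weight), which references only the two structures $G$ and $G_{/\Pp}$ and is therefore automatically $\structC$-independent. Building this overcast --- mapping $G_{/\Pp}$ to a random copy of the heavy-edge graph $F$ in $G$ --- relies on the Counting and Extension Lemmas (Lemmas~\ref{lem:counting} and~\ref{lem:extension}), which genuinely need $\eps$-homogeneity of each pair $(V_i,V_j)$, i.e.\ the strong Szemer\'edi regularity lemma. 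Weak regularity does not supply an extension lemma of the required precision. This is consistent with the authors' explicit remark that their use of strong regularity causes tower-type dependence on $\eps$: had the weak lemma sufficed, they would presumably have used it. (Also note: your ``easy direction'' is in fact exact, since $e_G(V_i,V_j)=d_{ij}|V_i||V_j|$ by definition of the densities, so no cut-norm argument is needed there; the trivial deterministic overcast $V_i\mapsto i$ already gives $G\overcasts G_{/\Pp}$.)

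To repair your argument you would need either to replace the cut-norm comparison by an explicit overcast construction robust to weak regularity (not known to exist), or to use the strong regularity lemma and the counting/extension machinery as the paper does.
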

\noindent
(Note here the graphs in $\A$ are input structures, not just Gaifman graphs of input structures).
We also show that this cannot be extended to general CSPs:
already for the class of tournaments---that is, orientations of complete graphs---
a PTAS is impossible, (cf. Corollary~\ref{cor:tournamentsAreHard} in
Section~\ref{sec:hardness}) and indeed, this class is not $\tw$-pliable (cf.
Remark~\ref{rem:tournaments}).

\subsection{Robustness of pliability}\label{subsec:robust} 

The notion of treewidth-pliability not only unifies the different existing
algorithmic techniques but it is also quite robust: treewidth-pliability captures a
valued analogue of ``homomorphic equivalence'' (e.g. bipartite graphs, or
3-colourable graphs where each edge is contained in exactly one triangle, cf.
Examples~\ref{ex:bipartite} and~\ref{ex:triangle} in Section~\ref{subsec:pliable}) as
well as small edits: if $\A$ is a pliable class of graphs, say, then the class
of graphs obtained by adding or removing $o(m)$ edges from $m$-edge graphs in
$\A$ is again pliable (Lemma~\ref{lem:closeToPliable} in Section~\ref{subsec:pliable}).
However, this generality comes at a price. First, we show that even for fixed alphabet size,
although the approximate optimum value can be found, an approximate solution cannot be constructed
(unless $\text{P}=\text{NP}$, cf. Example~\ref{ex:noFind} in Section~\ref{subsec:pliable}).
Second, unlike in some of the previous results for more restricted classes, our result does not give an EPTAS (i.e., with the degree of the
polynomial time bound independent of $\eps$) for fixed alphabet size (cf. Question~\ref{q:noEPTAS}).
Finally, the use of strong versions of
the regularity lemma yields tower-type dependencies on the approximation ratio $\eps$ in the dense case.

\medskip 

In the definition of treewidth-pliability we approximate structures by
comparing their $\opt$ values and we ask them to be close to structures where
the problem can be solved exactly. This is a non-constructive and very general 
definition. In fact, it is not inconceivable that this captures all tractable cases, i.e., that $\HOM(\A,-)$ has a PTAS if and only if $\A$ is $\tw$-pliable.
Nevertheless, we show a variety of equivalent combinatorial definitions,
which allow us to place a fairly tight bound on what pliability is, structurally.

For classes of the form $\AOver{\Gg}{(r)}$, that is, if we only restrict the underlying Gaifman graphs, we show that pliability collapses to fractional fragility.
In this sense we understand the ``sparse'' setting exactly.

\begin{lemma}\label{lem:fragileIffNice}
	Let $\Gg$ be a class of graphs. The following are equivalent, for any $r \geq 2$:
	\begin{itemize}
		\item $\Gg$ is fractionally-$\tw$-fragile;
		\item $\AOver{\Gg}{(r)}$ is $\tw$-pliable.
	\end{itemize}
\end{lemma}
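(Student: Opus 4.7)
The easy direction, fractionally-$\tw$-fragile $\Rightarrow$ $\AOver{\Gg}{(r)}$ is $\tw$-pliable, is the content of Theorem~\ref{thm:fragileIsNice}; I would only recall it schematically. Given $\structA \in \AOver{\Gg}{(r)}$ with Gaifman graph $G \in \Gg$ and a fractional fragility distribution on $V(G)$ at scale $\eps' = \eps/\Oh(r)$, sample $X$ from that distribution and take $\structB$ to be the substructure of $\structA$ obtained by discarding every tuple incident to $X$. Then $\structB$ has treewidth at most $\tw(G - X) + r - 1 \leq k(\eps') + r - 1$, and the expected total weight of discarded tuples is at most an $\eps'$-fraction, so an averaging argument produces a single $\structB$ that is $\eps$-close to $\structA$ in the $\opt$-distance.

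For the converse, the plan is: fix $G \in \Gg$ and $\eps > 0$, view $G$ itself (with unit weight per edge) as a valued structure $\structA_G \in \AOver{\Gg}{(r)}$, apply pliability to obtain a structure $\structB$ of treewidth $\leq k = k(\eps/2)$ that is $\eps/2$-close to $\structA_G$, and extract the required fragility distribution from $\structB$. The key tool is the combinatorial characterisation of the $\opt$-distance developed in Section~\ref{sec:Dist}: it should allow one to replace the universal quantifier over target structures $\structC$ in the definition of $\eps$-closeness by a concrete \emph{fractional covering} of the tuples of $\structA_G$ by homomorphic images of sub-instances of $\structB$ in $\structA_G$. Every such sub-instance still has treewidth $\leq k$, so this produces a distribution over subgraphs $H \subseteq G$ with $\tw(H) \leq k$ such that every edge of $G$ is covered with probability at least $1 - \eps/2$; a short averaging argument (moving from edge-covering to vertex-covering) then yields a distribution on vertex subsets $X$ satisfying $\tw(G - X) \leq k$ and $\Pr[v \in X] \leq \eps$, which is precisely the definition of fractional-$\tw$-fragility.

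The main obstacle is the middle step: converting the semantic condition ``$\opt{\structA_G,\structC} \approx \opt{\structB,\structC}$ for all $\structC$'' into a structural fractional-covering statement relating the Gaifman graph of $\structB$ back to $G$. The distance characterisation of Section~\ref{sec:Dist} is tailor-made for this translation; once it is in hand, the LP-style manipulations to obtain the fragility distribution are routine, and an LP-duality argument can be invoked if one prefers a cleaner existence proof to a direct construction. Finally, the reduction to arbitrary arity $r \geq 2$ is immediate: a graph is a structure of arity $2$ and hence also of arity $r$, so $\tw$-pliability of $\AOver{\Gg}{(r)}$ implies $\tw$-pliability of $\AOver{\Gg}{(2)}$ (on which the argument above is carried out), and conversely the forward direction upgrades fragility of $\Gg$ to pliability of $\AOver{\Gg}{(r)}$ for any $r$.
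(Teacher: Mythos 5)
Your forward direction is essentially correct in spirit, though the averaging step as stated is subtly wrong: sampling a single $X$ from the fragility distribution and taking $\structB = \structA - X$ does \emph{not} give $\dopt(\structA,\structB)\leq\eps$, because $\dopt$ is a supremum over all target structures $\structC$, and for any fixed $X$ some $\structC$ could concentrate its weight on the tuples you removed. The paper avoids this by taking $\structB$ to be the \emph{disjoint union} of all rescaled copies $\pi(X)\cdot(\structA-X)$; then each tuple of $\structA$ survives in a $(1-\eps' r)$ fraction of the copies by the $\eps'$-thinness of $\pi$, and that is what makes the overcast from $\structB$ back to $\structA$ work for every $\structC$ simultaneously.

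The converse direction, however, has a fatal gap. You propose to instantiate pliability only on the structure $\structA_G$ that encodes $G$ via a \emph{single} binary symbol (unit edge weights), and then to extract a fragility distribution from the overcasts between $\structA_G$ and a bounded-treewidth $\structB$. This cannot work, because a single-symbol encoding is not discriminating enough. Concretely, take $\Gg = \{K_n : n\in\NN\}$, which is certainly not fractionally-$\tw$-fragile. Yet by Example~\ref{ex:clique}, for every $\eps>0$ and large $n,k\geq 2/\eps$ we have $\dopt(K_n, \lambda K_k)\leq\eps$ with $\lambda = \binom{n}{2}/\binom{k}{2}$; that is, $\structA_{K_n}$ with a single symbol \emph{is} $\eps$-close to a fixed-size (hence bounded-treewidth) structure. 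Composing the two overcasts gives a distribution over self-maps of $V(K_n)$ factoring through $k$ vertices and preserving most edge weight — but these maps collapse almost all vertices, and their images are just $k$-cliques. What you get is a fractional \emph{cover} of $E(G)$ by bounded-treewidth subgraphs, which is a very different (and much weaker) condition than a fractional edge-\emph{modulator}: cliques are fractionally coverable by small cliques yet are not fragile. The ``short averaging argument'' you invoke to pass from the cover to a thin modulator distribution is precisely where the approach breaks.

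The paper's proof of this direction (Lemma~\ref{lem:fragileIffNiceGen}) repairs exactly this by choosing a much more discriminating structure: for $G\in\Gg$ it builds $\structA$ over a signature that has a \emph{distinct} binary symbol $f_e$ for each edge $e\in E(G)$, with $f_e^{\structA}(u,v)=1$ iff $\{u,v\}=e$. Since the whole class $\AOver{\Gg}{(r)}$ is assumed pliable, this $\structA$ too has a bounded-treewidth $\structB$ within opt-distance $\eps/2$, hence overcasts $\omega\colon\structA\overcasts e^{-\eps/2}\structB$ and $\omega'\colon\structB\overcasts e^{-\eps/2}\structA$. Because $f_e^{\structA}$ is supported on $e$ alone, the only way to cover the tuple $(f_e,e)$ in $\structA$ is by mapping it to itself; this forces the composition $g'\circ g$ (for $g\in\supp(\omega)$, $g'\in\supp(\omega')$) to be the identity on all but an $\eps$-fraction of the edges, and on those edges $g$ embeds $G$ injectively into $\Gaifman(\structB)$. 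One therefore removes the set $F_{gg'}$ of edges where $g'(g(e))\neq e$ or $f_e^{\structB}(g(e))=0$; the remaining graph $G-F_{gg'}$ is a subgraph of $\Gaifman(\structB)$, hence has $\tw\leq k$, and $\Pr[e\in F_{gg'}]\leq\eps$. This yields an $\eps$-thin fractional $(\tw\leq k)$-edge-modulator, which by Lemma~\ref{lem:fragile} is equivalent to the vertex version. The per-edge symbol is the missing ingredient: without it, nothing forces the overcasts to be near-injective.
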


In general, we can replace treewidth with other parameters of the Gaifman graph:
size (number of vertices), treedepth, denoted by $\td$, Hadwiger number (maximum clique minor size), or maximum connected component size, which we denote by $\cc$.

\begin{theorem}\label{thm:niceWrt}
	Let $\A$ be any class of structures. The following are equivalent:
	\begin{itemize}\vspace*{-1ex}
		\begin{multicols}{3}
		\item $\A$ is $\td$-pliable;	
		\item $\A$ is $\tw$-pliable;
		\item $\A$ is $\Hadwiger$-pliable.
		\end{multicols}
	\end{itemize}\vspace*{-2ex}
	If structures in $\A$ have bounded signatures,
	then the following are equivalent to the above as well:\looseness=-1
	\begin{itemize}\vspace*{-1ex}
		\begin{multicols}{3}
		\item $\A$ is $\size$-pliable;		
		\item $\A$ is $\cc$-pliable.
		\end{multicols}
	\end{itemize}
\end{theorem}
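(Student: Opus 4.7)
The plan is to build up a cycle of implications using the standard parameter inequalities
\[ \Hadwiger(G) - 1 \;\leq\; \tw(G) \;\leq\; \td(G) - 1 \;\leq\; \cc(G) - 1 \;\leq\; \size(G) - 1. \]
Since pliability with respect to a larger parameter $p$ trivially implies pliability with respect to each smaller parameter $p'$ (one can reuse the same close witness), the chain $\size\text{-pliable} \Rightarrow \cc\text{-pliable} \Rightarrow \td\text{-pliable} \Rightarrow \tw\text{-pliable} \Rightarrow \Hadwiger\text{-pliable}$ comes for free. What remains are the two unconditional reversals $\Hadwiger\text{-pliable} \Rightarrow \tw\text{-pliable}$ and $\tw\text{-pliable} \Rightarrow \td\text{-pliable}$, plus the two further reversals $\td\text{-pliable} \Rightarrow \cc\text{-pliable}$ and $\cc\text{-pliable} \Rightarrow \size\text{-pliable}$ valid under the bounded-signature assumption.

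For $\Hadwiger \Rightarrow \tw$, given $\structA \in \A$ and $\eps > 0$, Hadwiger-pliability produces an $\eps/2$-close structure $\structA'$ whose Gaifman graph excludes $K_{h+1}$ as a minor, for some $h = h(\eps)$. This implicitly forces the arity of $\structA'$ to be at most $h$, since an arity-$r$ constraint creates a $K_r$ in the Gaifman graph. By the structure theorem of DeVos et al.~\cite{DeVos+04}, $K_{h+1}$-minor-free graphs form a $\tw$-fragile (and hence fractionally-$\tw$-fragile) class. Applying Lemma~\ref{lem:fragileIffNice} to this class of Gaifman graphs yields that the corresponding class of bounded-arity structures is $\tw$-pliable, so $\structA'$ admits an $\eps/2$-close structure of bounded treewidth. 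Composing the two approximations via the pseudo-triangle inequality for the distance underlying pliability gives the required $\eps$-close bounded-treewidth witness for $\structA$.

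For $\tw \Rightarrow \td$, I would mimic Baker's layering directly on a tree decomposition. From $\tw$-pliability, take an $\eps/2$-close $\structA'$ of treewidth $k$ and fix a rooted width-$k$ tree decomposition. Assign each vertex the depth $d(v)$ of the highest bag containing it, and partition vertices by $d(v) \bmod \ell$ with $\ell = \lceil 2r/\eps \rceil$. An averaging argument (each arity-$r$ constraint hits at most $r$ residue classes) produces a layer $i$ whose incident-constraint weight is at most $\eps/2$ of the total. Deleting that layer leaves components projecting onto subtrees of the decomposition of height less than $\ell$, and each such subtree of width-$k$ bags witnesses treedepth at most $\ell(k+1)$ by ordering its bag vertices in depth-first order as a chain of descendants.

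The two bounded-signature implications are the delicate part. For $\cc \Rightarrow \size$, bounded component size and bounded signature leave only finitely many unweighted isomorphism types of components; after rounding all constraint weights to a common $\delta$-grid (at a controlled multiplicative cost in the optimum), only finitely many weighted types survive, and all copies of a given weighted type can then be collapsed into a single copy with summed constraint weights---an optimum-preserving operation by linearity of $\opt{\cdot,\structC}$ over disjoint unions. For $\td \Rightarrow \cc$ the plan is to fracture components along the treedepth forest, recursively peeling a small-weight depth class at each level via the same averaging trick. The main technical obstacle is this last step: unlike the $\tw \Rightarrow \td$ passage, where layering immediately bounds the height and hence the treedepth of each remaining piece, fracturing a treedepth-$d$ structure into bounded-size pieces requires taming not only forest depth but also branching, while simultaneously controlling the weight loss---this is precisely where bounded signature is essential, allowing one to group vertices by identical ancestor profile and substitute large groups with small weighted representatives.
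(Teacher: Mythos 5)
Your high-level scaffolding (the chain of trivial downward implications, then closing the loop upward via fragility results and transitivity) is sound and matches the paper's strategy, and the $\Hadwiger\Rightarrow\tw$ step via DeVos et al., Lemma~\ref{lem:fragileIffNice}, bounded arity, and the triangle inequality for $\dopt$ is fine. But two of your nontrivial steps have genuine gaps.

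The most serious gap is in $\cc\Rightarrow\size$. You propose rounding all constraint weights to a ``common $\delta$-grid'' to reduce to finitely many weighted isomorphism types. This is exactly the naive approach the paper argues cannot work: if the grid is additive, the multiplicative (hence $\dopt$) error on small values is unbounded; if it is multiplicative (powers of $1+\delta$), the number of grid points needed is $\log_{1+\delta}$ of the max/min value ratio, which is unbounded, so you still end up with unboundedly many weighted types. The paper gives an explicit counterexample ($2^i$ components of value $2^{n-i}$) where neither zeroing out small values nor grid-rounding gives a constant-size structure at small opt-distance. The actual fix (Lemma~\ref{lem:vectors}) is to aim for finitely many components \emph{up to rescaling}, not up to equality of values, and proving that this can be achieved requires an inductive argument over coordinate pairs that analyzes ratios of coordinates and is the real technical content of this direction. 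Your proposal skips this entirely. (Merging identical components by rescaling, which you do correctly invoke via linearity, is Observation~\ref{obs:merge} and is only the easy endgame.)

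The $\td\Rightarrow\cc$ step you acknowledge as incomplete, and your sketch (``group vertices by identical ancestor profile and substitute large groups with small weighted representatives'') points in a reasonable direction but does not constitute a proof; the paper instead goes $\td\Rightarrow\size$ directly by induction on treedepth with a pack/unpack trick, peeling one vertex $v$ and encoding the positions where $v$ occurred into an enlarged signature of bounded size, reducing to the $\cc\Rightarrow\size$ case as the disconnected base step.

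Finally, a smaller issue: your re-derivation of $\tw\Rightarrow\td$ by Baker layering on the tree decomposition is not correct as stated. Defining $d(v)$ as the depth of the topmost bag containing $v$ and deleting $\{v: d(v)\equiv i\pmod\ell\}$ does \emph{not} leave components confined to height-$<\ell$ subtrees: a vertex $u$ with small $d(u)$ that is not removed can persist in bags at arbitrary depths, connecting vertices across the depth cuts, so the restricted decomposition still has an unbounded-height tree. The needed statement (treewidth-bounded classes are fractionally-$\td$-fragile) is cited in the paper from Dvo\v{r}\'ak--Sereni as part of Theorem~\ref{thm:fragileWrt} and requires a more careful construction; you would be better served by citing it rather than re-proving it.
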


Classes of structures with bounded signatures (see Section~\ref{sec:prelims} for precise definitions) correspond to \CSP instances with a bounded number of constraint types; e.g. maximum graph homomorphism.
For example, any class of dense graphs as in Theorem~\ref{thm:denseIsNice} is in fact $\size$-pliable.
An example of a class with unbounded signatures is any class of the form $\AOver{\Gg}{(r)}$ (we do not consider infinite signatures, but there are arbitrarily many symbols in those signatures).
Theorem~\ref{thm:niceWrt} allows us to give concrete and general examples of classes that are \emph{not} $\tw$-pliable:
the class of orientations of graphs in $\Gg$, where $\Gg$ is \emph{any} class of
unbounded average degree (Lemma~\ref{lem:niceIsMad} in Section~\ref{sec:non-examples}), or any
class of 3-regular graphs with unbounded girth (Lemma~\ref{lem:avg2} in
Section~\ref{sec:non-examples}).

\medskip
Finally, as a side result, we connect \emph{hyperfiniteness} to fragility.
A class of graphs $\Gg$ is called \emph{hyperfinite} if
for every $\eps>0$ there is a $k=k(\eps)$ such that
in every $G\in\Gg$ one can remove an at-most-$\eps$ fraction of edges to obtain a graph with connected components of size at most $k$.
For a monotone class of graphs (closed under taking subgraphs), hyperfiniteness easily implies bounded degree.
It is an important notion in property testing: many results in sparse graphs were generalised by the statement that every property of hyperfinite graphs is testable~\cite{NewmanS13}.
The idea, originating in the work of Benjamini, Schramm, and Shapira~\cite{BenjaminiSS08} and Hassidim, Kelner, Nguyen, and Onak~\cite{HassidimKNO09}, is that following the approach of Lipton and Tarjan, graphs with sufficiently sublinear separators, such as planar or excluded-minor graphs~\cite{AlonST90}, can be recursively partitioned into bounded-size components, which for bounded-degree graphs gives hyperfiniteness (see e.g.~\cite[Cor. 3.2]{CzumajSS09} for a slightly stronger property, cf.~\cite{MoshkovitzS18}).
This allows, analogously as in the dense case, to give a constant-size approximate description of such graphs by sampling constant-radius balls in them~\cite{NewmanS13}.
See \cite{goldreich2017} for a book on property testing and \cite{KumarSS19} for a recent improvement for excluded-minor graphs.

We show that a monotone class $\Gg$ is hyperfinite if and only if it is fractionally-$\tw$-fragile and has bounded degree.
In fact, replacing the parameter treewidth by the maximum size of a connected component in a graph, we have:

\begin{theorem}\label{thm:hyperfinite}
	Let $\Gg$ be a monotone class of graphs. The following are equivalent:
	\begin{itemize}
		\item $\Gg$ is hyperfinite;
		\item $\Gg$ is fractionally-$\tw$-fragile and has bounded degree;
		\item $\Gg$ is fractionally-$\cc$-fragile;		
		\item $\AOver{\Gg}{(r)}$ is $\cc$-pliable (for any $r\geq 2$).
	\end{itemize}
\end{theorem}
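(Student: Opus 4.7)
The plan is to establish $(3) \Leftrightarrow (4)$ separately and show that $(1)$, $(2)$, and $(3)$ are mutually equivalent. The equivalence $(3) \Leftrightarrow (4)$ should follow by adapting the proof of Lemma~\ref{lem:fragileIffNice} with $\tw$ replaced by $\cc$ throughout: inspection of that lemma's argument will show the parameter enters only via monotonicity under subgraphs and the fact that $\eps$-close substructures inherit its bound, both of which hold trivially for $\cc$.

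For the rest, the key observation is that in each of conditions $(1), (2), (3)$ the bounded-degree assumption is either explicit or follows from a monotone $K_{1,n}$ argument: in a star, every $\cc$-cutting vertex set either contains the center or $\geq n-k$ of its $n$ leaves, which is incompatible with per-vertex marginals $\leq \eps < 1/2$; likewise every hyperfinite edge cut of $K_{1,n}$ must delete $\geq n-k$ edges, incompatible with ratio $\leq \eps$. Hence all three conditions force max degree bounded by some $d=d(\eps)$. The easy directions are then: $(3) \Rightarrow (2)$, since $\cc(G) \leq k$ implies $\tw(G) \leq k-1$; and $(3) \Rightarrow (1)$, by averaging the fragility distribution to a single $X$ with $|X| \leq \eps n$ and removing the $\leq d|X| \leq 2\eps m$ edges incident to $X$. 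For $(2) \Rightarrow (3)$ I would compose the $\tw$-fragility distribution $\mu$ of $(2)$ with a second distribution $\nu_X$ refining each sampled $X$ so that $G - X - Y$ (for $Y \sim \nu_X$) has small components; this reduces to a pivotal lemma that every graph $H$ with $\tw(H) \leq k$ and $\Delta(H) \leq d$ is fractionally-$\cc$-fragile. Finally, $(1) \Rightarrow (3)$ follows analogously: starting from the hyperfinite decomposition, the components of $G$ minus the cut edges form bounded-size pieces whose vertex boundaries must be turned into a probabilistic cover, which again reduces to a randomized covering argument on bounded-degree, bounded-component-size pieces.

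The main obstacle is the randomized construction underlying the pivotal lemma. A graph with bounded treewidth and bounded degree admits a deterministic cut of $O(\eps n)$ vertices yielding components of size $O(k/\eps)$ by recursive balanced separators of size $\leq k+1$ over $O(\log(n/s))$ levels, but upgrading this deterministic cut to a distribution with per-vertex marginal $\leq \eps$ requires care: a Baker-style depth-shift on a tree decomposition does not directly bound marginals, since a single vertex may occur in bags spanning a wide range of depth levels. I would address this either by first passing to a balanced tree decomposition of depth $O(k \log n)$ (Bodlaender-style) so each vertex spans $O(k \log n)$ levels and then applying a random shift at scale $\Theta(k \log n / \eps)$, or by invoking the randomized partition oracles for hyperfinite bounded-degree graphs of Hassidim, Kelner, Nguyen, and Onak~\cite{HassidimKNO09}, whose random seed directly yields a distribution on cut-boundary vertex sets with low per-vertex marginals.
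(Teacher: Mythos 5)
Your plan handles the easy parts correctly: $(3) \Leftrightarrow (4)$ by adapting Lemma~\ref{lem:fragileIffNice} (the paper does this via Lemma~\ref{lem:fragileIffNiceGen}), the $K_{1,n}$ argument for bounded degree, $(3)\Rightarrow(2)$ via $\tw \leq \cc - 1$, and $(3) \Rightarrow (1)$ by averaging. (A small slip in the averaging: $d|X| \leq d\eps n$ need not be $\leq 2\eps m$, since a bounded-degree graph can have $m \ll n$; instead average the number of \emph{edges incident to $X$} directly, getting $\EX_X|\{e : e\cap X \neq \emptyset\}| \leq 2\eps m$ by a union bound over the two endpoints, which is what the paper's vertex-to-edge translation in Lemma~\ref{lem:fragile} does.) The equivalence $(2) \Leftrightarrow (3)$, which you worry about at length, the paper simply cites to Dvo\v{r}\'ak, so your ``pivotal lemma'' is not new content and need not be re-derived; I also note that your proposed route~(a) for it gives a cut into components of size $O(k\log n /\eps)$, which is unbounded in $n$ and therefore does not prove fractional-$\cc$-fragility.

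The genuine gap is $(1) \Rightarrow (3)$, which you dismiss with ``follows analogously.'' It does not follow analogously. For $(2)\Rightarrow(3)$ you compose two \emph{distributions} of modulators, but hyperfiniteness hands you only a single cut $F$ with $|F| \leq \eps|E(G)|$ --- there is no distribution $\mu$ to compose with, and that one fixed cut gives no per-vertex marginal bound at all (a vertex of high degree in $G[F]$ has ``marginal'' essentially~$1$). The difficulty is precisely passing from the \emph{uniform-weight} statement (hyperfiniteness counts edges) to the \emph{arbitrary-weight} statement that fractional-$\cc$-edge-fragility requires via the Farkas dual (Lemma~\ref{lem:thin}): for \emph{every} edge-weight function $w$ one must produce a cut $F'$ with $w(F') \leq \eps\, w(E(G))$ and $\cc(G-F')$ bounded. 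Your sketch never addresses how to handle weights that are wildly non-uniform. The paper's Lemma~\ref{lem:hyperfiniteIsFragile} solves this by partitioning edges into geometric weight-buckets, deleting every $L$-th bucket at small cost, then severing the remaining blocks from each other (cheap because consecutive blocks are separated by a weight gap of factor $\frac{\eps}{6\Delta}$), and finally applying unweighted hyperfiniteness inside each block where weights differ only by a bounded factor. This weight-bucketing argument is the actual content of the theorem and is absent from your proposal; the randomized partition oracles you mention (Hassidim et al.) still produce per-vertex bounds that translate to uniform-weight cuts, so they do not bypass the issue either.
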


\noindent
The equivalence of the second and third bullet points was shown by Dvo\v{r}\'ak~\cite[Observation 15, Corollary 20]{Dvorak16},
while for the third and fourth the proof is established by a generalisation of
Lemma~\ref{lem:fragileIffNice}, cf. Lemma~\ref{lem:fragileIffNiceGen}.

Hyperfiniteness originates from the study of amenable groups and graphs limits,
with motivations in geometry and mathematical physics~\cite{Elek08:jfa}.
The unexpected connection with fractional fragility already found an application
in that area: Elek~\cite{Elek:pams}
showed that our Theorem~\ref{thm:hyperfinite} gives that last missing
implication in proving the equivalence of some properties of infinite,
bounded-degree graphs (in particular ``uniform local amenability'' and
``Property A''), which answers a question of Brodzki, Niblo, Spakula, Willett
and Wright~\cite{Brodzki13:jng}.

\subsection{Related work}

While this paper focuses on \rCSPs, Baker's technique and the regularity lemma apply to many more problems.
In fact Khanna and Motwani~\cite{Khanna96:stoc} argued that most known PTAS algorithms can be derived from three canonical optimisation problems on planar graphs, the first being \CSP and the latter two being so-called Max-Ones and Min-Ones CSPs (also solvable with Baker's technique).
One of the very few results that did not fit their framework was the PTAS for dense Max-Cut.
A follow-up work by Mezei, Wrochna, and \v{Z}ivn\'y~\cite{Mezei23:talg} on the
extended abstract of this work~\cite{rwz21:soda} extended some of the results of
the present paper to Min- and Max-CSPs with crisp constraints, which include the
Max-Ones and Min-Ones CSPs mentioned above.

Generic frameworks extending Baker's technique include
the bidimensionality theory of Demaine, Fomin, Hajiaghayi, and Thilikos~\cite{Demaine05:jacm} and its application in the design of PTASes by Demaine and Hajiaghayi~\cite{Demain05:soda} (which is however limited to minor-closed graph classes); 
monotone FO problems on minor-closed graph classes by Dawar, Grohe, Kreutzer,
and Schweikardt~\cite{Dawar06:lics};
 and the idea of Baker games, introduced by Dvo\v{r}\'ak~\cite{Dvorak20} (see also~\cite{Dvorak18}).
The latter gives conditions stronger than fractional-$\tw$-fragility,
but useful for problems beyond \CSPs,
and achievable for all examples known to be fractionally fragile.
The work of Dvo\v{r}\'ak and Lahiri~\cite{DvorakL21}, which appeared after the
present paper, gives a PTAS 
on fractionally-$\tw$-fragile classes of graphs
for problems incomparable with \CSPs, namely monotone maximisation
problems expressible in terms of distances.

De la Vega and Karpinski~\cite{VegaK02,VegaK06} extended the dense approach to subdense cases ($\Omega(\frac{n^2}{\log n})$ edges) for specific problems such as MaxCut and Max-2-SAT.
In contrast, they show that Max-Cut on graphs with $\Omega(n^{2-\delta})$ edges is hard to approximate, for any $\delta>0$.

The best known approximation algorithm for general \bCSPs is due to Charikar,
Hajiaghayi, and Karloff~\cite{Charikar11:algorithmica} and achieves an
approximation factor of $\Oh((nq)^{1/3})$, where $n$ is the number of variables
and $q$ is the alphabet size. On the hardness side,
Dinur, Fischer, Kindler, Raz, and Safra~\cite{DinurFKRS11}
showed that $\Oh(2^{\log^{1-\delta}(nq)})$-approximation of \bCSPs is NP-hard.
Manurangsi and Moshkovitz~\cite{Manurangsi15:approx} gave approximation
algorithms for \emph{dense} \bCSPs with large alphabet size (but not PTASes).
Manurangsi and Raghavendra~\cite{Manurangsi17:icalp} establish a tight trade-off between
running time and approximation ratio for dense \rCSPs for $r>2$.

CSPs have also been extensively studied for fixed constraint types, i.e., $\HOM(-,\structB)$ problems for fixed $\structB$.
Raghavendra showed that the best approximation ratio is always achieved by the basic SDP
relaxation~\cite{Raghavendra08:everycsp},
assuming Khot's unique games conjecture~\cite{Khot02stoc}.
The exactly solvable cases were characterised by Thapper and \v{Z}ivn\'y~\cite{tz16:jacm}. 
The approximation factor of \gCSPs was studied by Langberg,
Rabani, and Swamy~\cite{Langberg06:approx}.

\subsection{Overview}

In Section~\ref{sec:prelims}, we give formal definitions and present our basic tool:
two structures $\structA, \structB$ have similar values of $\opt{-,\structC}$ if and only if there is a certain fractional cover, which we call an \emph{overcast}, from $\structA$ to $\structB$ and from $\structB$ to $\structA$.
Section~\ref{sec:prelims} then relates our notion of pliability with two notions of
distances, and gives examples and non-examples of pliable classes of structures.

To prove that treewidth-pliability leads to a PTAS (Theorem~\ref{thm:main1}) the main idea is that an overcast allows to show that the values of $\opt{-,\structC}$ are still similar when we look at linear programming relaxations.
The details, as well as the definition of the Sherali-Adams linear programming
relaxation, are given in Section~\ref{sec:tract}.
In Section~\ref{sec:frag}, we introduce equivalent definitions of fractional
fragility and study their properties. This will allow us to 
prove Theorem~\ref{thm:fragileIsNice} by showing how the definition implies suitable overcasts.
This also allows us to establish Lemma~\ref{lem:fragileIffNice}.
Theorem~\ref{thm:niceWrt} is proved in Section~\ref{sec:sizeccproof}.
Theorem~\ref{thm:hyperfinite} on hyperfiniteness is proved in Section~\ref{sec:hyperfinite}.
Section~\ref{sec:dense} gives a proof of Theorem~\ref{thm:denseIsNice} on dense graphs.

We conclude with open questions in Section~\ref{sec:open}.

\section{Preliminaries}
\label{sec:prelims}
\subsection{Structures}
A \emph{signature} is a finite set $\sigma$ of (function) symbols $f$, each with
a specified finite arity $\ar(f)$. We denote by $|\sigma|$ the number of symbols in the
signature~$\sigma$. 
A \emph{structure} $\structA$ over a signature $\sigma$ (or
$\sigma$-structure $\structA$, for short) is a finite domain $A$ together with
a function $f^{\structA}: A^{\ar(f)} \to \qplus$ for each symbol $f \in
\sigma$.
We say that a class of structures has \emph{bounded} signatures if for the signatures $\sigma$ of structures in the class, $|\sigma|$ is bounded by a constant (so unbounded means arbitrarily many symbols; we do not consider infinite signatures).
Note that a class of $\sigma$-structures (that is, structures over a fixed
signature $\sigma$) has bounded signatures and bounded arities (the maximum
arity occurring in $\sigma$ is a finite constant).

We denote by $A,B,C,\dots$ the domains of structures $\structA,\structB,\structC,\dots$.
For sets $A$ and $B$, we denote by $B^A$ the set of all mappings from $A$ to $B$.
We define $\tup{\structA}$ to be the set of all pairs $(f,\tuple{x})$ such that
$f \in \sigma$ and $\tuple{x} \in A^{\ar(f)}$, and by $\tup{\structA}_{>0}$ the
pairs $(f,\tuple{x})\in\tup{\structA}$ with $f^{\structA}(\tuple{x})>0$.\looseness=-1

We denote $\|\structA\|_\infty := \max_{(f,\tuple{x}) \in
	\tup{\structA}} f^\structA(x)$ and  $\|\structA\|_1 := \sum_{(f,\tuple{x}) \in
	\tup{\structA}} f^\structA(x)$. 
For $\lambda\geq 0$ we write
$\lambda\structA$ for the \emph{rescaled} $\sigma$-structure with domain $A$ and
$f^{\lambda\structA}(\tuple{x})\defeq \lambda f^{\structA}(\tuple{x})$, for
$(f,\tuple{x}) \in \tup{\structA}$.

Given a $\sigma$-structure $\structA$, the \emph{Gaifman graph} (or
\emph{primal graph}), denoted by $\Gaifman(\structA)$, is the graph whose vertex set is
the domain $A$, and whose edges are the pairs $\{u,v\}$ for
which there is a tuple $\tuple{x}$ and a symbol $f\in \sigma$ such that $u,v$
appear in $\tuple{x}$ and $f^{\structA}(\tuple{x})>0$.

For $r\geq 2$ and a class of graphs $\Gg$, we denote by $\AOver{\Gg}{(r)}$ the class of $\sigma$-structures $\structA$ with $\Gaifman(\structA)\in\Gg$ and $\ar(f)\leq r$ for every
$f\in\sigma$. (Note that $\AOver{\Gg}{(r)}$ contains structures over distinct
signatures.)\looseness=-1

The \emph{maximum homomorphism problem} (\HOM) is the following computational problem.
An instance of \HOM consists of two structures $\structA$ and $\structB$ over the 
same signature. For a mapping $h:A \to B$, we define 
$\val(h) = \sum_{(f,\tuple{x})\in \tup{\structA}}f^{\structA}(\tuple{x})f^{\structB}(h(\tuple{x}))$.
The goal is to find the maximum value over all possible mappings $h:A\to B$.%
\footnote{While called maximum
	homomorphism, we note that the maximisation is over all possible maps, not
  only homomorphisms, i.e., those that map non-zero tuples into non-zero tuples.}
We denote this value by $\opt{\structA,\structB}$. 
Note that when seen as a Max-CSP instance, the domain of the left-hand side structure $\structA$ is the variable set, while the domain of the right-hand side structure $\structB$ is the \emph{alphabet}.

\begin{example} 
  Let $\sigma=\{f\}$ be a signature consisting of a single symbol $f$ of arity
  $\ar(f)=2$. Let $\structB$ be a $\sigma$-structure with the domain $B=\{0,1\}$
  and let $f^\structB:B^2\to\qplus$ be defined by $f^\structB(x,y)=1$ if $x\neq y$ and
  $f^\structB(x,y)=0$ if $x=y$. Given an undirected graph $G=(V,E)$, we can
  encode it as a $\sigma$-structure $\structA$ with the domain $A=V$ and with
  $f^\structA(x,y)=1$ if $\{x,y\}\in E$ and $f^\structA(x,y)=0$ otherwise. Now,
  the instance $(\structA,\structB)$ of \HOM is the same as the Max-Cut problem
  in $G$.
  The Max-DiCut problem (in a directed graph $(V,E)$) would be cast as \HOM very
  similarly. The only differences would be in the definition of $f^\structA$ and
  $f^\structB$: 
  $f^\structA(x,y)=1$ if $(x,y)\in E$ and $0$ otherwise,
  $f^\structB(x,y)=1$ if $x=0$ and $y=1$ and $0$ otherwise. 
\end{example}

\begin{example} An example of a problem that is not a \HOM is the Maximum
Independent Set problem. Intuitively, the ``no edges'' constraints imposed on
  an independent set are strict. This problem can be, however, cast as an instance of a
  \HOM with both rational and (negative) infinite costs, cf.~\cite{Mezei23:talg}
  for follow-up work.
\end{example}

Given a class $\A$ of structures, $\bHOM{\A}{-}$ is the problem restricted to instances
$(\structA,\structB)$ of \HOM with $\structA\in\A$ (it is a promise problem: algorithms are allowed to do anything when $\structA\not\in\A$).
Recall that for a class of graphs $\Gg$, the problem $\rCSP(\Gg)$ is equivalent to $\HOM(\AOver{\Gg}{(r)}, -)$.%
\footnote{Note that $\HOM(\AOver{\Gg}{(r)}, -)$ is different from the maximum graph homomorphism problem $\HOM(\Gg, -)$.
Indeed, graphs are also structures over the signature $\{e\}$ with one symbol of arity 2
(where $e^G(u,v) = [uv\text{ is an edge of }G]$, if the graph is not weighted).
To avoid confusion, we use $\Gg$ for a class of Gaifman graphs of some structures
and $\A$ for a class of graphs that are themselves used as input structures.}

\subsection{Overcasts}
Before we define pliability formally, it is useful to consider the following relation.
The starting point of all our results is the equivalence of this relation to a more combinatorial notion: the existence of a certain fractional cover, which we shall call an \emph{overcast}.

\begin{definition}
	Let $\structA$ and $\structB$ be $\sigma$-structures.
	We say that $\structA$ \emph{overcasts} $\structB$, denoted $\structA\overcasts\structB$ if, for all $\sigma$-structures $\structC$,
	we have that $\opt{\structA,\structC}\ \geq\ \opt{\structB,\structC}$.
\end{definition}

A distribution over a finite set $U$ is a function $\pi \colon U \to \QQ_{\geq 0}$ such that $\sum_{x \in U} \pi(x) = 1$. 
We write $\EX_{x\sim\pi} f(x)$ for $\sum_{x \in U} \pi(x) \cdot f(x)$ and $\Pr_{x\sim\pi}[\phi(x)]$ for $\EX_{x\sim\pi}[\phi(x)]$, where $[\phi(x)]$ is 1 if $x$ satisfies the predicate $\phi$ and 0 otherwise.

Given a map $g:A\to B$ and a tuple $\tuple{x}=(x_1,\ldots,x_m)\in A^m$, we write
$g(\tuple{x})$ for $(g(x_1),\ldots,g(x_m))$; i.e., we apply $g$ componentwise on
$\tuple{x}$. Hence, $g^{-1}(\tuple{y})=\{\tuple{x}\mid
g(\tuple{x})=\tuple{y}\}$.

\begin{definition}
	Let $\structA$ and $\structB$ be $\sigma$-structures. 
	An \emph{overcast} from $\structA$ to $\structB$ is a distribution $\omega$ over $B^A$ 
	such that for each $(f,\tuple{x})\in \tup{\structB}$ we have that
	\[\EX_{g\sim\omega} f^\structA(g^{-1}(\tuple{x}))\ \geq\ f^{\structB}(\tuple{x}).\]
	Here $f^{\structA}(g^{-1}(\tuple{x}))$ denotes the sum of $f^{\structA}(\tuple{y})$ over $\tuple{y}\in g^{-1}(\tuple{x}) \subseteq A^{\ar(f)}$. 
\end{definition}

Intuitively, an overcast from $\structA$ to $\structB$ is a random function from $A$ to $B$ such that for each edge (tuple) in $\structB$, its preimage has larger expected weight (value). In other words, each edge must be covered by at least its own weight, in expectation.
The following is a consequence of Farkas' lemma, as shown in Appendix~\ref{sec:Farkas}.%
\footnote{
The definitions of the $\overcasts$ relation and of an overcast are analogous to
the ``improvement'' relation and ``inverse fractional homomorphisms''
from~\cite{crz22:sicomp}. Here, however, $\opt$ is maximising, not
minimising, so inequalities in definitions are swapped. This has consequences
such as the fact that mappings in the support of an overcast are in general
not homomorphisms (mapping non-zero tuples to non-zero tuples), unlike for
inverse fractional homomorphisms. The proof of Proposition~\ref{prop:overcast}
nevertheless is identical to the proof of~\cite[Proposition~3.6]{crz22:sicomp}.\looseness=-1
}

\begin{proposition} \label{prop:overcast} 
	$\structA \overcasts \structB$ if and only if there is an overcast from $\structA$ to $\structB$. 
\end{proposition}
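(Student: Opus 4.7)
The plan is to prove the two directions separately; the backward direction is a direct averaging argument, while the forward direction relies on LP duality.

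For the backward direction, given an overcast $\omega$ from $\structA$ to $\structB$, fix any $\sigma$-structure $\structC$ and any $h\colon B \to C$. For each $g \in \supp(\omega)$, the composition $h \circ g$ is a map $A \to C$, and regrouping the tuples of $\tup{\structA}$ by their $g$-image yields
\[
\val(h \circ g) \;=\; \sum_{(f,\tuple{x}) \in \tup{\structB}} f^{\structA}(g^{-1}(\tuple{x})) \cdot f^{\structC}(h(\tuple{x})).
\]
Taking expectation over $g \sim \omega$ and applying the overcast inequality inside each term (which is legitimate since $f^{\structC}(h(\tuple{x})) \geq 0$) gives $\EX_{g \sim \omega} \val(h \circ g) \geq \val(h)$. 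Hence some $g \in \supp(\omega)$ attains $\val(h \circ g) \geq \val(h)$, so $\opt{\structA,\structC} \geq \val(h)$ for every $h$, showing $\structA \overcasts \structB$.

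For the forward direction, the plan is to prove the contrapositive via LP duality. The existence of an overcast is the feasibility of a linear system in variables $\omega(g) \geq 0$ indexed by $g \in B^A$: a normalisation $\sum_g \omega(g) = 1$ together with one inequality $\sum_g \omega(g) \cdot f^{\structA}(g^{-1}(\tuple{x})) \geq f^{\structB}(\tuple{x})$ per $(f,\tuple{x}) \in \tup{\structB}$. When infeasible, a Farkas-type theorem of the alternative over $\QQ$ yields rational coefficients $y_{f,\tuple{x}} \geq 0$ and a rational threshold $z$ such that $\sum_{(f,\tuple{x})} y_{f,\tuple{x}} \cdot f^{\structA}(g^{-1}(\tuple{x})) \leq z$ for every $g \in B^A$, together with the strict inequality $\sum_{(f,\tuple{x})} y_{f,\tuple{x}} \cdot f^{\structB}(\tuple{x}) > z$.

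To conclude, I would use these coefficients to define the witness structure $\structC$ with domain $C := B$ and $f^{\structC}(\tuple{x}) := y_{f,\tuple{x}}$. The strict inequality is exactly $\val(\id_B) > z$, so $\opt{\structB,\structC} > z$. On the other hand, for any $g \colon A \to B = C$, the same regrouping as in the backward direction gives $\val(g) = \sum_{(f,\tuple{x})} y_{f,\tuple{x}} \cdot f^{\structA}(g^{-1}(\tuple{x})) \leq z$, whence $\opt{\structA,\structC} \leq z < \opt{\structB,\structC}$; that is, $\structA$ does not overcast $\structB$. The main (small) obstacle is invoking the correct rational version of Farkas so that $\structC$ is indeed $\qplus$-valued as the setting requires; otherwise the argument mirrors that of~\cite[Proposition 6]{crz18:arxiv-v2}, with inequality signs swapped because here $\opt$ is a maximum rather than a minimum.
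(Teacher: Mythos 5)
Your proposal is correct and mirrors the paper's proof. The backward direction is the same regrouping-and-averaging argument, and the forward direction is the same contrapositive via Farkas' lemma with the same witness construction $\structC$ on domain $B$ defined by the dual coefficients (the paper's stated variant of Farkas eliminates your intermediate threshold $z$, but its proof passes through one, so this is purely cosmetic).
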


\subsection{Pliability and graph parameters}\label{sec:Dist}
Our definition of pliability involves a notion of distance which may be of independent interest.
It quantifies the \emph{relative} difference between two structures (as measured from the right by weighted multicut densities, in the language of Lov\'{a}sz's book on graph limits~\cite[Ch. 12]{LovaszBook}).
\begin{definition}\label{def:pliable}
	The \emph{opt-distance} between two structures with the same signature is defined as:\looseness=-1
	\[
	\dopt(\structA,\structB) := \textstyle\sup_\structC \left|\ln \opt{\structA,\structC}
	- \ln \opt{\structB,\structC}\right|.\]
	Here $\ln 0 = -\infty$ and $\left|\ln 0 - \ln 0\right|=0$.
	Equivalently, we can compare rescaled structures; by definition of~$\succeq$
  and the fact that $\opt{\lambda \structA, \structC} = \lambda\opt{\structA, \structC}$, we have:
	\[ \dopt(\structA,\structB) = \inf\big\{ \eps  \bigm| 		
	\structA \,\overcasts\, e^{-\eps}\,\structB\ \text{ and }\ \structB \,\overcasts\,
	e^{-\eps}\,\structA\, \big\}.\]
\end{definition}
\noindent
One may think of $e^{\pm\eps}$ as close to $1\pm\eps$. Formally
$\textstyle 1-\eps\ \leq\ e^{-\eps}\ \leq\ \frac{1}{1+\eps}\ =\
1-\eps+\Oh(\eps^2)$ for $\eps \geq 0$. \looseness=-1
Note that the first definition readily implies that opt-distance satisfies the triangle inequality (it defines a pseudometric).

\medskip
Finally, a class is treewidth-pliable if it is uniformly close to structures of bounded treewidth:

\begin{definition}
	A class of structures $\A$ is $\param$-\emph{pliable} with respect to a graph parameter $\param$ if 
	for every $\varepsilon>0$, there is $k=k(\varepsilon)$ such that
	for every $\sigma$-structure	$\structA\in\A$ there is a $\sigma$-structure $\structB$ with $\param(\structB) \leq k$ and $\dopt(\structA,\structB) \leq \eps$.
\end{definition}

Thus to show $\tw$-pliability of various classes, we will construct overcasts from structures $\structA$ in the class to $(1-\eps)\structB$, for some $\structB$ of bounded treewidth, and from $\structB$ back to $(1-\eps) \structA$.

\bigskip
\noindent
Given a graph $G$, we will consider pliability for the following graph parameters:
\begin{itemize}
  \item $\size(G) = |V(G)|$ -- the number of vertices of $G$,
    \item $\cc(G)$ -- the maximum size of a connected component of $G$,
    \item treedepth $\td(G)$, which is a parameter due to Ne\v{s}et\v{r}il and Ossona de Mendez~\cite{NesetrilM06}, whose definition we recall below,
    \item treewidth $\tw(G)$,\footnote{We refer
to~\cite{Diestel10:graph} for the standard definitions of treewidth, pathwidth and minors.}
    \item and finally the Hadwiger number $\Hadwiger(G)$, which is the maximum $k$ such that
$K_k$ is a minor of $G$.
    We use the Hadwiger number as an example of a broader ``sparsity'' parameter:
    for example, all planar graphs have Hadwiger number at most 4 (yet unbounded treewidth).
\end{itemize}

\begin{definition}
	The \emph{treedepth} $\td(G)$ of a graph $G$ is defined recursively as:
	\begin{itemize}
		\item $\ \max\limits_i\ \ \td(G_i)$, if $G$ is disconnected with components $G_i$;
		\item $\min\limits_{v \in V(G)} \td(G-v) +1$, if $G$ is connected and has more than one vertex;
		\item $1$, if $G$ has one vertex.
	\end{itemize}
\end{definition}

An equivalent definition is as follows:
a \emph{treedepth decomposition} of a graph $G$ is a rooted forest $T$ (a disjoint union of rooted trees) with $V(T)=V(G)$ such that for each $uv \in E(G)$, $u$ is an ancestor or descendant of $v$ in $T$.
In other words, $G$ is a subgraph of the transitive closure of a forest $T$ directed towards roots.
The treedepth of $G$ is equal to the minimum depth among all such decompositions of $G$.
Treedepth is a rather strict parameter: for example, stars have treedepth 2, but paths already have unbounded treedepth.
In fact, a short proof shows that a class of graphs has bounded treedepth if and only if the length of the longest path is bounded~\cite{NesetrilM06}.

\medskip

Bounded size implies bounded $\cc{}$ implies bounded $\td{}$ implies bounded pathwidth ($\pw$) implies bounded $\tw$ implies bounded Hadwiger number, more precisely:
\[\Hadwiger(G) - 1 \leq \tw(G) \leq \pw(G) \leq \td(G) - 1\text{ and }\td(G)\leq\cc(G)\leq\size(G).\]
Moreover, we also have the following inequality (less useful, because of the dependency on $G$):
\[\td(G) \leq (\tw(G) + 1) \cdot \log_2 |G|.\]

All these parameters are \emph{monotone}, that is, $\param(H)\leq \param(G)$ for a subgraph $H$ of a graph $G$.
Their boundedness implies bounded average degree $\frac{2|E(G)|}{|V(G)|}$.
More precisely, $\frac{2|E(G)|}{|V(G)|} \leq 2\tw(G)$ (because a graph of treewidth $k$ has a vertex of degree at most $k$);
Kostochka~\cite{Kostochka84} proved $\frac{2|E(G)|}{|V(G)|} \leq \Oh(h \sqrt h)$ where $h=\Hadwiger(G)$.

\medskip

The size of a structure $\structA$ is the number of vertices of its Gaifman
graph: $\size(\structA) = |V(\Gaifman(\structA))|$. The other graph
parameters are also defined in terms of the same parameter
on the Gaifman graph of the structure; e.g., the treewidth of a structure $\structA$ is the treewidth of its Gaifman
graph: $\tw(\structA)=\tw(\Gaifman(\structA))$. 
In particular, since the edges of $\Gaifman(\structA)$ come only from
tuples in $\structA$ of non-zero weight, a rescaled structure $\lambda\structA$, for
$\lambda>0$, has $\Gaifman(\lambda\structA)=\Gaifman(\structA)$ so the
parameters we consider do not change by rescaling; for $\lambda=0$,
$\Gaifman(\lambda\structA)$ has no edges.

We will often prove the easy directions in various characterisations via the
following observation that follows from the definition of pliability: 

\begin{observation}\label{obs:trivial}
If two graph parameters $\param$ and $\param'$ satisfy $\param\leq\param'$ then
$\param'$-pliability implies $\param$-pliability.
\end{observation}

\subsection{Opt-distance zero and edit-distance}
\label{sec:distance}

In this section we define our relative version of edit distance 
and prove it upper-bounds opt-distance.
We define the \emph{edit distance} $\dedit(\structA,\structB)$ between two valued $\sigma$-structures $\structA,\structB$ to be
\[ \dedit(\structA,\structB) :=
\min_{\text{bij. }\phi \colon A \to B} \sum_{f \in \sigma}
\frac{\sum_{\tuple{x} \in A^{\ar(f)}}  \left|f^\structA(\tuple{x}) - f^\structB(\phi(\tuple{x}))\right|}
{\min(\|\structA_f\|_1,\|\structB_f\|_1)}.\]
Here $\structA_f$ denotes the structure $\structA$ limited to the signature $\{f\}$,
so $\|\structA_f\|_1$ denotes $\sum_{\tuple{x} \in A^{\ar(f)}} f^\structA(\tuple{x})$.
The following generalises the notion of ``looplessness'' in graphs.

\begin{definition}\label{def:loopless}
	A $\sigma$-structure $\structA$ is \emph{loopless} if no tuple has a repetition.
	That is, for $(f,\tuple{x}) \in \tup{\structA}$ with $f^\structA(\tuple{x}) > 0$,
  $\tuple{x}$ consists of $\ar(f)$ different elements of
  $A$.\footnote{Equivalently, the Gaifman graph of $\structA$ is loopless.}
\end{definition}

\begin{lemma}\label{lem:editToOpt}
	For loopless structures, the opt-distance is bounded linearly by the edit
  distance:
	\[\dopt \leq C_\sigma \cdot \dedit, \]
	where $C_\sigma = \max_{f \in \sigma} \ar(f)^{\ar(f)}$.
\end{lemma}
\begin{proof}
	Let $\dedit = \dedit(\structA,\structB)$ and
	let $\phi \colon A \to B$ be a bijection minimising the expression in its definition.
	We will show that $e^{C_\sigma \cdot \dedit} \structA \succeq (1 + C_\sigma \cdot \dedit) \structA \succeq \structB$.
	Symmetrically, $e^{C_\sigma \cdot \dedit} \structB \succeq \structA$,
	hence $\dopt \leq C_\sigma \cdot \dedit$, which will conclude our claim.

	Observe that for $f \in \sigma$ \[\sum_{\tuple{x} \in A^{\ar(f)}}
	\left|f^\structA(\tuple{x}) - f^\structB(\phi     (\tuple{x}))\right| = 
	\sum_{\tuple{x} \in B^{\ar(f)}}
	\left|f^\structB(\tuple{x}) - f^\structA(\phi^{-1}(\tuple{x}))\right|  .\]

	Let $\delta := \frac{C_\sigma \cdot \dedit}{1 + C_\sigma \cdot \dedit}$, so $1-\delta = \frac{1}{1 + C_\sigma \cdot \dedit}$.
	To show $(1 + C_\sigma \cdot \dedit) \structA \succeq \structB$,
	we construct an overcast $\omega$ from $\structA$ to $(1-\delta)\structB$ as follows.
	With probability $(1-\delta)$ we map $\structA$ to $\structB$ with $\phi$;
	with probability $\delta$ we choose a tuple $(f,\tuple{x}) \in \tup{\structB}$ at random
	with probability proportional to its contribution in $d_1$, that is,
	$\frac{|f^\structB(\tuple{x})-f^\structA(\phi^{-1}(\tuple{x}))|}{\min(\|\structA_f\|_1,\|\structB_f\|_1)} \cdot \frac{1}{d_1}$,
	and we map all of $\structA$ uniformly at random to the elements of this tuple. That is, after choosing $(f,\tuple{x}) \in \tup{\structB}$, each tuple of $\structA_f$ gets mapped into $\tuple{x}$ with probability $\frac{1}{\ar(f)^{\ar(f)}}$ (assuming $\structA$ is loopless).
	Therefore, for each $(f,\tuple{x}) \in \tup{\structB}$:
	\[ \sum_{g \in B^A} \omega(g) \cdot f^\structA(g^{-1}(\tuple{x})) \geq
	(1-\delta) \cdot f^\structA(\phi^{-1}(\tuple{x}))
	+ 
	\delta \cdot 	\frac{|f^\structB(\tuple{x})-f^\structA(\phi^{-1}(\tuple{x}))|}{\min(\|\structA_f\|_1,\|\structB_f\|_1)} \cdot \frac{1}{d_1}
	\cdot \frac{\|\structA_f\|_1}{\ar(f)^{\ar(f)}}\]
	\[ \geq (1-\delta) f^\structA(\phi^{-1}(\tuple{x})) + \frac{\delta}{C_\sigma
		\cdot d_1} \cdot 	|f^\structB(\tuple{x})-f^\structA(\phi^{-1}(\tuple{x}))|
	\geq (1-\delta) f^\structB(\tuple{x}),\] 
	where the last inequality follows from $\frac{\delta}{C_\sigma \dedit} = \frac{1}{1+C_\sigma \dedit} = 1- \delta$.
	This shows that $\omega$ is indeed an overcast that certifies $\structA \succeq (1-\delta) \structB$.
\end{proof}

\begin{observation}\label{obs:closeToPliable}
	Let $\A$ be a $\tw$-pliable class.
	Let $\B$ be a class of structures such that
	for every $\structB \in \B$ there is an $\structA \in \A$
	with $\dopt(\structB,\structA) \leq f(\tw(\structB))$,
	for some function $f(n) \xrightarrow[n\to\infty]{} 0$.
	Then $\B$ is $\tw$-pliable.
\end{observation}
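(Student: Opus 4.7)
The plan is to rely on two elementary facts. First, $\dopt$ satisfies the triangle inequality, since it is defined as a supremum of absolute log-differences $\left|\ln\opt{-,-}-\ln\opt{-,-}\right|$, each of which obeys the triangle inequality, and the supremum preserves this. Second, the hypothesis $f(n)\to 0$ guarantees that for any target accuracy $\eps>0$ there is a threshold $N=N(\eps)$ such that $f(n)\leq \eps/2$ for all $n\geq N$.

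Given $\eps>0$, I would split the structures $\structB\in\B$ into two regimes according to $\tw(\structB)$. If $\tw(\structB)<N$, then $\structB$ itself witnesses pliability at accuracy $\eps$, since it has treewidth below $N$ and opt-distance zero to itself. If instead $\tw(\structB)\geq N$, then the hypothesis supplies an $\structA\in\A$ with $\dopt(\structB,\structA)\leq f(\tw(\structB))\leq \eps/2$, and $\tw$-pliability of $\A$ applied with accuracy $\eps/2$ produces a further structure, call it $\structC$, with $\tw(\structC)\leq k_\A(\eps/2)$ and $\dopt(\structA,\structC)\leq \eps/2$. The triangle inequality then yields $\dopt(\structB,\structC)\leq \eps$. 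Setting $k(\eps):=\max\bigl(N-1,\,k_\A(\eps/2)\bigr)$ therefore uniformly bounds the treewidth of the approximating structures, which is exactly the definition of $\tw$-pliability for $\B$.

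There is essentially no substantive obstacle here: the whole argument is a triangle inequality together with a two-case split on the treewidth of $\structB$. The one microscopic point deserving care is verifying the triangle inequality for $\dopt$, which however is immediate from its supremum-of-absolute-log-differences form given in Section~\ref{sec:Dist}. The same template would also work verbatim for any other graph parameter $\param$ in place of treewidth, provided the parameter is used consistently in the hypothesis $f(\param(\structB))$, in the pliability assumption on $\A$, and in the desired pliability of $\B$.
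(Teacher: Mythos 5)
Your argument is correct and follows the paper's proof exactly: the same two-case split on $\tw(\structB)$ against a threshold determined by $f$, the same application of $\A$'s pliability at accuracy $\eps/2$, and the same implicit use of the triangle inequality for $\dopt$ (which the paper takes for granted and you justify explicitly). Nothing to add.
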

\begin{proof}
	Since $\A$ is $\tw$-pliable, for every $\eps>0$ there is a $k=k(\eps)$
	such that every structure $\structA$ in $\A$ is $\eps$-close
	to some structure of $\tw\leq k(\eps)$.
	To show that $\B$ is $\tw$-pliable, consider any $\eps>0$.
	Let $n_\eps$ be large enough so that $f(n) \leq \frac{\eps}{2}$ for $n \geq n_\eps$.
	Then for $\structB \in \B$,
	either $\tw(\structB) \leq n_\eps$
	or $\structB$ is $f(\tw(\structB)) \leq \frac{\eps}{2}$-close
	to some structure $\structA \in A$,
	which in turn is $\frac{\eps}{2}$-close
	to some structure of treewidth at most $k(\frac{\eps}{2})$.
	In either case $\structB$ is $\eps$-close to a structure of treewidth
	at most $\max(n_\eps, k(\frac{\eps}{2}))$.
\end{proof}

\subsection{Pliable examples}\label{subsec:pliable}

We give simple observations and examples: classes that are sufficiently close to
pliable classes (in edit or opt-distance) are themselves pliable.
We first consider consider simple examples with a fixed signature: graphs.

\begin{lemma}\label{lem:closeToPliable}
	Let $\A$ be a $\tw$-pliable class of graphs.
	Let $\B$ be a class of graphs such that
	every $H \in \B$ can be obtained from some $G \in \A$
	by adding or removing $f(|E(H)|)$ edges,
	for some function $f(m) \in o(m)$.
	Then $\B$ is $\tw$-pliable.
\end{lemma}
\begin{proof}
	By Lemma~\ref{lem:editToOpt},
	$\dopt(H,G) \leq 4 \dedit(H,G) \leq \frac{f(|E(H)|)}{|E(H)| - f(|E(H)|)} = f'(|E(H)|)$
	for some function $f'(n) \xrightarrow[n\to\infty]{} 0$.
	This function can be upper-bounded by a monotonic function $f''$ decreasing to~0,
	say $f''(x) \defeq \sup_{n \geq x} f'(n)$.
	Since $|E(H)|\geq \tw(H)$, we conclude $\dopt(H,G) \leq f''(|E(H)|) \leq f''(\tw(H))$.
	The claim follows by Observation~\ref{obs:closeToPliable}.
\end{proof}

Other simple examples arise from considering structures at opt-distance zero. This is a valued analogue of being homomorphically equivalent (see also \emph{valued cores} in~\cite{crz22:sicomp}).

\begin{example}\label{ex:bipartite}
	For every non-empty bipartite graph $G$, $\dopt(G,\lambda K_2) = 0$, for $\lambda = |E(G)|$.
	Therefore, since $\{\lambda K_2 \colon \lambda \in \QQ_{\geq 0}\}$ is trivially $\tw$-pliable,
	every class of bipartite graphs is $\tw$-pliable.
\end{example}
\begin{proof}
	A bipartite graph $G$ admits a homomorphism $h$ to $K_2$.
	This gives an overcast showing $G \overcasts \lambda K_2$: always map everything according to $h$.
	Conversely, mapping $\lambda K_2$ uniformly at random to edges of $G$ gives an overcast showing $\lambda K_2 \overcasts G$.
\end{proof}

\begin{example}\label{ex:triangle}
	Let $G$ be a 3-colourable graph such that every edge of $G$ occurs in exactly one triangle.
	Then $\dopt(G,\lambda K_3) = 0$ for $\lambda = |E(G)|/3$. Hence the class of all such graphs is $\tw$-pliable.
\end{example}
\begin{proof}
	A 3-colouring of $G$ corresponds to a homomorphism $h$ to $K_3$.
	Composing $h$ with a random rotation of $K_3$ gives an overcast from $G$ to $\lambda K_3$.
	Conversely, mapping $\lambda K_3$ to a uniformly random triangle in $G$
	covers each edge with probability $\frac{1}{\lambda}$,
	giving an overcast from $\lambda K_3$ to $G$.
\end{proof}

The above idea also implies that our results cannot be extended to \emph{finding} solutions.
This is analogous to the hardness of finding a 3-colouring of a graph that is homomorphically equivalent to $K_3$.

\begin{example}\label{ex:noFind}
	There is a class of weighted graphs $\A$ that is $\tw$-pliable, yet for some $\eps > 0$, there is no poly-time algorithm that finds a map $h \colon V(G) \to V(K_3)$ with $\val(h) \geq (1-\eps)\opt{G, K_3}$ for $G \in \A$, unless $\text{P}=\text{NP}$.	
\end{example}
\begin{proof}
	Let $\A$ be the class of weighted graphs $G$ satisfying $\dopt(G, \lambda K_3) = 0$ for $\lambda = \frac{\|G\|_1}{3}$.
	Suppose there is an algorithm as above for each $\eps > 0$.
	There are constants $\eps_0,d$ such that it is NP-hard to distinguish 3-colourable graphs of maximum degree $d$ from graphs where any map $h \colon V(G) \to V(K_3)$  miscolours more than an $\eps_0$ fraction of edges~\cite{GuruswamiK04}.
	We use our algorithm to solve the problem.
	Given an instance~$G$, let $G'$ be the weighted graph obtained by gluing a new triangle to every edge, and then assigning to every edge $e \in E(G')$ a weight $w(e)$ equal to the number of triangles it occurs in.
	Note $1 \leq w(e) \leq d$.
	Observe that if $G$ was 3-colourable, then $G'$ would be as well, hence $\opt{G', K_3} = \|G'\|_1$.
	Moreover, $G'$ would be in $\A$ by an argument similar as in Example~\ref{ex:triangle}.
	Hence running the algorithm on $G'$, we would find a 3-colouring $h$ which miscolours at most $\eps \opt{G', K_3} = \eps \|G'\|_1$ of the total weight $\|G'\|_1$.
	Since $w\geq 1$, it miscolours at most $\eps \|G'\|_1$ edges.
	Since $w \leq d$, this is at most $\eps d |E(G')| \leq 3\eps d |E(G)|$.
	Hence running the algorithm for $\eps = \frac{\eps_0}{3d}$ would find a colouring of the original graph $G$ that miscolours at most $\eps_0 |E(G)|$ edges.
	Therefore if we run this procedure for any $G$ (regardless of its 3-colourability), then it either outputs a colouring as above, or we can conclude that $G$ is not 3-colourable.
\end{proof}

\subsection{Non-pliable examples}
\label{sec:non-examples}

In this section we give examples of non-pliable classes.
In the process we show equivalent definitions of pliability (Lemmas~\ref{lem:niceToSelf} and \ref{lem:niceToSelfDual}).

We will use the following bound (in this section and in the proof of
Lemma~\ref{lemma:oriented-clique} in Section~\ref{sec:hardness}):
\[
  \sum_{i=0}^{\lambda m}\binom{m}{i} \leq 2^{H(\lambda)m},
\]
where $H(\lambda)$ is a function which satisfies $\lim_{\lambda\to 0} H(\lambda) = 0$; specifically, the binary entropy function $H(\lambda)=\lambda\log_2(\frac{1}{\lambda})-(1-\lambda)\log_2(1-\lambda)$.

Recall that by Lemma~\ref{lem:fragileIffNice}, for a class of Gaifman graphs $\Gg$,
the class $\AOver{\Gg}{(2)}$ of all structures over $\Gg$ is $\tw$-pliable if and only if
$\Gg$ is fractionally-$\tw$-fragile.
So the simplest examples of non-pliable classes are $\AOver{\Gg}{(2)}$ for some non-fractionally-fragile $\Gg$.
Fractional-$\tw$-fragility implies bounded expansion (a notion from the theory of sparse graphs introduced by Ne\v{s}et\v{r}il and Ossona de Mendez~\cite{NesetrilM08a}) and sublinear separators, e.g., 3-regular expander graphs are not fractionally-$\tw$-fragile, see~\cite{Dvorak16}.
Hence for $\Gg$ the class of all 3-regular graphs, $\AOver{\Gg}{(2)}$ is not $\tw$-pliable.

A somewhat more direct proof is to consider any class of 3-regular graphs of high girth.
Thomassen~\cite{Thomassen83} showed that such graphs behave much like graphs of high average degree.
We use essentially the same proof below:

\begin{lemma}\label{lem:girthAvgDegree}
	For $\delta>0$ and $g \in \NN$,
  every graph with average degree $\geq 2+\delta$ and girth $\geq 3g$
	has a minor with average degree $\geq g \delta + 2$. 
\end{lemma}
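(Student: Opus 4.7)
The plan is to find the desired minor by contracting a family of disjoint subtrees of a BFS spanning tree of $G$. We may assume $G$ is connected (otherwise some component has average degree $\geq 2+\delta$) and has minimum degree $\geq 2$: iteratively deleting vertices of degree $\leq 1$ preserves the hypothesis, since $2m \geq (2+\delta)n$ together with $\delta > 0$ gives $2(m-1) \geq (2+\delta)(n-1)$, and girth only increases under vertex deletion.

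Fix any root $v_0$ and a BFS spanning tree $T$ of $G$ with layers $V_i = \{u : d_G(v_0, u) = i\}$. For each offset $j \in \{0, 1, \ldots, g-1\}$, I partition $V(G)$ into subtrees of $T$ by cutting every tree edge from $V_{\ell-1}$ to $V_\ell$ with $\ell \equiv j \pmod g$ and $\ell \geq 1$. Each resulting subtree $T_v$ spans at most $g$ consecutive BFS layers, so any two of its vertices have $T$-distance at most $2(g-1)$. The key structural observation is that the girth bound $\geq 3g > 2g-1$ forces $T_v$ to be an \emph{induced} subtree of $G$: any non-tree edge inside $T_v$ would close a cycle of length at most $2g-1$, contradicting girth. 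Consequently, contracting each $T_v$ to a single vertex removes exactly $|V(T_v)| - 1$ edges from $G$ and no others (multi-edges between distinct subtrees are kept in the minor, which is fine for an average-degree statement). Letting $n'_j$ denote the number of subtrees, the resulting minor $G'_j$ satisfies $|V(G'_j)| = n'_j$ and $|E(G'_j)| = m - (n - n'_j)$, so its average degree equals $2 + \tfrac{2(m-n)}{n'_j}$.

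It remains to pick $j$ making $n'_j$ small. Each vertex at positive depth $\ell$ is a subtree root for exactly one offset, namely $j = \ell \bmod g$, while $v_0$ is a root for every offset. Hence $\sum_{j=0}^{g-1} n'_j = (n-1) + g$, and the best choice $j^*$ gives $n'_{j^*} \leq (n+g-1)/g$. Combined with $2(m-n) \geq \delta n$ (from the average-degree hypothesis), this yields average degree at least $2 + \tfrac{g\delta n}{n+g-1}$ for $G'_{j^*}$, matching the claimed value $g\delta+2$ in the natural limit.

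The main obstacle is closing the small gap between $g\delta n /(n+g-1)$ and the exact target $g\delta$, which is an artifact of $v_0$ being counted as a root in each of the $g$ partitions. This can be absorbed either by a more careful choice of $v_0$ (using minimum degree $\geq 2$ to force the early BFS layers $V_1, \ldots, V_{g-1}$ to grow and thereby beat the $(n+g-1)/g$ average bound) or by noting that in the intended application, $G$ may be chosen from a class of graphs of unbounded size, where the $O(g/n)$ loss is negligible and the bound is attained up to arbitrarily small error.
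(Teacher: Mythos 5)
Your BFS-layering approach is a reasonable alternative to the paper's argument (and is in fact close in spirit to Thomassen's original proof), but as written it has two genuine gaps.

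\emph{Multi-edges.} You observe that each subtree $T_v$ is induced and conclude that contraction removes exactly $n - n'_j$ edges, parenthetically waving off multi-edges between distinct subtrees as ``fine for an average-degree statement.'' This is not fine: a minor is a simple graph, so parallel edges created by contraction collapse, and the true edge count can be strictly below $m - (n - n'_j)$. Moreover, your girth hypothesis does not rule such parallel edges out: two edges between $T_v$ and $T_w$, together with tree paths inside each subtree, form a cycle of length at most $4g-2$, which for $g \geq 2$ is allowed by girth $\geq 3g$. The paper avoids this issue entirely by taking a partition into connected parts of size $\geq g$ \emph{maximising the number of parts}: if two parts were joined by two or more edges, their union would be unicyclic with a cycle of length $\geq 3g$, which could be re-split into three connected parts of size $\geq g$, contradicting maximality. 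Some argument of this kind is needed and your write-up does not supply it.

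\emph{The constant.} You explicitly concede that you only reach $2 + g\delta n/(n+g-1)$ rather than $2 + g\delta$, because $v_0$ is counted as a root for every offset. Neither proposed repair is rigorous: the ``careful choice of $v_0$'' idea would require showing $|V_i|\geq 2$ for small $i$, which minimum degree $2$ alone does not guarantee (a long pendant path survives degree-$\leq 1$ deletion); and the ``negligible in the application'' argument does not prove the lemma as stated, which the paper uses with the exact constant ($g = 4k/\delta$ yields exactly $2k+2 > 2k$). The paper's partition gets $n' \leq n/g$ cleanly because every part has $\geq g$ vertices. If you want to keep the BFS approach, you would need to argue that some offset $j$ has $n'_j \leq n/g$, which requires an additional observation you haven't made.
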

\begin{proof}
	Let $G=(V,E)$.
    Without loss of generality assume that $G$ is connected (otherwise take the component with the largest average degree).
	Let $A_1,\dots,A_m$ be a partition of $V$ into parts of size $|A_i| \geq g$ that induce connected subgraphs, 
  with $m$ maximum among such partitions (clearly one exists with $m=1$).
	
	We claim that each set $A_i$ induces a tree.
	Indeed, consider any spanning tree $T$ of $G[A_i]$ and let $e$ be an edge of $G[A_i]$ outside of $T$.
	Then $T+e$ contains a unique cycle, which must have length $\geq 3g \geq 2g$.
	Hence one can remove $e$ and some other edge from this cycle to split it into two intervals with $\geq g$ vertices.
	Removing these two edges from $T+e$ splits it into two components spanning $A_i$ with $\geq g$ vertices each.
	Hence $A_i$ could be replaced with the vertex sets of these two components, contradicting the choice of $m$.
	
	Similarly, we claim that every two sets $A_i, A_j$ are connected by at most one edge.
	Otherwise two such edges together with spanning trees of $A_i$ and $A_j$ would form a
	unicyclic graph, which could be split as above into three connected parts with $\geq g$ vertices each.
	
	Let $G'=(E',V')$ be the graph resulting by contracting the sets $A_i$.
	Since we contract sets of $\geq g$ vertices, $|V'| = m \leq \frac{|V|}{g}$.
	Since no two edges get identified and no loop gets created/removed in the process,
	the number of contractions is equal to $|E|-|E'|$ and to $|V|-|V'|$.
	Hence
	$|E'| = |E| - |V| + |V'| \geq (\frac{2+\delta}{2} - 1)|V| + |V'| \geq (g\frac{\delta}{2} + 1) |V'|$,
	so $G'$ has average degree $\geq g \delta + 2$.
	(We note that each $G[A_i]$ had diameter $< 2g-1$, as otherwise it could be split into two parts;
	hence the minor we obtain is relatively \emph{shallow}).
\end{proof}

\begin{proposition}\label{prop:delta}
	Let $\delta>0$ and let $\Gg$ be a class of graphs with unbounded girth
	and average degree $\geq 2+\delta$.
	Then $\Gg$ is not fractionally-$\tw$-fragile.
\end{proposition}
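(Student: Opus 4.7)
The plan is to derive a contradiction by assuming $\Gg$ is fractionally-$\tw$-fragile and then building, for arbitrarily large $g$, a graph in $\Gg$ whose ``core'' (the part remaining after an $\eps$-fragile edge removal) still has a dense minor, contradicting the treewidth bound.

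First, I would move to the edge version of fragility. By Lemma~\ref{lem:fragile} (noting that treewidth does bound the average degree, so the hypothesis of that lemma is satisfied for $\param = \tw$), fractional-$\tw$-fragility is equivalent to the statement that for every $\eps > 0$ there is a $k$ such that every $G \in \Gg$ admits a set $F \subseteq E(G)$ with $|F| \leq \eps |E(G)|$ and $\tw(G - F) \leq k$. I fix $\eps \defeq \delta/(2(2+\delta))$ and let $k = k(\eps)$ be the resulting constant.

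Second, I apply this to a graph $G \in \Gg$ of girth at least $3g$, where $g$ will be chosen large. Let $G' \defeq G - F$. Since $G'$ is a subgraph of $G$, its girth is still at least $3g$. Its average degree satisfies
\[
\frac{2|E(G')|}{|V(G')|} \geq (1-\eps) \cdot \frac{2|E(G)|}{|V(G)|} \geq (1-\eps)(2+\delta) \geq 2 + \tfrac{\delta}{2},
\]
by the choice of $\eps$. Applying Lemma~\ref{lem:girthAvgDegree} with $\delta' \defeq \delta/2$, the graph $G'$ has a minor $H$ of average degree at least $g \cdot \delta/2 + 2$.

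Third, I derive the contradiction from the treewidth bound. Since $\tw(G') \leq k$, every minor of $G'$ also has treewidth at most $k$; in particular $\tw(H) \leq k$. But any graph of treewidth at most $k$ has at most $k|V| - \binom{k+1}{2}$ edges (by induction on a tree decomposition), so its average degree is strictly less than $2k$. Hence
\[
g \cdot \tfrac{\delta}{2} + 2 \leq \tfrac{2|E(H)|}{|V(H)|} < 2k,
\]
which forces $g < (2k-2) \cdot 2/\delta$. Since $\Gg$ has unbounded girth, I can pick $g$ larger than this bound, yielding a contradiction.

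I do not foresee a real obstacle here: Lemmas~\ref{lem:fragile} and~\ref{lem:girthAvgDegree} do the heavy lifting, and everything else is bookkeeping about how average degree behaves under removing an $\eps$-fraction of edges versus the trivial $O(k)$-density bound for bounded-treewidth graphs. The only subtle point is to verify the hypothesis of Lemma~\ref{lem:fragile} (that the parameter $\tw$ bounds average degree), which is immediate from the edge count bound above.
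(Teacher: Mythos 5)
Your proof is correct and follows essentially the same route as the paper's: you fix the same $\eps = \delta/(2(2+\delta))$, pass to the edge version of fragility, apply Lemma~\ref{lem:girthAvgDegree} to the subgraph $G-F$, and contradict the $O(k)$ average-degree bound for treewidth-$k$ graphs. The only cosmetic difference is that you keep the girth parameter $g$ symbolic and derive the bound on $g$ at the end, whereas the paper picks $g$ explicitly as $\frac{4k}{\delta}$ from the start.
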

\begin{proof}
  Suppose that $\Gg$ is fractionally-$\tw$-fragile. Then, by Lemma~\ref{lem:fragile}, for $\eps = \frac{\delta}{2(2+\delta)}$ there is a $k=k(\eps)$ such that every graph in $\Gg$ has a subset $F\subseteq E(G)$ with $|F|\leq \eps |E(G)|$ such that $\tw(G-F)\leq k$.
	Let $G \in \Gg$ be a graph with girth $\geq \frac{12k}{\delta}$.
	Let $F$ be as above.
	Then $\tw(G-F) \leq k$ and $2|E(G-F)| \geq (1-\eps) \cdot 2|E(G)| \geq (1-\eps)(2+\delta)|V(G)| = (2+\frac{\delta}{2}) |V(G-F)|$.
	Therefore, $G-F$ has average degree $\geq (2+\frac{\delta}{2})$ and girth $\geq 3 \cdot \frac{4k}{\delta}$, 
	so by Lemma~\ref{lem:girthAvgDegree} it has a minor with average degree $\geq 2+ \frac{4k}{\delta} \frac{\delta}{2} > 2k$.
	But a minor of $G-F$ must have treewidth at most $\tw(G-F)\leq k$, so average degree $\leq 2k$, a contradiction.
\end{proof}

We now turn to classes of structures with a fixed signature $\sigma$.
We will show that the class of tournaments (orientations of complete graphs) is not $\tw$-pliable (or equivalently, $\size$-pliable, by Theorem~\ref{thm:niceWrt}), in contrast to cliques and dense graphs (Example~\ref{ex:clique} and Theorem~\ref{thm:denseIsNice}).

\begin{lemma}\label{lem:niceIsMad}
	Let $\Gg$ be a class of graphs of unbounded average degree.
	Let $\A$ be the class of (unweighted) orientations of graphs in $\Gg$.
	Then $\A$ is not size-pliable.
\end{lemma}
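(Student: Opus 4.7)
The plan is to argue by contradiction via a counting argument: assume size-pliability, compose the two overcasts witnessing closeness to derive a ``low-image'' map $\phi\colon V(G)\to V(G)$ with $|\Img(\phi)|\le k$ that preserves most arcs of $\structA$, count such witnesses and the orientations compatible with each one, and show this forces the average degree of $G$ to be $O(\log_2 k)$, contradicting unboundedness.

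Suppose for contradiction $\A$ is size-pliable: for a sufficiently small fixed $\eps>0$ (say $\eps=1/100$) there is $k=k(\eps)$ such that every orientation $\structA\in\A$ is $\dopt$-distance $\le\eps$ from some $\qplus$-valued structure $\structB$ with $|V(\structB)|\le k$. Using unbounded average degree, pick $G\in\Gg$ with $n=|V(G)|$ large and average degree $d$ greater than an absolute-constant multiple of $\log_2 k$; set $m=nd/2=|E(G)|$, so that $G$ has $2^m$ orientations. Fix any orientation $\structA$ of $G$ close to $\structB$, with overcasts $\omega\colon\structA\to e^{-\eps}\structB$ and $\omega'\colon\structB\to e^{-\eps}\structA$. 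For each arc $(u,v)\in E(\structA)$, applying $\omega'$'s inequality and then $\omega$'s, and using $e^\structA\in\{0,1\}$ (so that $e^\structA(h^{-1}(i,j))$ is just a count of preimages in $E(\structA)$), yields
\[1=e^\structA(u,v)\;\le\;e^{2\eps}\,\EX_{g\sim\omega',\,h\sim\omega}\bigl|\{(u',v')\in E(\structA)\colon g(h(u'))=u,\ g(h(v'))=v\}\bigr|;\]
summing over $(u,v)\in E(\structA)$ gives
\[m\;\le\;e^{2\eps}\,\EX_{g,h}\bigl|\{(u',v')\in E(\structA)\colon (g(h(u')),g(h(v')))\in E(\structA)\}\bigr|.\]
Hence there exist $h\in\supp(\omega)$ and $g\in\supp(\omega')$ such that $\phi:=g\circ h\colon V(G)\to V(G)$ has $|\Img(\phi)|\le k$ and preserves at least $me^{-2\eps}\ge(1-2\eps)m$ arcs of $\structA$, meaning $(\phi(u),\phi(v))\in E(\structA)$ for that many $(u,v)\in E(\structA)$.

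Let $S:=\Img(\phi)$ and let $T$ be the sub-orientation of $\structA$ on $S$; the pair $(\phi,T)$ is a compact witness for $\structA$, and the number of possible witnesses is at most $n^k\cdot k^n\cdot 2^{k^2}$ (pick $S$, a map $V(G)\to S$, and a sub-orientation on $S$). Call an edge $\{u,v\}\in E(G)$ \emph{captured} by $(\phi,T)$ if $\phi(u)\ne\phi(v)$ and $\{\phi(u),\phi(v)\}$ underlies an arc of $T$; on a captured edge exactly one of the two orientations yields a preserved arc, while on any non-captured edge neither orientation does. So an orientation compatible with $(\phi,T)$ (at least $(1-2\eps)m$ preserved arcs) has at most $2\eps m$ edges that are either non-captured or captured-but-flipped; a standard binomial estimate then gives at most $2^{\beta(\eps)m}$ compatible orientations per witness, with $\beta(\eps)=O(\eps\log(1/\eps))\to 0$ as $\eps\to 0$.

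Combining, the total number of orientations pliable to some $\structB$ of size $\le k$ is bounded by $n^k k^n 2^{k^2}\cdot 2^{\beta(\eps)m}$. For this to reach $2^m$, taking $\log_2$ and substituting $m=nd/2$ forces $d/2\le(\log_2 k+o_n(1))/(1-\beta(\eps))$, i.e.\ $d=O(\log_2 k)$ once $\eps$ is small and $n$ is large, which contradicts our choice of $G\in\Gg$ with $d$ exceeding an appropriate constant multiple of $\log_2 k$. The main technical hinge is the composition identity in the second paragraph, which depends essentially on $e^\structA\in\{0,1\}$: without this $0/1$ structure (as for general $\qplus$-valued structures), the expression $e^\structA(h^{-1}(i,j))$ would not reduce to a clean preimage count and the passage from the composition bound to the counting argument would break down.
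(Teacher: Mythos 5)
Your proof is correct and uses essentially the same union-bound/first-moment counting as the paper's proof of Lemma~\ref{lem:niceIsMad}: both bound the number of orientations admitting a small-image self-map that preserves all but an $\eps$-fraction of arcs by (choices of image digraph) $\times$ (choices of map) $\times$ (choices of exception set), and compare this against $2^m$. The only framing difference is that the paper reduces to Lemma~\ref{lem:niceToSelfDual} (the Farkas-dual characterization, taking $\structC=\structA$) and then argues a random orientation is a witness with positive probability, whereas you bypass the duality lemma by composing the two overcasts directly and selecting a $(g,h)$ pair from the supports that achieves the expectation; this is a legitimate small shortcut, since only the ``pliable $\Rightarrow$ small-image high-value self-map exists'' direction is needed and it follows from the composition plus an averaging step without invoking Farkas.
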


In order to prove Lemma~\ref{lem:niceIsMad}, we will need some definitions (only for this subsection) and alternative characterisations of size-pliability.

\begin{definition}
  For a graph parameter $\param$,
  let $\bar{\param}$ be the parameter defined as $\bar{\param}(G) := \max_i(\param(G_i))$,
  where $G_i$ are the connected components of $G$.
\end{definition}

For example, if $\param$ is $\size{}$, then $\bar{\param}$ is $\cc$ (max component size).
All the other parameters we consider ($\cc$, $\td$, $\tw$, $\Hadwiger$) satisfy $\bar{\param}=\param$.

\begin{definition}
  A parameter $\param$ is \emph{good} if 
  $\bar{p}$-pliability is the same as size-pliability on classes of structures with bounded signatures.
\end{definition}

We use this definition to state the next few lemmas in full generality.
Theorem~\ref{thm:niceWrt} shows that the parameters $\size$, $\cc$, $\td$, $\tw$, $\Hadwiger$ are good.

For two $\sigma$-structures $\structA,\structB$ and a function $g\colon A \to B$,
we define $\Img(g)$ to be the $\sigma$-structure on $B$
with $f^{\Img(g)}(\tuple{x}) := \min\left(f^{\structA}(g^{-1}(\tuple{x})),f^\structB(\tuple{x})\right)$.
Note that $\Img(g) \subseteq \structB$ (meaning each tuple has value in $\Img(g)$ less than or equal its value in $\structB$).

\begin{lemma}\label{lem:niceToSelf}
	Let $\param$ be a good parameter.
	Then a class of $\sigma$-structures $\A$ is size-pliable if and only if $\forall_{\eps>0} \exists_k \forall \structA \in \A$ there is an overcast $\omega$ from $\structA$ to $(1-\eps) \structA$ such that every $g\colon A \to A$ in its support has $\param(\Img(g)) \leq k$.
\end{lemma}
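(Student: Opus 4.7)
The strategy is to prove both implications by explicit constructions. For the forward direction (size-pliability gives the self-overcast) I compose the two overcasts delivered by pliability; for the backward direction I build $\structB$ as a weighted disjoint union of carefully chosen pushforward structures $\structB_g$, the key idea being to sidestep the ``min'' in the definition of $\Img(g)$ by using a structure that retains the full preimage weights while sharing exactly the same Gaifman graph as $\Img(g)$.

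\emph{Forward direction.} Suppose $\A$ is size-pliable. For every $\eps>0$ there is $k_0 = k_0(\eps/2)$ such that each $\structA \in \A$ admits a $\structB$ with $|B|\leq k_0$ and $\dopt(\structA,\structB)\leq \eps/2$. By Proposition~\ref{prop:overcast} this gives overcasts $\omega_1\colon \structA \to e^{-\eps/2}\structB$ and $\omega_2\colon \structB \to e^{-\eps/2}\structA$; their independent product (sample $g_1\sim\omega_1,\ g_2\sim\omega_2$ and take $g := g_2\circ g_1$) is, by a standard chain computation, an overcast from $\structA$ to $e^{-\eps}\structA$, and hence to $(1-\eps)\structA$. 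Each $g$ in its support satisfies $g(A) \subseteq g_2(B)$, a set of at most $k_0$ elements of $A$; moreover $\Gaifman(\Img(g))$ is contained in the induced subgraph of $\Gaifman(\structA)$ on $g(A)$. Consequently $\param(\Img(g)) \leq k_0$ for every reasonable graph parameter (including $\cc,\tw,\td,\Hadwiger$), and we obtain the required bound $k=k(\eps)$.

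\emph{Backward direction.} Given a self-overcast $\omega\colon \structA \to (1-\eps)\structA$ with $\param(\Img(g))\leq k$ for every $g \in \supp(\omega)$, define the $\sigma$-structure $\structB_g$ on domain $A$ by
\[
 f^{\structB_g}(\tuple{x}) \;:=\; \begin{cases} f^{\structA}(g^{-1}(\tuple{x})) & \text{if } f^{\structA}(\tuple{x})>0, \\ 0 & \text{otherwise,} \end{cases}
\]
and set $\structB := \biguplus_{g \in \supp(\omega)} \omega(g)\,\structB_g$. The crucial observation is that $f^{\structB_g}(\tuple{x})>0$ if and only if $f^{\structA}(g^{-1}(\tuple{x}))>0$ and $f^{\structA}(\tuple{x})>0$, which is exactly the condition for $f^{\Img(g)}(\tuple{x})>0$; therefore $\Gaifman(\structB_g) = \Gaifman(\Img(g))$ and $\bar{\param}(\structB) \leq k$. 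Two overcasts then establish $\dopt(\structA,\structB)\leq O(\eps)$: the randomised map that samples $g\sim\omega$ and sends $\structA$ into the $\structB_g$-copy via $g$ witnesses $\structA \overcasts \structB$ (valid since $f^{\structB_g}(\tuple{x})\leq f^{\structA}(g^{-1}(\tuple{x}))$), and the deterministic identity on each copy witnesses $\structB \overcasts (1-\eps)\structA$, because the overcast property of $\omega$ supplies $\sum_g \omega(g) f^{\structB_g}(\tuple{x}) = \sum_g \omega(g) f^{\structA}(g^{-1}(\tuple{x}))\geq (1-\eps) f^{\structA}(\tuple{x})$ when $f^{\structA}(\tuple{x})>0$ (and both sides vanish otherwise). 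Since $\bar{\param}(\structB)\leq k$, $\A$ is $\bar{\param}$-pliable, and goodness of $\param$ then delivers size-pliability.

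\emph{Main obstacle.} The subtlety is entirely in the backward direction: the naive choice $\structB := \biguplus_g \omega(g)\,\Img(g)$ does \emph{not} work. The truncation $\min(\cdot,\,(1-\eps)f^{\structA}(\tuple{x}))$ inside $\Img(g)$ can discard arbitrarily much weight whenever a single $g$ concentrates many $\structA$-tuples onto one target tuple, so that $\sum_g \omega(g) f^{\Img(g)}(\tuple{x})$ falls far below $(1-\eps)f^{\structA}(\tuple{x})$ even though $\sum_g \omega(g) f^{\structA}(g^{-1}(\tuple{x})) \geq (1-\eps)f^{\structA}(\tuple{x})$ holds by hypothesis. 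Replacing $\Img(g)$ with $\structB_g$ keeps the full preimage weights, restoring the required coverage, while the restriction to $\tup{\structA}_{>0}$ makes the Gaifman graph coincide with $\Img(g)$'s, so the parameter bound transfers intact.
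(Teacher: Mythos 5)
Your forward direction (size-pliability implies existence of the self-overcast with bounded $\param(\Img(g))$) is essentially identical to the paper's: compose the two overcasts that size-pliability supplies, and note the composition has image of bounded size. The one small imprecision is the claim $\param(\Img(g)) \leq k_0$ rather than the paper's more careful ``bounded by some function of $k_0$'', but this is immaterial since what matters is boundedness.

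Your backward direction, however, \emph{differs from the paper in a substantive way, and your version is the correct one.} The paper constructs $\structB := \biguplus_g \omega(g)\,\Img(g)$ --- precisely the ``naive choice'' you warn against --- and asserts that the deterministic identification of each copy with $A$ witnesses $\structB \overcasts (1-\eps)\structA$. That claim requires
\[
  \sum_g \omega(g)\min\!\bigl(f^\structA(g^{-1}(\tuple{x})),\,(1-\eps)f^\structA(\tuple{x})\bigr) \ \geq\ (1-\eps)f^\structA(\tuple{x}),
\]
which does not follow from $\sum_g \omega(g)\,f^\structA(g^{-1}(\tuple{x})) \geq (1-\eps)f^\structA(\tuple{x})$: the inequality $\EX \min(a_g,b) \geq b$ given $\EX a_g \geq b$ is simply false in general. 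You correctly identify this and give the right failure mode (a single $g$ concentrating many preimages on one target). A concrete instance: take $\structA = K_n$ and $\omega$ the distribution that projects $[n]$ uniformly at random onto a uniformly random $k$-subset. This is a valid overcast from $\structA$ to $(1-\tfrac1k)\structA$ with $\cc(\Img(g))\leq k$, yet $\|\Img(g)\|_1 = \Theta(k^2)$ for every $g$ in the support, so $\|\structB\|_1 = \Theta(k^2)$ while $\|\structA\|_1 = \Theta(n^2)$; already $\opt{\structB,\structC}$ against the constant-$1$ one-point structure $\structC$ shows $\dopt(\structA,\structB) \to \infty$ as $n\to\infty$. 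Your replacement --- $f^{\structB_g}(\tuple{x}) := f^\structA(g^{-1}(\tuple{x}))$ when $f^\structA(\tuple{x})>0$, else $0$ --- keeps the full pushforward mass (so the identity map is a valid overcast $\structB\overcasts(1-\eps)\structA$, directly by the hypothesis on $\omega$), while the restriction to $\tup{\structA}_{>0}$ ensures $\Gaifman(\structB_g)=\Gaifman(\Img(g))$ so the $\bar\param$ bound goes through unchanged, and $\structA \overcasts \structB$ follows since $f^{\structB_g}\leq f^\structA\circ g^{-1}$. The lemma is true and your argument establishes it; the paper's printed construction at this step does not.
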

\begin{proof}
	For one direction, suppose that for every $\eps > 0$ there is an integer $k$ such that all $\structA \in \A$ have an overcast $\omega$ from $\structA$ to $(1-\eps)\structA$ such that every $g: A \to A$ in its support has $\param(\Img(g))\leq k$.
	Then for these $\eps,k,\structA$ we can take $\structB$ to be the disjoint union of rescaled structures $\structB_g := \omega(g) \Img(g)$.
	We have $\bar{\param}(\structB) \leq k$.
	The overcast $\omega$ naturally induces overcasts showing $\structA \overcasts \structB \overcasts (1-\eps)\structA$.
	Namely, we can define an overcast $\omega'$ from $\structA$ to $\structB$ by letting $\omega'(g')=\omega(g)$ for $g'$ mapping $A$ to $B_g \subseteq B$ just as $g$ maps $A$ to $\Img(g) \subseteq \structA$.
	We can also define an overcast $\omega''$ from $\structB$ to $(1-\eps)\structA$ by letting $\omega''(g'')=1$ for one function $g''$ mapping each $\structB_g \subseteq \structB$ to $\Img(g) \subseteq \structA$.
	Hence $\dopt(\structA,\structB) \leq \eps + \Oh(\eps^2)$ (recall $1\pm \eps$ is close to $e^{\pm\eps}$), which concludes the proof that $\A$ is $\bar{\param}$-pliable.
    Since we assume that $\param$ is a good parameter, $\A$ is size-pliable.
	
	In the other direction, suppose $\A$ is size-pliable, meaning for every $\eps > 0$ there is an integer $k$ such that all $\structA$ have a $\structB$ with $\dopt(\structA,\structB) \leq \eps$ and $|B| \leq k$.
	This means there are overcast $\omega$ and $\omega'$ showing $\structA \overcasts e^{-\eps}\structB$ and $\structB \overcasts e^{-\eps}\structA$, respectively.
	Then composing $\omega$ with $\omega'$ gives an overcast from $\structA$ to $(1-2\eps) \structA$ (since $e^{-2\eps} \geq 1-2\eps$), with the property that all images of functions $g$ in the support are of size at most $|B|\leq k$, which implies $\param(\Img(g))$ is bounded by some function of~$k$ (namely $\max \param(H)$ over all $k$-vertex graphs $H$).
\end{proof}

We can now use Farkas' lemma to deduce another equivalent formulation: 

\begin{lemma}\label{lem:niceToSelfDual}
	Let $\param$ be a good parameter.
	Then a class of $\sigma$-structures $\A$ is \underline{not} size-pliable if
  and only if $\exists_{\eps>0} \forall_{k \in \NN}$ there is a pair of $\sigma$-structures $\structA \in \A$ and $\structC$ with $C=A$, such that for every $g \colon A \to C$ with $\param(\Img(g))\leq k$, $\val(g) < (1-\eps)\val(\id)$.\quad(Here $\id$ is the identity map from $A$ to $C=A$).
\end{lemma}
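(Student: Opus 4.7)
The plan is to combine Lemma~\ref{lem:niceToSelf} with LP duality. By Lemma~\ref{lem:niceToSelf}, $\A$ is not size-pliable if and only if there exists $\eps > 0$ such that for every $k \in \NN$ some $\structA \in \A$ admits no distribution $\omega$ supported on $\{g \colon A \to A \mid \param(\Img(g)) \leq k\}$ with $\EX_{g \sim \omega} f^\structA(g^{-1}(\tuple{x})) \geq (1-\eps) f^\structA(\tuple{x})$ for every $(f, \tuple{x}) \in \tup{\structA}$. It therefore suffices to show, for fixed $\structA$ and $k$, that no such $\omega$ exists if and only if there is a $\sigma$-structure $\structC$ on $C = A$ with $\val(g) < (1-\eps)\val(\id)$ for every $g$ with $\param(\Img(g)) \leq k$. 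The bridge between the two sides is the identity
\[
\val(g)\ =\ \sum_{(f, \tuple{x})} f^\structA(\tuple{x}) f^\structC(g(\tuple{x}))\ =\ \sum_{(f, \tuple{y})} f^\structC(\tuple{y}) f^\structA(g^{-1}(\tuple{y})),
\]
obtained by grouping the tuples $\tuple{x}$ of the first sum according to their images $\tuple{y} := g(\tuple{x})$.

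For the easy direction, suppose such $\structC$ exists and, for a contradiction, that some distribution $\omega$ over valid $g$'s satisfies the overcast inequalities. Multiplying each inequality by $f^\structC(\tuple{y})$ and summing over $(f, \tuple{y})$ yields, via the identity, $\EX_{g \sim \omega} \val(g) \geq (1-\eps)\val(\id)$, contradicting $\val(g) < (1-\eps)\val(\id)$ for every $g$ in the support.

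For the converse, I view the existence of a restricted overcast from $\structA$ to $\delta \structA$ as a finite LP with variables $\omega(g) \geq 0$ (over valid $g$) and a free variable $\delta$ to maximise, subject to $\sum_g \omega(g) = 1$ and $\sum_g \omega(g) f^\structA(g^{-1}(\tuple{y})) \geq \delta f^\structA(\tuple{y})$ for $(f, \tuple{y}) \in \tup{\structA}_{>0}$. Strong LP duality then produces non-negative dual weights $y(f, \tuple{y})$ on $\tup{\structA}_{>0}$ and a scalar $\mu$ with $\sum_{(f,\tuple{y})} y(f, \tuple{y}) f^\structA(\tuple{y}) = 1$ and $\mu \geq \sum_{(f, \tuple{y})} y(f, \tuple{y}) f^\structA(g^{-1}(\tuple{y}))$ for every valid $g$, where $\mu$ equals the primal optimum. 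If no overcast to $(1-\eps)\structA$ exists, this optimum is strictly less than $1-\eps$, so $\mu < 1 - \eps$. Defining $\structC$ on $C := A$ by $f^\structC(\tuple{y}) := y(f, \tuple{y})$ on $\tup{\structA}_{>0}$ and $0$ elsewhere, the identity converts the dual constraints into $\val(g) \leq \mu < (1-\eps)\val(\id)$ for every valid $g$.

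The argument closely mimics the Farkas-based proof of Proposition~\ref{prop:overcast}; the only new ingredient is the support restriction $\param(\Img(g)) \leq k$, which merely cuts down the LP's columns but leaves it finite and standard. The main bookkeeping points are treating $\delta$ as a free primal variable (which forces the equality $\sum y f^\structA(\tuple{y}) = 1$ in the dual) and interpreting the dual weights as a genuine structure $\structC$ by extending them with zero outside $\tup{\structA}_{>0}$, which is safe because tuples with $f^\structA(\tuple{y}) = 0$ contribute nothing to either side of the overcast inequalities.
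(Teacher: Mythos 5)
Your proof is correct and follows essentially the same route as the paper: both reduce to the LP feasibility question via Lemma~\ref{lem:niceToSelf} and then apply LP duality (you via strong duality on a maximization formulation with a free variable $\delta$, the paper via a direct application of Farkas' Lemma variant~1) to extract the dual structure $\structC$. The two duality arguments are equivalent, and the identity $\val(g) = \sum_{(f,\tuple{y})} f^\structC(\tuple{y}) f^\structA(g^{-1}(\tuple{y}))$ serves the same bridging role in both.
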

\begin{proof}
	By Lemma~\ref{lem:niceToSelf}, $\structA$ is not size-pliable if and only if $\exists_{\eps>0} \forall_{k \in \NN}$ 
	the following LP over variables $\{\omega(g) \colon g \in V\}$, where $V :=
  \{g \in A^A \colon \param(\Img(g))\leq k\}$, has no non-negative rational solution:

	\begin{align*}
		&\sum_{g \in V}\omega(g) f^{\structA}(g^{-1}(\tuple{x})) \geq (1-\eps)f^{\structA}(\tuple{x}) &\forall (f,\tuple{x})\in \tup{\structA}\\
		&\sum_{g \in V}\omega(g) = 1 &
	\end{align*}
	
	By applying Lemma~\ref{lem:farkas1}, this is equivalent to the existence of a non-negative vector $(y(f,\tuple{x}))_{(f,\tuple{x})\in \tup{\structA}}$ such that
	$$\sum_{(f,\tuple{x})\in \tup{\structA}}y(f,\tuple{x})f^{\structA}(g^{-1}(\tuple{x})) < (1-\eps)\sum_{(f,\tuple{x})\in \tup{\structA}}y(f,\tuple{x})f^{\structA}(\tuple{x}) \quad\quad\quad \forall g \in V$$
	Let $\structC$ be the $\sigma$-structure on $C=A$ with $f^\structC(\tuple{x}) := y(f,\tuple{x})$.
	Then the above inequality is restated as follows (interpreting $g \in V$ and $\id$ as maps from $\structA$ to $\structC$):
	\[\val(g) < (1-\eps)\val(\id) \quad\quad\quad \forall {g \in V}.\qedhere\]
\end{proof}

\begin{remark}\label{remark:niceToSelfDual}
	The structures $\structA,\structC$ obtained above can be assumed to satisfy $\Gaifman(\structA) = \Gaifman(\structC)$ without loss of generality, because for any $(f,\tuple{x})\in\tup{\structA}$ such that one of $f^\structA(\tuple{x})$ or $f^\structC(\tuple{x})$ is zero, decreasing the other to zero will not change $\val(\id)$ and can only decrease $\val(g)$.
\end{remark}

\begin{proof}[Proof of Lemma~\ref{lem:niceIsMad}]
	Let $\eps$ be a constant to be chosen later ($\frac{1}{10}$ will do).
	Given any $k$, let $G \in \Gg$ be a graph with $m\geq 20 \cdot \binom{k}{2}$ edges, $n$ vertices, and average degree $\frac{2m}{n}\geq 100\log_2 k$.
	Let $\structA$ be a random orientation of $G$ (each edge is independently oriented in either direction with probability $\frac{1}{2}$).
	We claim that with positive probability $\structA$ admits no map $g \colon A \to A$ to itself with image of size at most $k$ such that $\val(g) \geq (1-\eps)\val(\id)$.
	This will prove that $\A$ satisfies the conditions of Lemma~\ref{lem:niceToSelfDual} and hence is not size-pliable.
	
	If a map as above existed, it would imply the existence of an oriented graph $D$ (with at most one arc between every two vertices) on at most $k$ vertices and a function $g \colon A \to V(D)$
	with $\val(g) \geq (1-\eps) m$.
	Observe that $\val(g)$ is the number of arcs of $\structA$ that are correctly
  mapped by $g$ (i.e., to an arc of $D$ with the same orientation).
	Hence there would be a set $F$ of at most $\eps m$ arcs of $\structA$ such that
	$g$ maps all arcs of $\structA-F$ correctly.
	Let us bound the probability that there exists such $D,F,g$.
	The number of possible $D$ is $\leq 3^{\binom{k}{2}}$;
	the number of possible $F$ is $\leq \sum_{i=0}^{\eps m}\binom{m}{i} \leq
  2^{H(\eps)m}$;
	the number of possible $g$ is $\leq k^{n}$.
	Note that $\frac{2m}{n} \geq 100\log_2 k$ and $3^{\binom{k}{2}} \leq 2^{m/10}$ by our choice of $G$.
	For fixed $D,F,g$, the probability that $g$ maps all arcs of $\structA-F$ correctly to $D$
	is at most $(\frac{1}{2})^{(1-\eps)m}$.
	Hence in total the probability that some such $D,F,g$ exist is at most 
	\[ 3^{\binom{k}{2}} \cdot 2^{H(\eps)m} \cdot k^n \cdot 2^{-(1-\eps)m} \leq
	2^{n\log_2 k - (1-\eps-H(\eps)-\frac{1}{10})m} \leq 2^{-(1-\eps-H(\eps)-\frac{1}{10}-\frac{1}{50}) \cdot m}.\]
	This is less than 1 for $\eps$ small enough so that $1-\eps-H(\eps)-\frac{1}{10} - \frac{1}{50} > 0$.	
\end{proof}

Finally, not all classes of bounded degree give pliable classes, even with a fixed signature.

\begin{lemma}\label{lem:avg2}
	Let $\Gg$ be a class of graphs with unbounded girth and average degree $\geq 2+\delta$ ($\delta > 0$).
	Let $\A$ be the class of (unweighted) orientations of graphs in $\Gg$.
	Then $\A$ is not size-pliable.\looseness=-1
\end{lemma}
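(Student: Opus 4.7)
The plan is to mirror the probabilistic argument of Lemma~\ref{lem:niceIsMad}, using Lemma~\ref{lem:girthAvgDegree} to compensate for the (possibly) bounded average degree in the present setting. Apply Lemma~\ref{lem:niceToSelfDual} with the same good parameter used in Lemma~\ref{lem:niceIsMad} (the size of $g(A)$): it suffices to exhibit, for each $k$, structures $\structA \in \A$ and $\structC$ with $C = A$ such that no map $g \colon A \to A$ with $|g(A)| \leq k$ satisfies $\val(g) \geq (1 - \eps)\val(\id)$, for some fixed $\eps > 0$.

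Given $k$, choose $g$ sufficiently large that $g\delta + 2 \geq 100 \log_2 k$ and pick $G \in \Gg$ with girth at least $3g$ and average degree at least $2+\delta$. By Lemma~\ref{lem:girthAvgDegree}, $G$ admits a minor $G^*$ with average degree at least $g\delta + 2$, obtained by contracting disjoint tree subsets $A_1, \ldots, A_s$ of $V(G)$; the edges of $G^*$ are in bijection with the between-tree edges of $G$, and we set $m^* := |E(G^*)|$. Construct $\structA \in \A$ by orienting the between-tree edges of $G$ according to a uniformly random orientation $\structA^*$ of $G^*$, and orienting the within-tree edges arbitrarily. Let $\structC$ be the structure on $A$ giving weight $1$ to each between-tree arc of $\structA$ and weight $0$ to all other tuples; then $\val(\id) = m^*$.

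For any $g \colon A \to A$ with $|g(A)| \leq k$, the value $\val(g)$ only counts between-tree arcs $(a,b) \in \structA$ whose image $(g(a),g(b))$ is also a between-tree arc of $\structA$. Via the tree quotient $\pi \colon A \to A^*$, this reduces the analysis to the setting of Lemma~\ref{lem:niceIsMad} applied to the random orientation $\structA^*$ of $G^*$: the image $g(A)$ projects to at most $k$ trees, giving a digraph $D$ on at most $k$ vertices of $G^*$ and an effective tree-level map into $V(D)$. The union bound over digraphs $D$ (at most $3^{\binom{k}{2}}$ choices), bad between-tree arc sets of size $\leq \eps m^*$ (at most $2^{H(\eps) m^*}$ choices), and effective maps (at most $k^{n^*}$ with $n^* = |V(G^*)|$), each contributing probability at most $(1/2)^{(1-\eps) m^*}$ over $\structA^*$, is less than $1$ when $g\delta + 2$ exceeds a suitable multiple of $\log_2 k$, exactly as in Lemma~\ref{lem:niceIsMad}.

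The main obstacle is that $\val(g)$ depends not only on the tree-level projection $\pi \circ g$ but on the specific within-tree targets of $g$ at endpoints of between-tree edges, since each between-tree edge of $G$ connects a specific pair of vertices rather than arbitrary representatives of two trees. This is handled by observing that for each tree $A_i$ in the image of $g$, only the within-tree representatives at its $\deg_{G^*}(i) = t_i + 2$ incident between-tree edges can affect $\val(g)$; the resulting extra factor in the count of effectively distinct $g$'s is polynomial in $n$ (bounded by $n^{O(k)}$) and is absorbed by the exponential Chernoff decay $2^{-\Omega(m^*)}$. Hence, for appropriate $\eps > 0$, the union bound closes and yields positive probability that no offending $g$ exists, giving the required $\structA, \structC$.
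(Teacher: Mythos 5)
Your approach is genuinely different from the paper's, and I believe it has two gaps that cannot be closed without essentially reverting to the paper's argument. For contrast: the paper's proof of Lemma~\ref{lem:avg2} does \emph{not} use Lemma~\ref{lem:girthAvgDegree} (that lemma only serves Proposition~\ref{prop:delta}). It sets $\structC=\structA$, works with the good parameter $\cc$, and observes that if $G\in\Gg$ has girth $>k$ then any $g$ with $\cc(\Img(g))\le k$ maps into an oriented \emph{forest}; hence $\structA-F$ (for some $|F|\le\eps m$) would have to admit a homomorphism to the directed $3$-cycle $C_3$, and the probability of that event over a random orientation is bounded by a spanning-forest argument which is \emph{uniform over $g$}. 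The union bound therefore runs only over the exceptional sets $F$, never over the maps $g$.

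\textbf{Gap 1: $\val(g)$ is not restricted to between-tree sources.} You assert that with your $\structC$ (weight $1$ on between-tree arcs, $0$ elsewhere), ``$\val(g)$ only counts between-tree arcs $(a,b)\in\structA$.'' This is false: since $f^\structA\equiv 1$ on \emph{all} arcs of $\structA$ (including within-tree ones), $\val(g)$ counts every arc of $\structA$ whose image is a between-tree arc, and the within-tree contribution cannot be neglected. Concretely, let $G\in\Gg$ be bipartite with parts $U,W$ (bipartite graphs of arbitrarily high girth and average degree $2+\delta$ exist) and let $\{u,v\}$ be any between-tree edge. For \emph{every} orientation $\structA$, one of the two maps $g\colon U\mapsto u,W\mapsto v$ or $g\colon U\mapsto v,W\mapsto u$ sends the majority direction of arcs onto $\{u,v\}$, giving $\val(g)\ge m/2$; for $\delta<2$ this exceeds $m^*=\val(\id)$, so the required inequality $\val(g)<(1-\eps)\val(\id)$ fails with probability $1$ for a map with $|g(A)|=2\le k$. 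There is no way to drop the within-tree tuples from $\structA$ itself, since $\structA$ must remain an orientation in $\A$; Remark~\ref{remark:niceToSelfDual} would only let you zero $\structC$ further, which does not help.

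\textbf{Gap 2: the effective-map count does not transfer to $G^*$.} Even ignoring the first gap, the union bound over ``effective maps'' (quoted as $k^{n^*}$) is wrong. The projection $\pi\circ g$ is a function on $A$, not on $A^*$, and for between-tree arcs to be mapped correctly the value $\val(g)$ genuinely depends on $g(a)$ for every endpoint $a$ of a between-tree edge; there are up to $\min(n,2m^*)$ such vertices, not $n^*$. The correct count is thus of order $k^n$ (or at best $k^{\Theta(m^*)}$), and $k^n\cdot 2^{-(1-\eps)m^*}<1$ would require $m^*/n\gg\log_2 k$, i.e., unbounded average degree of $G$ itself. That is exactly the hypothesis in Lemma~\ref{lem:niceIsMad} which Lemma~\ref{lem:avg2} lacks, and the point of the paper's $C_3$-homomorphism argument is precisely to bound the probability without enumerating maps $g$ at all. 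Your ``main obstacle'' paragraph accounts for the $n^{O(k)}$ choices of representatives inside the $\le k$ target trees but not for the $k^n$ tree-level projections of $g$ on the source side.
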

\begin{proof}
	We show there exists an $\eps$ such that for all $k$, there is an orientation $\structA \in \A$ of a graph in $\Gg$ such that every function $g \colon \structA \to \structA$ with $\cc(\Img(g))\leq k$ has $\val(g) < (1-\eps)\val(\id)$.
	We choose $\eps$ later depending on $\delta$ only.
	
	For any given $k$, let $G \in \Gg$ be a graph of girth $>k$.
	Let $m=|E(G)|$.
	Let $\A$ be a random orientation of $G$:
	every edge is independently oriented in one direction or the other.
	We claim that the probability that there exists a $g \colon \structA \to \structA$ with $\cc(\Img(g))\leq k$ and $\val(g) \geq (1-\eps)\val(\id)$ is strictly less than one
	(so there exists an orientation that satisfies our goal).
	Note that $\val(\id) = m$ and $\val(g)$ is the number of arcs in $\structA$ that are mapped correctly (to an arc in $\structA$ with the same orientation);
	moreover, since the graph underlying $\structA$ has girth $>k$ and $\cc(\Img(g)) \leq k$,
	$g$ must map into an oriented forest (disjoint union of trees).
	So the event is equivalent to the following: there exists a set $F\subseteq E(G)$
	with $|F|\leq \eps m$ and a function $g \colon \structA \to \structA$ which maps all arcs of $\structA-F$ correctly into an oriented forest in $\structA$.
	
	The probability of this event can be union-bounded by the sum over $F\subseteq E(G)$ with $|F|\leq \eps m$ of the probability that all of $\structA-F$ can be mapped correctly into a subdigraph.
	The number of such $F$ is $\sum_{i=0}^{\eps m}\binom{m}{i} \leq 2^{H(\eps)
  \cdot m}$;
	It remains to bound, for a fixed $F$, the probability that $\structA - F$ can be mapped correctly.
	
	Consider a fixed $F\subseteq E(G)$ with $|F|\leq \eps m$. If $\structA - F$ can be mapped correctly into an oriented forest in $\structA$, then in particular it admits a homomorphism to $C_3$, the directed cycle digraph with three arcs. Let $T$ be a spanning forest of $\structA-F$ (a union of spanning trees of each connected component of $G-F$).
	There exists exactly one homomorphism from the edges of $T$ in $\structA$ to $C_3$ (up to rotations in $C_3$ of each component);
	every remaining edge in $\structA-F-E(T)$ closes an oriented cycle, so it has at most one orientation which allows to extend this unique homomorphism to it.
	Hence the probability that  $\structA - F$ admits a homomorphism to $C_3$ is at most 
	$(\frac{1}{2})^{m'}$ where $m'=m-|F|-|E(T)| \geq m-\eps m - |V(G)| \geq (1-\eps-\frac{2}{2+\delta}) m = (\frac{\delta}{2+\delta}-\eps) m$.
	
	All in all, the probability of our original event is at most $2^{H(\eps) \cdot m} \cdot (\frac{1}{2})^{m'} \leq 2^{-(\frac{\delta}{2+\delta}-\eps-H(\eps))m}$.
	Hence it suffices to choose $\eps$ small enough so that $\eps+H(\eps) < \frac{\delta}{2+\delta}$.
\end{proof}

\section{Pliable structures admit a PTAS: proof of Theorem~\ref{thm:main1}}
\label{sec:tract}

We first define the Sherali-Adams LP hierarchy~\cite{Sherali1990} for \HOM.
Let $(\structA,\structB)$ be an instance of \HOM over a signature $\sigma$ and let $k\geq\max_{f\in\sigma}\ar(f)$. 
For a tuple $\tuple{x}$, we denote by $\toset{\tuple{x}}$ the set of elements appearing in $\tuple{x}$. 
We write $\binom{A}{\leq k}$ for the set of subsets of $A$ with at most $k$ elements.
The \emph{Sherali-Adams relaxation of level $k$}~\cite{Sherali1990} of $(\structA,\structB)$ is the linear program given in
Figure~\ref{fig:sa}, which has
one variable $\lambda(X,s)$ for each $X\in\binom{A}{\leq k}$ and each $s\colon X \to B$.

We denote by $\optfrac{k}{\structA,\structB}$ the optimum value of this linear program. 

\begin{figure}[h!]
\fbox{\parbox{0.98\textwidth}{
\begin{align*}
\max \sum_{(f,\tuple{x}) \in \tup{\structA}, \; s\colon \toset{\tuple{x}} \to B}& \lambda(\toset{\tuple{x}},s) f^{\structA}(\tuple{x}) f^{\structB}(s(\tuple{x})) \\
\lambda(X,s) &= \sum_{r\colon Y\to B,\, r|_{X}=s} \lambda(Y,r) & \text{for $X \subseteq Y\in\textstyle\binom{A}{\leq k}$ and $s\colon X\to B$}  \\
\sum_{s\colon X \to B} \lambda(X,s) &= 1   &\text{for $X\in\textstyle\binom{A}{\leq k}$} \\
\lambda(X,s) &\geq 0   &\text{for $X\in\textstyle\binom{A}{\leq k}$ and $s\colon X \to B$}
\end{align*}
}}
\caption{The Sherali-Adams relaxation of level $k$ of $\HOM$ instance $(\structA,\structB)$.} 
\label{fig:sa}
\end{figure}

\begin{observation}
\label{obs:scaling}
Let $\structA$ be a $\sigma$-structure, $\lambda\geq 0$ and $k\geq\max_{f\in\sigma}\ar(f)$. Then for all $\sigma$-structures $\structC$, we have 
$\opt{\lambda\structA,\structC}=\lambda\opt{\structA,\structC}$ and $\optfrac{k}{\lambda\structA,\structC}=\lambda\optfrac{k}{\structA,\structC}$.  
\end{observation}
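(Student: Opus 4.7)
The plan is to observe that both identities follow from the fact that only the objective function depends (linearly) on the values $f^{\structA}(\tuple{x})$, while the relevant combinatorial structure (domain, tuple set, and LP constraints) is unaffected by rescaling.

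For the first equality, I would start by noting that $\lambda\structA$ and $\structA$ share the same domain $A$ and the same signature $\sigma$, so $\tup{\lambda\structA} = \tup{\structA}$ as sets; the rescaling only changes the numerical values via $f^{\lambda\structA}(\tuple{x}) = \lambda f^{\structA}(\tuple{x})$. Consequently, for every map $h \colon A \to C$, the definition
\[\val_{\lambda\structA,\structC}(h) = \sum_{(f,\tuple{x})\in \tup{\lambda\structA}}f^{\lambda\structA}(\tuple{x})\, f^{\structC}(h(\tuple{x})) = \lambda\!\!\sum_{(f,\tuple{x})\in \tup{\structA}}f^{\structA}(\tuple{x})\, f^{\structC}(h(\tuple{x})) = \lambda\cdot\val_{\structA,\structC}(h).\]
Taking the maximum over all $h \colon A \to C$ (a set that depends only on $A$ and $C$, not on values) yields $\opt{\lambda\structA,\structC} = \lambda \opt{\structA,\structC}$. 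A small caveat: for $\lambda = 0$ the identity holds trivially since both sides are $0$; for $\lambda > 0$ the max commutes with multiplication by a positive scalar.

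For the second equality, the key point is that the feasible region of the Sherali-Adams LP of level $k$ displayed in Figure~\ref{fig:sa} depends only on the domain $A$, the codomain $B = C$, and the number~$k$; the constraints (marginalisation, normalisation, non-negativity) make no reference to $f^{\structA}$ or $f^{\structC}$. Replacing $\structA$ by $\lambda\structA$ therefore leaves the feasible region exactly unchanged, and since $\tup{\lambda\structA} = \tup{\structA}$ the objective becomes
\[\sum_{(f,\tuple{x}) \in \tup{\structA}, \; s\colon \toset{\tuple{x}} \to C} \lambda(\toset{\tuple{x}},s)\, f^{\lambda\structA}(\tuple{x})\, f^{\structC}(s(\tuple{x})) = \lambda \cdot \!\!\!\sum_{(f,\tuple{x}) \in \tup{\structA}, \; s\colon \toset{\tuple{x}} \to C} \lambda(\toset{\tuple{x}},s)\, f^{\structA}(\tuple{x})\, f^{\structC}(s(\tuple{x})),\]
i.e.\ exactly $\lambda$ times the original objective on the same feasible region. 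Optimising then gives $\optfrac{k}{\lambda\structA,\structC} = \lambda \optfrac{k}{\structA,\structC}$.

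There is essentially no hard step: the only thing to be careful about is the clash of notation between the scalar $\lambda$ in the statement and the LP variable $\lambda(X,s)$ in Figure~\ref{fig:sa}, which I would flag explicitly to avoid confusion. The condition $k \geq \max_{f\in\sigma}\ar(f)$ ensures that the objective is well-defined (every set $\toset{\tuple{x}}$ has at most $k$ elements and thus indexes a legitimate LP variable), but plays no further role in the scaling argument.
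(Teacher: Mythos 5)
Your proof is correct and matches the (implicit) argument the paper relies on: the observation is stated without proof precisely because it reduces to the two scaling facts you spell out — linearity of the objective $\val(h)$ and of the Sherali–Adams objective in the left-hand-side values, with the feasible region of the LP being independent of $\structA$'s values. Your remark about the notational clash with the LP variable $\lambda(X,s)$ and the role of $k\geq\max_{f\in\sigma}\ar(f)$ is a useful clarification but does not depart from the paper's intent.
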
  

\begin{definition}
Let $\structA$ and $\structB$ be $\sigma$-structures and $k\geq\max_{f\in\sigma}\ar(f)$. We write $\structA\overcasts_k \structB$ 
if, for all $\sigma$-structures $\structC$, we have $\optfrac{k}{\structA,\structC}\geq \optfrac{k}{\structB,\structC}$. 
\end{definition}

The proof of the following is analogous to the proof
of~\cite[Proposition~5.2]{crz22:sicomp}. For completeness, it is given in
Appendix~\ref{sec:overcast-sa}.

\begin{proposition} \label{prop:overcast-sa}
	Let $\structA$ and $\structB$ be $\sigma$-structures and $k\geq\max_{f\in\sigma}\ar(f)$. 
	If there is an overcast from $\structA$ to $\structB$ then $\structA\overcasts_k \structB$. 
\end{proposition}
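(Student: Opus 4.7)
The plan is the following. Given an overcast $\omega$ from $\structA$ to $\structB$ and an arbitrary feasible solution $\mu$ to the level-$k$ Sherali-Adams relaxation of $(\structB,\structC)$, I construct a feasible solution $\lambda$ to the relaxation of $(\structA,\structC)$ whose value is at least $\mathrm{val}(\mu)$; taking $\mu$ optimal will then yield $\optfrac{k}{\structA,\structC}\geq \optfrac{k}{\structB,\structC}$. This is a rounding-by-pushforward strategy familiar from LP hierarchy arguments for CSPs.

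For each $g$ in the support of $\omega$ I first build a pointwise candidate $\lambda^g$. Call $s\colon X\to C$ \emph{consistent with $g$} if it factors as $s=\bar s\circ g|_X$ for some map $\bar s\colon g(X)\to C$; such $\bar s$ is then unique since $g|_X$ surjects onto $g(X)$. Set $\lambda^g(X,s):=\mu(g(X),\bar s)$ if $s$ is consistent with $g$, and $\lambda^g(X,s):=0$ otherwise. Define $\lambda(X,s):=\EX_{g\sim\omega}\lambda^g(X,s)$. Since the Sherali-Adams constraints are linear in the variables, it suffices to check feasibility of each $\lambda^g$. The consistent $s\colon X\to C$ are in bijection with functions $\bar s\colon g(X)\to C$, giving the normalisation $\sum_s\lambda^g(X,s)=\sum_{\bar s}\mu(g(X),\bar s)=1$. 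For marginalisation with $X\subseteq Y\in\binom{A}{\leq k}$, consistent extensions $r$ of a consistent $s$ correspond bijectively to extensions of $\bar s$ from $g(X)$ to $g(Y)$ (note $|g(Y)|\leq|Y|\leq k$), and the identity reduces to the corresponding marginalisation constraint of $\mu$.

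The heart of the argument is comparing objectives. Expanding $\mathrm{val}(\lambda)=\sum_g\omega(g)\,\mathrm{val}(\lambda^g)$, reindexing by $\tuple{y}:=g(\tuple{x})$, and collecting preimages, the value rewrites as
\[
\sum_{f\in\sigma}\sum_{\tuple{y}\in B^{\ar(f)}}\EX_{g\sim\omega}\bigl[f^{\structA}(g^{-1}(\tuple{y}))\bigr] \sum_{\bar s\colon\toset{\tuple{y}}\to C}\mu(\toset{\tuple{y}},\bar s)\,f^{\structC}(\bar s(\tuple{y})).
\]
The overcast inequality $\EX_{g\sim\omega} f^{\structA}(g^{-1}(\tuple{y}))\geq f^{\structB}(\tuple{y})$ holds for every $\tuple{y}$ (trivially when $f^{\structB}(\tuple{y})=0$), and nonnegativity of $\mu$ and $f^{\structC}$ preserves the inequality when multiplying, so the displayed expression is at least $\mathrm{val}(\mu)$.

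I expect the main obstacle to be the bookkeeping in this last step, specifically handling the case when $g$ is not injective on $\toset{\tuple{x}}$ (so $|\toset{\tuple{y}}|<|\toset{\tuple{x}}|\leq k$), verifying that each consistent $s$ contributes to exactly one term $\bar s$, and justifying the interchange of summations to recover the overcast expression. Once the bijection between consistent $s$ and arbitrary $\bar s\colon g(X)\to C$ is pinned down, everything else follows mechanically, and the analogous structure from the proof of Proposition~\ref{prop:overcast} reappears with $\opt$ replaced by $\optfrac{k}{\cdot}$.
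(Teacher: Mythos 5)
Your proof is correct and takes essentially the same approach as the paper's: pull back a feasible Sherali-Adams solution through the maps $g$ in the support of $\omega$, verify feasibility of the pullback via the factorisation $s=\bar s\circ g|_X$, and compare objectives via the overcast inequality after reindexing by $\tuple{y}=g(\tuple{x})$. The only cosmetic difference is that you average $\lambda^g$ over $g\sim\omega$ and appeal to linearity, whereas the paper first averages the objective bound and then fixes a single $g$ achieving at least the expectation, constructing the pullback for that one $g$.
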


Using Observation~\ref{obs:scaling} and Proposition~\ref{prop:overcast-sa}, we are ready to prove the following.

\begin{proposition}\label{prop:tract}
	Let $\structA$ be a $\sigma$-structure.
	Let $\eps \geq 0$ and $k\geq\max_{f\in\sigma}\ar(f)$.
	Suppose that there exists a $\sigma$-structure $\structB$ 
	such that $\dopt(\structA, \structB) \leq \eps$ and $\tw(\structB)\leq k$. 
	Then, for every $\sigma$-structure $\structC$, we have that 
	\[
    	\opt{\structA,\structC}\ \leq\ 
	    \optfrac{k}{\structA,\structC}\ \leq\ 
    	(1+\Oh(\eps)) \opt{\structA,\structC}.
	\] 
\end{proposition}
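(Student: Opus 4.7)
The plan is to split the claim into the two inequalities and handle the upper bound by routing through the bounded-treewidth witness $\structB$.

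The left inequality $\opt{\structA,\structC} \leq \optfrac{k}{\structA,\structC}$ is the standard fact that Sherali--Adams is a relaxation: any map $h\colon A\to B$ yields a feasible LP solution by setting $\lambda(X,h|_X)=1$ for each $X \in \binom{A}{\leq k}$ and $0$ otherwise; this satisfies the marginalisation and normalisation constraints in Figure~\ref{fig:sa} and achieves LP value $\val(h)$. Taking $h$ optimal gives $\opt{\structA,\structC} \leq \optfrac{k}{\structA,\structC}$.

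For the upper bound on $\optfrac{k}{\structA,\structC}$, I would route through $\structB$ in two steps. By the definition of $\dopt$ as an infimum, for every $\delta>0$ we have $\structA \overcasts e^{-(\eps+\delta)}\structB$ and $\structB \overcasts e^{-(\eps+\delta)}\structA$. By Proposition~\ref{prop:overcast} these are witnessed by overcasts, and Proposition~\ref{prop:overcast-sa} combined with Observation~\ref{obs:scaling} lifts them to the Sherali--Adams level, yielding $\structA \overcasts_k e^{-(\eps+\delta)}\structB$ and $\structB \overcasts_k e^{-(\eps+\delta)}\structA$. Unpacking these at the structure $\structC$ gives
\[
\optfrac{k}{\structA,\structC}\ \leq\ e^{\eps+\delta}\,\optfrac{k}{\structB,\structC}
\qquad\text{and}\qquad
\opt{\structB,\structC}\ \leq\ e^{\eps+\delta}\,\opt{\structA,\structC}.
\]

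The key remaining ingredient is the classical fact that Sherali--Adams of level $k$ is tight whenever the left-hand-side structure has treewidth at most $k$, i.e.\ $\optfrac{k}{\structB,\structC}=\opt{\structB,\structC}$. This is shown by taking a tree decomposition of $\structB$ of width at most $k$ and observing that any feasible LP solution can be rounded to an integral solution of the same value by a dynamic programming argument over the decomposition: the LP variables $\lambda(X,\cdot)$ indexed by bags $X$ encode consistent distributions over partial assignments, and the marginalisation constraints ensure these distributions agree on bag intersections, which is exactly what the DP requires to glue an integral optimum together bag by bag. Chaining now gives
\[
\optfrac{k}{\structA,\structC}\ \leq\ e^{\eps+\delta}\,\optfrac{k}{\structB,\structC}\ =\ e^{\eps+\delta}\,\opt{\structB,\structC}\ \leq\ e^{2(\eps+\delta)}\,\opt{\structA,\structC},
\]
and letting $\delta\to 0$ yields $\optfrac{k}{\structA,\structC} \leq e^{2\eps}\,\opt{\structA,\structC} = (1+O(\eps))\,\opt{\structA,\structC}$.

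The only non-mechanical step is the tightness of Sherali--Adams at level matching the treewidth, and some care is needed to align the paper's convention for treewidth with the level $k$ bag sizes in Figure~\ref{fig:sa} and with the $k \geq \max_{f} \ar(f)$ hypothesis; everything else is a clean chain of inequalities driven by the two overcasts that witness $\dopt(\structA,\structB)\leq\eps$.
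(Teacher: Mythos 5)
Your proof is correct and follows essentially the same route as the paper's: both inequalities come from the overcast characterisation of $\dopt$, the lifting Proposition~\ref{prop:overcast-sa} (applied via Observation~\ref{obs:scaling}), and the exactness of level-$k$ Sherali--Adams on structures of treewidth $\leq k$. The only cosmetic difference is your $\delta$-margin followed by $\delta\to 0$; the paper works directly with $\eps$, which is also valid since the set $\{\eps' : \structA \overcasts e^{-\eps'}\structB \text{ and } \structB\overcasts e^{-\eps'}\structA\}$ is closed (the defining inequalities on $\opt$ survive limits), so the infimum is attained.
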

\begin{proof}
	The left-hand side inequality is from the definition of Sherali-Adams.
	For the right-hand side inequality, observe first that, by definition of $\dopt$, 
	$\structA \overcasts e^{-\eps} \structB$ and $\structB \overcasts e^{-\eps} \structA$.
	By Proposition~\ref{prop:overcast}, there is an overcast from $\structB$ to $e^{-\eps}\structA$,
	so by Proposition \ref{prop:overcast-sa}, it follows that $\structB\overcasts_{k} e^{-\eps}\structA$.
	By Observation \ref{obs:scaling}, we have that
	$\optfrac{k}{\structB,\structC} \geq e^{-\eps} \optfrac{k}{\structA,\structC}$. 
	Since $\tw(\structB)\leq k$, we have
	$\optfrac{k}{\structB,\structC}=\opt{\structB,\structC}$
  -- this follows, for example, from~\cite[Theorem~5.8]{crz22:sicomp}.%
  \footnote{Our definition of the LP slightly differs from~\cite{crz22:sicomp}, 
	where there are additional variables $\lambda(f,\tuple{x},s)$ associated with tuples $(f,\tuple{x})$ with $f^\structA(\tuple{x})>0$. 
	However, since we are assuming without loss of generality that $k\geq\max_{f\in\sigma}\ar(f)$, the two definitions are equivalent. 
	}
	Since moreover 	$\structA \overcasts e^{-\eps} \structB$, by Observation \ref{obs:scaling}, 
	it follows that $\opt{\structA,\structC} \geq e^{-\eps} \opt{\structB,\structC}$.
	Together, $\opt{\structA,\structC} \geq e^{-\eps} \opt{\structB,\structC} = e^{-\eps} \optfrac{k}{\structB,\structC} \geq e^{-2\eps} \optfrac{k}{\structA,\structC}$.
	Hence $\optfrac{k}{\structA,\structC} \leq e^{2\eps} \opt{\structA,\structC}$. 
\end{proof}

Since $\optfrac{k}{\structA,\structC}$ can be computed in time $(|A| \cdot |C|)^{\Oh(k)}$,
this concludes the proof of Theorem~\ref{thm:main1}.

\section{Fractional fragility}
\label{sec:frag}
To give Dvo\v{r}\'ak's definition of fractional fragility~\cite{Dvorak16}
we first define $\eps$-thin distributions.

\begin{definition}
	Let $\mathcal{F}$ be a family of subsets of a set $V$ and $\eps > 0$.
	We say that a distribution $\pi$ over $\mathcal{F}$ is \emph{$\eps$-thin} if $\Pr_{X \sim \pi} [v \in X] \leq \eps$ for all $v \in V$.
\end{definition}

\begin{definition}\label{def:fragile}
  For a graph parameter $\param$ and a number $k$, we define a
  \emph{$(\param\leq k)$-modulator} of a graph $G$ to be a set $X \subseteq
  V(G)$ such that $\param(G-X) \leq k$. A \emph{fractional $(\param\leq
  k)$-modulator} is a distribution $\pi$ of such modulators $X$.
  We say that a class of
  graphs $\Gg$ is \emph{fractionally-$\param$-fragile} if for every $\eps>0$
  there is a $k$ such that every $G \in \Gg$ has an $\eps$-thin fractional
  $(\param\leq k)$-modulator. 
  We can analogously define $(\param\leq k)$-edge-modulators $F \subseteq E(G)$
  and \emph{fractionally-$\param$-edge-fragility}.
\end{definition}

One crucial property of fractional fragility is that it allows a dual definition by a variant of Farkas' lemma (cf.~Appendix~\ref{sec:farkas} for details); this is already implicit in~\cite[Lemma~6]{DvorakS20}.

\begin{lemma}\label{lem:thin}
	Let $\mathcal{F}$ be a family of subsets of a set $V$. The following are equivalent:
	\begin{itemize}
		\item there is an $\eps$-thin distribution $\pi$ of sets in $\mathcal{F}$;
		\item for all non-negative weights $\left(w(v)\right)_{v\in V}$, there is an $X \in \mathcal{F}$ such that $w(X) \defeq \sum_{x \in X} w(x)\leq \eps \cdot w(V)$.
	\end{itemize}
\end{lemma}

Thus a class of graphs $\Gg$ is fractionally-$\tw$-fragile if and only if
for every $\eps>0$ there is a $k$ such that for every graph $G \in \Gg$
and every vertex-weight function $w$,
one can remove a set of vertices of weight at most $\eps \cdot w(V)$ to obtain a graph with $\tw\leq k$.

Another useful property of fractional fragility is that the edge version is equivalent to the vertex version,
for most parameters of interest.
Recall that each parameter we consider ($\Hadwiger{}$, $\tw{}$, $\tw{}$, $\td{}$, $\cc{}$, $\size{}$) is \emph{monotone}, meaning $\param(H)\leq \param(G)$ for $H$ a subgraph of $G$; and that the average degree $\frac{2|E(G)|}{|V(G)|}$ is bounded by a function of $\param$.

\begin{lemma}\label{lem:fragile}
	Let $\param$ be a monotone graph parameter such that the average degree $\frac{2|E(G)|}{|V(G)|}$ of a graph is bounded by a function of $\param(G)$.
	Let $\Gg$ be a class of graphs.
	Then the following are equivalent:
	\begin{itemize}
		\item $\Gg$ is fractionally-$\param$-fragile;
		\item $\Gg$ is fractionally-$\param$-edge-fragile;		
		\item $\forall_{\eps>0} \exists_k\forall_{G\in\Gg} \forall_{w\colon V(G) \to \QQ_{\geq0}} \exists_{X\subseteq V(G)}\ w(X)\leq \eps w(V(G))\ \text{and}\ \param(G-X)\leq k$;
		\item $\forall_{\eps>0} \exists_k\forall_{G\in\Gg} \forall_{w\colon E(G) \to \QQ_{\geq0}} \exists_{F\subseteq E(G)}\ w(F)\leq \eps w(E(G))\ \text{and}\ \param(G-F)\leq k$.
	\end{itemize}
\end{lemma}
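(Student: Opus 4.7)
The plan is to use Lemma~\ref{lem:thin} to immediately collapse the two easy pairs of equivalences, so that the substantive content reduces to proving (3)$\Leftrightarrow$(4). Specifically, for each fixed $G \in \Gg$ and each $k$, applying Lemma~\ref{lem:thin} to the family of $(\param\leq k)$-modulators $X \subseteq V(G)$ converts the statement ``for every non-negative $w$ there is a light modulator'' into ``there is an $\eps$-thin distribution of modulators'', giving (1)$\Leftrightarrow$(3); an identical application to the family of $(\param\leq k)$-edge-modulators gives (2)$\Leftrightarrow$(4). Throughout the rest of the argument I will use implicitly that the natural parameters $\tw,\td,\cc,\Hadwiger$ are unchanged by adding isolated vertices, which is what lets us translate between edge- and vertex-deletions cleanly.

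For the direction (3)$\Rightarrow$(4), I would use a direct weight-conversion. Given $w \colon E(G) \to \Qn$, set $w'(v) := \sum_{e \ni v} w(e)$ so that $w'(V) = 2w(E)$. Applying (3) with $w'$ and parameter $\eps/2$ produces $X \subseteq V(G)$ with $w'(X) \leq \eps\, w(E)$ and $\param(G-X) \leq k$. Taking $F$ to be the edges of $G$ incident to $X$, each edge in $F$ contributes its $w$-weight to at least one endpoint in $X$, so $w(F) \leq w'(X) \leq \eps\, w(E)$; and $G - F$ is simply $G-X$ with $X$ re-added as isolated vertices, hence $\param(G-F) \leq k$ as well.

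The harder direction (4)$\Rightarrow$(3) is where the average-degree hypothesis enters. My first step is to apply (4) with uniform edge weights $w \equiv 1$ and a constant $\eps_0 < 1$: this yields $F_0$ with $|F_0| \leq \eps_0 |E|$ and $\param(G-F_0) \leq k_0$, and combining this with $|E(G-F_0)| \leq \tfrac{1}{2} f(k_0)|V|$ (from the hypothesis) produces an absolute upper bound $D$ on the average degree of every $G \in \Gg$. The second step is then, given $w \colon V(G) \to \Qn$, to apply (4) to a \emph{degree-normalised} edge weighting $w_e(uv) := w(u)/\deg(u) + w(v)/\deg(v)$, for which $w_e(E) = w(V^+)$ (summing over non-isolated vertices; isolated vertices can be placed in $X$ for free). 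Applying (4) with $w_e$ at a suitable $\eps'$ produces an edge set $F$, from which I extract the vertex set $X := \{\,v : \deg_F(v)/\deg(v) \geq \gamma\,\}$ for an appropriate threshold $\gamma$.

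I expect the main obstacle to be exactly this last extraction. The bound $w(X) \leq \eps w(V)$ follows from $w_e(F) = \sum_v w(v) \deg_F(v)/\deg(v)$ and Markov's inequality, but $X$ is not automatically a vertex cover of $F$: edges of $F$ with both endpoints having low $F$-degree fraction remain in $G-X$, and I need to argue that $\param(G-X) \leq k'$ for some $k'$ depending only on $\eps$. The plan is to bound the residual subgraph $F \cap E(V \setminus X)$ via the inequality $\sum_{v \notin X} \deg_F(v) < \gamma \sum_v \deg(v) \leq \gamma D |V|$ provided by the bounded average degree, and then either iterate the construction a constant number of times or invoke the thin-distribution characterisation (Lemma~\ref{lem:thin}) to absorb the leftover edges into a slightly larger vertex modulator; the exact bookkeeping is what Section~\ref{sec:fragproof} needs to carry out.
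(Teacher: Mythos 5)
The equivalences $(1)\Leftrightarrow(3)$ and $(2)\Leftrightarrow(4)$ via Lemma~\ref{lem:thin}, and the direction $(3)\Rightarrow(4)$ by transferring vertex weights to edges and deleting the star around $X$, are all correct and essentially the same as in the paper (the paper writes $(1)\Rightarrow(4)$ directly, sampling $X\sim\pi$ and letting $F$ be its incident edges, but the content is identical). The issue is $(4)\Rightarrow(3)$, where your argument has two genuine gaps.

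First, applying $(4)$ with uniform weights on $E(G)$ only bounds the \emph{average degree} of each $G\in\Gg$, not the \emph{maximum} average degree over subgraphs. The paper's preliminary step is stronger and deliberately so: it applies $(4)$ to the indicator weight of each subgraph $H\subseteq G$, getting $|F'|\leq\eps|E(H)|$ with $F'=F\cap E(H)$ and $\param(H-F')\leq\param(G-F)\leq k$ by monotonicity of $\param$, and hence $\mad(G)\leq D$ for a uniform constant $D$. Bounded $\mad$ gives bounded degeneracy, which in turn gives an orientation of every $G\in\Gg$ with maximum in-degree at most $D$. Your weaker average-degree bound does not yield this.

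Second, and more importantly, your extraction $X:=\{v : \deg_F(v)/\deg(v)\geq\gamma\}$ is not a vertex cover of $F$, so $G-X$ may still contain edges of $F$ and you have no control over $\param(G-X)$. You flag this yourself and propose to ``iterate a constant number of times or absorb the leftover,'' but there is no argument that either converges or gives a bound depending only on $\eps$; in particular iterating the Markov-type extraction can keep leaving a constant fraction of $F$ uncovered at every round. The paper avoids this entirely via the orientation: set $w'(uv):=w(v)$ for the edge directed towards $v$, apply $(4)$ at parameter $\eps/D$ to get $F$, and take $X:=\{v : \exists\,\vec{uv}\in F\}$, the set of heads of $F$-edges. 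Then any edge of $G-X$ has both endpoints outside $X$, hence is not in $F$, so $G-X\subseteq G-F$ and $\param(G-X)\leq\param(G-F)\leq k'$ by monotonicity --- no iteration needed. The weight bound follows since $w(X)\leq w'(F)\leq\frac{\eps}{D}w'(E(G))=\frac{\eps}{D}\sum_v\indeg(v)w(v)\leq\eps\,w(V(G))$. The orientation with bounded in-degree is the missing idea; once you have it, $X$ is automatically a ``directed vertex cover'' of $F$ and the residual-subgraph bookkeeping you were worried about disappears.
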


\begin{proof}
	$(i)$ is equivalent to $(iii)$ and $(ii)$ is equivalent to $(iv)$ by Lemma~\ref{lem:thin}.

	It is easy to see that $(i)$ implies $(iv)$: 
	suppose for every $\eps>0$ there is a $k$ such that every $G\in\Gg$ has an $\eps$-thin fractional $(\param\leq k)$-modulator $\pi$.
	Let $w \colon E(G) \to \QQ_{\geq 0}$ be any edge-weight function.
	If we take a set $X$ from the distribution $\pi$ and remove the set $F$ of all edges incident to $X$, this yields a graph with $\param(G-F)\leq k$.
	Every vertex is in $X$ with probability $\leq \eps$, so every edge is in $F$ with probability $\leq 2\eps$.
	Hence the expected weight of $F$ is $\leq 2\eps w(E(G))$.
	So there exists a set $F \subseteq E(G)$ such that $w(F) \leq 2\eps w(E(G))$ and $\param(G-F) \leq k$.
	
	It remains to show that $(iv)$ implies $(iii)$.
	Let $f\colon \NN\to\NN$ be such that $\frac{2|E(G)|}{|V(G)|} \leq f(\param(G))$ for all graphs $G$.
	
	We first show that  $(iv)$ implies that $\Gg$ has bounded \emph{maximum average degree} $\mad(G) := \max_{H \subseteq G} \frac{2|E(H)|}{|V(H)|}$.
	Indeed, let $k:=k(\eps)$ be a number satisfying $(iv)$ for $\eps=\frac{1}{2}$.
	Then for any $G \in \Gg$ and any $H \subseteq G$, let $w\colon E(G) \to \QQ_{\geq 0}$ assign 1 to edges in $H$ and 0 to edges not in $H$.
	By assumption there is a set $F \subseteq E(G)$ such that $w(F)\leq \eps w(E(G))$ and $\param(G-F)\leq k$.
	Let $F' := F \cap E(H)$; then $|F'|=w(F)\leq \eps w(E(G)) = \eps|E(H)|$ and $\param(H-F')\leq\param(G-F)\leq k$.
	Hence $(1-\eps)|E(H)| \leq |E(H-F')| \leq \frac{f(k)}{2} \cdot |V(H-F')|$, which means $\frac{2|E(H)|}{|V(H)|}\leq \frac{f(k(\eps))}{1-\eps} = 2 f(k(\frac{1}{2}))$.
	That is, every subgraph $H$ of every graph $G$ in $\Gg$ has average degree at most $D:=2 f(k(\frac{1}{2}))$.
	
	This implies that every subgraph has some vertex of degree at most $D$ (this
  is called the \emph{degeneracy} of the graph: it is upper bounded by $\mad$).
	Hence every graph $G$ in $\Gg$ has an orientation $\vec{G}$ with maximum in-degree at most $D$
	(obtained by iteratively finding a vertex of degree at most $D$, orienting all remaining edges towards it, and removing the vertex).

	To show $(iii)$, let $\eps>0$, $k':=k(\frac{\eps}{D})$, $G\in\Gg$.
	Choose an orientation $\vec{G}$ of $G$ with maximum in-degree at most $D$.
	Given $w\colon V(G)\to\QQ_{\geq0}$, we can define $w'\colon E(G) \to \QQ_{\geq 0}$ as $w'(uv) := w(v)$ if $uv$ is directed towards $v$.
	By assumption, there is a set of edges $F$ such that $\param(G-F)\leq k'$ and
	\[ w'(F) \leq \frac{\eps}{D} w'(E(G)). \]
	Let $X := \{ v \colon \exists\,uv \in F\text{ directed towards }v\}$; then $G-X \subseteq G-F$, so $\param(G-X) \leq k'$.
	Note that
	\[w'(E(G)) = \sum_{\vec{uv} \in E(\vec{G})} w(v) = \sum_{v \in V(G)} \indeg(v) \cdot w(v) \leq D \cdot w(V(G))\]
	and
	\[w'(F) = \sum_{\vec{uv} \in F} w(v) \geq \sum_{v \in X} w(v) = w(X)\]
	Hence 
	\[w(X) \leq w'(F) \leq \frac{\eps}{D} w'(E(G)) \leq \eps \cdot w(V(G)).\]
	This concludes the proof that $(iv)$ implies $(iii)$.
\end{proof}

Dvo\v{r}\'ak and Sereni~\cite[Theorem~31]{DvorakS20} showed that graphs of bounded treewidth are fractionally-$\td$-fragile.
It follows from a result of DeVos et al.~\cite[Theorem 1.2]{DeVos+04} that for
$H$-minor-free graphs (for any graph $H$, that is classes of graphs that exclude some minor, or equivalently, classes of bounded Hadwiger number) 
are fractionally-$\tw$-fragile.\footnote{In fact, as shown by
Dvo\v{r}\'{a}k~\cite{Dvorak20}, a proof of van~den~Heuvel et~al.~\cite[Lemma
4.1]{HeuvelMQRS17} can be adapted to show this without the Graph Minors
Structure Theorem.}
These two facts establish the following equivalence, by~\cite[Lemma 12]{Dvorak16}.

\begin{theorem}[\cite{DvorakS20,DeVos+04,Dvorak16}]\label{thm:fragileWrt}
	The following are equivalent for a class of graphs $\Gg$:\looseness=-1
	\begin{itemize}
			\item $\Gg$ is fractionally-$\td$-fragile;
			\item $\Gg$ is fractionally-$\tw$-fragile;
			\item $\Gg$ is fractionally-$\Hadwiger$-fragile.
	\end{itemize}\vspace*{-1ex}
\end{theorem}

\subsection{Fragility implies pliability: proof of Theorem~\ref{thm:fragileIsNice}}
We denote by $G\uplus H$ the disjoint union of graphs $G$ and $H$.
All graph parameters $\param$ we consider satisfy 
$\param(G \uplus H) = \max(\param(G),\param(H))$ for all $G,H$ (that is: $\cc$, $\td$, $\tw$, $\Hadwiger$, excluding only size; we never consider fractional-size-fragility, as it is equivalent to just bounded size).

\begin{lemma}\label{lem:fragileToNice}
	Let $\param$ be a graph parameter such that $\param(G \uplus H) = \max(\param(G),\param(H))$ for all $G,H$.
	Let $\A$ be a class of structures of bounded arity $r$ such that the class $\Gg$ of their Gaifman graphs is fractionally-$\param$-fragile.
	Then $\A$ is $\param$-pliable.
\end{lemma}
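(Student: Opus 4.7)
The plan is to construct $\structB$ as a weighted disjoint union of substructures of $\structA$ obtained by deleting a thin random modulator, and then read off both overcasts directly from this construction.

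First, given $\structA \in \A$ and $\eps > 0$, I would set $\eps' \defeq (1 - e^{-\eps})/r$ and apply fractional-$\param$-fragility of $\Gg$ to the Gaifman graph $\Gaifman(\structA) \in \Gg$, obtaining a uniform bound $k = k(\eps')$ and an $\eps'$-thin distribution $\pi$ of subsets $X \subseteq A$ such that $\param(\Gaifman(\structA) - X) \leq k$. For each $X \in \supp(\pi)$, let $\structA_X$ be the substructure of $\structA$ induced on $A \setminus X$ (i.e.\ zero out every tuple touching $X$); its Gaifman graph coincides with $\Gaifman(\structA) - X$, so $\param(\structA_X) \leq k$. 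Define $\structB$ to have domain $B \defeq \biguplus_{X \in \supp(\pi)} (A \setminus X)$, with $f^\structB((v_1, X), \ldots, (v_{\ar(f)}, X)) \defeq \pi(X)\, f^\structA(v_1, \ldots, v_{\ar(f)})$ and weight $0$ on any tuple that does not lie in a single $X$-part. The hypothesis $\param(G \uplus H) = \max(\param(G), \param(H))$, applied inductively, then gives $\param(\structB) = \max_X \param(\structA_X) \leq k$.

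It remains to show $\dopt(\structA, \structB) \leq \eps$ by exhibiting two overcasts. For $\structA \overcasts \structB$, I would let $\omega$ be the distribution that picks, with probability $\pi(X)$, the map $g_X \colon A \to B$ sending $v \in A \setminus X$ to its copy $(v, X)$ and sending $v \in X$ to an arbitrary fixed vertex of the $X$-part. Only $g_X$ among the chosen maps lands in the $X$-part, so for any tuple $(f, \tuple{y})$ supported there the overcast inequality reduces to $\pi(X)\, f^\structA(\tuple{y}) \geq f^\structB(\tuple{y})$, which holds with equality. For $\structB \overcasts e^{-\eps}\,\structA$, I would use the single deterministic projection $g' \colon (v, X) \mapsto v$. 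Its preimage of a tuple $(f, \tuple{x}) \in \tup{\structA}$ contains exactly one copy of $\tuple{x}$ from every $X$-part such that $X$ avoids the (at most $r$) vertices appearing in $\tuple{x}$, so
\[ f^\structB\bigl(g'^{-1}(\tuple{x})\bigr) \ =\ \Pr_{X\sim\pi}\bigl[X \text{ avoids } \tuple{x}\bigr]\cdot f^\structA(\tuple{x})\ \geq\ (1 - r\eps')\, f^\structA(\tuple{x})\ =\ e^{-\eps}\, f^\structA(\tuple{x}) \]
by the union bound over those vertices and $\eps'$-thinness.

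The only real subtlety is calibrating $\eps'$: the union-bound loss from $r$ vertices is additive ($r\eps'$), but the definition of $\dopt$ demands a multiplicative gap $e^{-\eps}$, which is what forces the choice $\eps' = (1 - e^{-\eps})/r$. Everything else is a direct translation of a thin distribution of modulators into overcasts via the natural inclusion $A \hookrightarrow B$ and projection $B \to A$.
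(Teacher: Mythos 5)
Your proof is correct and follows essentially the same construction as the paper: form $\structB$ as the disjoint union of rescaled substructures $\pi(X)\,(\structA - X)$ over the thin modulator distribution, use per-part inclusion for $\structA \overcasts \structB$, use the deterministic projection for $\structB \overcasts e^{-\eps}\structA$, and bound the loss on each tuple by a union bound over its $\leq r$ vertices. The only cosmetic difference is the calibration of $\eps'$: you pick $\eps' = (1-e^{-\eps})/r$ to make $1-r\eps' = e^{-\eps}$ exact, while the paper uses $\eps' = \frac{\eps}{(1+\eps)r}$ and the inequality $\frac{1}{1+\eps}\geq e^{-\eps}$; both are fine.
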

\begin{proof}
	For $\eps>0$, let $\eps' := \frac{\eps}{1+\eps} \cdot \frac{1}{r}$.
    By definition of fractional-$\param$-fragility,
	$\exists_{k(\eps)} \forall_{G\in\Gg}$ $G$ has an $\eps'$-thin
  fractional $(\param \leq k)$-modulator.
	Let $\structA \in \A$ be a structure with Gaifman graph $G \in \Gg$.
	By assumption, $G$ has a fractional $(\param\leq k)$-modulator~$\pi$
	such that for every $v \in V(G)$, $\Pr_{X\sim \pi}[v \in X] \leq \eps'$.
	For $X \subseteq V(G) = A$ in the support of $\pi$, let $\structB_X$ be the rescaling of $\structA - X$ by a factor of $\pi(X)$;
	let $\structB$ be the disjoint union of all $\structB_X$.
	Since each $X$ in the support of $\pi$ is a $(\param\leq k)$-modulator and $\param$ is closed under disjoint union, $\param(\Gaifman(\structB)) \leq k$.
	
	We define overcasts $\omega\colon \structA \to \structB$ and $\omega' \colon \structB \to (1-r\eps')\structA$.
	The first, $\omega$, maps $\structA$ identically to each component
  $\structB_X$ of $\structB$ with probability $\pi(X)$ (vertices of $\structA$ in $X$ are mapped arbitrarily in the same component).
	The second, $\omega'$, deterministically maps each component $\structB_X$ of $\structB$ identically to $\structA$.
	To check that $\omega'$ is indeed an overcast, consider a tuple $(f,\tuple{x}) \in \tup{\structA}$.
	The tuple is covered by its copies in $\structB_X$ with weight $\pi(X) \cdot f^\structA(X)$ for all $X$ which do not intersect $\tuple{x}$.
	In total, the fraction of $f^\structA(\tuple{x})$ lost is hence exactly $\Pr_{X \sim \pi} [X \cap \tuple{x} \neq \emptyset]$, which is (by union bound and by the assumption $|\tuple{x}|\leq r$) at most $\eps' r$.
	Since $1-\eps'r = \frac{1}{1+\eps} \geq e^{-\eps}$,
	we have $\structA \overcasts \structB \overcasts (1-\eps' r) \structA \overcasts e^{-\eps} \structA$,
	which means $\structB$ is a structure at opt-distance $\leq \eps$ from $\structA$.
\end{proof}

\noindent
This concludes Theorem~\ref{thm:fragileIsNice}: structures on fractionally-$\tw$-fragile graphs are $\tw$-pliable.

\subsection{Pliability vs fragility: proof of Lemma~\ref{lem:fragileIffNice}}
\label{sec:fragproof}

For Lemma~\ref{lem:fragileIffNice}, we need the other direction than in
Theorem~\ref{thm:fragileIsNice}: that if all structures on Gaifman graphs in $\Gg$ are $\tw$-pliable, then $\Gg$ is fractionally-$\tw$-fragile.
To do this, intuitively, we consider, for a graph $G \in \Gg$, a structure $\structA$ where each edge is used by a different symbol of the signature.
If we have a structure $\structB$ (of bounded treewidth) close to $\structA$ in opt-distance, this implies overcasts from $\structA$ to $e^{-\eps}\structB$ and from $\structB$ to $e^{-\eps} \structA$; 
composing the two gives an overcast from $e^{+\eps}\structA$ to $e^{-\eps}\structA$ 
in which (since each edge is used by a different symbol) an edge can only be covered by itself.
This shows that the overcasts are mostly injective and that $\structB$, sandwiched between $e^{+\eps}\structA$ and  $e^{-\eps}\structA$, must be close in edit distance.
The bounded treewidth of $\structB$ then implies that the graph $G$ underlying $\structA$ is in fact fractionally-$\tw$-edge-fragile, which by Lemma~\ref{lem:fragile} concludes the proof.

The formal proof of Lemma~\ref{lem:fragileIffNice} follows. In fact, we prove
prove the statement for any reasonable parameter, including $\cc$, $\td$, $\tw$, and $\Hadwiger$;
the conclusion is the edge variant of fractional fragility, but the two are equivalent by Lemma~\ref{lem:fragile}.

\begin{lemma}[Lemma~\ref{lem:fragileIffNice} more generally] 
  \label{lem:fragileIffNiceGen}
	Let $\param$ be a monotone graph parameter such that $\param(G \uplus H) = \max(\param(G),\param(H))$.
	For every integer $r \geq 2$, 
	a class of graphs $\Gg$ is fractionally-$\param$-edge-fragile if and only if
	  $\AOver{\Gg}{(r)}$ is $\param$-pliable.
\end{lemma}
\begin{proof}
	If $\Gg$ is fractionally-$\param$-fragile,
	then by Lemma~\ref{lem:fragileToNice} $\AOver{\Gg}{(r)}$ is $\param$-pliable.
	
	For the other direction, suppose $\AOver{\Gg}{(r)}$ is $\param$-pliable:
  \[\forall_{\eps>0}\exists_{k(\eps)} \forall_{\structA \in \AOver{\Gg}{(r)}} \exists_{\structB \vphantom{ \AOver{\Gg}{(r)}}}\  \param(\structB) \leq k \text{ and } \dopt(\structA,\structB) \leq \eps.\]
	Let $\eps > 0$ and let $k:=k(\frac{\eps}{2})$.
	For a graph $G \in \Gg$, let $\sigma$ be the signature with a different binary symbol $f_e$ for each $e \in E(G)$.
	Let $\structA$ be the $\sigma$-structure with domain $V(G)$ and values $f_e^{\structA}(u,v) = 1$ if $\{u,v\}=e$, 0 otherwise.
	(The arity can be increased to exactly $r$ by adding dummy or repeated variables).
	By assumption, there is a $\sigma$-structure $\structB$ such that $\param(\structB) \leq k$ and $\dopt(\structA,\structB)\leq\frac{\eps}{2}$.
	Let $\omega,\omega'$ be overcasts from $\structA$ to $\exp(-\frac{\eps}{2}) \cdot \structB$ and from $\structB$ to $\exp(-\frac{\eps}{2})\cdot  \structA$, respectively.
	
  For $g$ with $\omega(g)>0$ and $g'$ with $\omega'(g')>0$, let $F_{gg'} \subseteq E(G)$ be the subset of edges $e$ such that $g'(g(e)) \neq e$ or $f^\structB_e(g(e)) = 0$.
	Since $g' \circ g$ is the identity on $E(G) - F_{gg'}$, the functions $g'$ and $g$ are bijections between this set and a subset of edges of $\Gaifman(\structB)$.
	Hence $G - F_{gg'}$ is isomorphic to a subgraph of $\Gaifman(\structB)$, which implies $\param(G-F_{gg'}) \leq k$.
	
	Let $e \in E(G)$. We claim that $\Pr\limits_{\substack{g\sim\omega\\g'\sim\omega'}} [e \in F_{gg'}] \leq \eps$.
	This holds essentially because the composition of $\omega$ and $\omega'$ is an overcast from $\structA$ to $\exp(-\eps) \cdot \structA$ and because the only edge with non-zero value of $f^\structA_e$ is $e$ itself.
	Formally, since $\omega$ is an overcast, we have:
  \[\text{for each }e_B\in E(\Gaifman(\structB)) \quad \quad \EX_{g\sim\omega} f^{\structA}_e(g^{-1}(e_B)) \geq \exp(-\textstyle\frac{\eps}{2}) f_e^\structB(e_B).\]
	Note that by construction of $\structA$, $f^{\structA}_e(g^{-1}(e_B)) = [g(e)=e_B]$.
	Hence for each $e_B\in E(\Gaifman(\structB))$,
  \[\EX_{g\sim\omega} [g(e)=e_B] \geq \exp(-\textstyle\frac{\eps}{2}) f_e^\structB(e_B).\]

	\noindent	
	Moreover, since $\omega'$ is an overcast, we have:
  \[\EX_{g'\sim \omega'} f^{\structB}_e(g'^{-1}(e)) \geq \exp(-\textstyle\frac{\eps}{2}) f_e^\structA(e).\]
  That is: \[\EX_{g'\sim \omega'}  \sum_{\substack{e_B \in E(\Gaifman(\structB))\\g'(e_B) = e}} f^{\structB}_e(e_B) \geq \exp(-\textstyle\frac{\eps}{2}).\]
	
	\noindent Putting the two together:
	\begin{align*}
	\Pr_{\substack{g\sim\omega\\g'\sim\omega'}} [e \not\in F_{gg'}] &=
 	  \Pr_{\substack{g\sim\omega\\g'\sim\omega'}} [g'(g(e)) = e \text{ and } g(e) \in E(\Gaifman(\structB))] \\
 	&=\EX_{\substack{g\sim\omega\\g'\sim\omega'}} \sum_{\substack{e_B \in E(\Gaifman(\structB))\\g'(e_B) = e}} [g(e) = e_B] \\
 	&\geq \EX_{g'\sim\omega'} \sum_{\substack{e_B \in E(\Gaifman(\structB))\\g'(e_B) = e}}  \exp(-\textstyle\frac{\eps}{2}) f_e^\structB(e_B)\\
 	&\geq \exp(-\eps) \geq 1-\eps. 
 	\end{align*}
 	
 	Therefore, we obtained a distribution of edge sets $F_{gg'} \subseteq E(G)$ such that $\param(G-F_{gg'}) \leq k$ 
 	satisfying $\Pr\limits_{\substack{g\sim\omega\\g'\sim\omega'}} [e \in F_{gg'}] \leq \eps$.
 	This is an $\eps$-thin fractional $(\param\leq k)$-edge-modulator.
\end{proof}

\section{From Hadwiger- to size-pliability: proof of Theorem~\ref{thm:niceWrt}}
\label{sec:sizeccproof}

In this section, we prove Theorem~\ref{thm:niceWrt}: pliability with respect to different parameters yields equivalent definitions.
The first half of Theorem~\ref{thm:niceWrt} follows easily
from already established results and a simple observation, cf.
Section~\ref{subsec:1.5easy}.
The second half of Theorem~\ref{thm:niceWrt} reduces to showing that structures of bounded treedepth with a bounded signature are $\size$-pliable.
The strategy for the proof is similar to a proof of Ne\v{s}et\v{r}il and Ossona de~Mendez~\cite[Corollary 3.3]{NesetrilM06} that relational structures of bounded treedepth have bounded cores.
However the argument is much more intricate due to the fact that we consider valued structures:
the statement that there are only finitely many structures of size at most $C$, for every $C$, is not true anymore.
The main difficulty is proving an approximate version of it.

\subsection{Treewidth-, treedepth-, and Hadwiger-pliability}
\label{subsec:1.5easy}

The first half of Theorem~\ref{thm:niceWrt}, that is, the equivalence of
$\param$-pliability for $\param \in \{\tw,\td,\Hadwiger\}$, will follow (as
detailed in Corollary~\ref{cor:amen-arity} below) from the equivalence of
fractional-$\param$-fragility for these parameters
(Theorem~\ref{thm:fragileWrt}), the fact that fragility implies pliability
(Lemma~\ref{lem:fragileToNice}), and transitivity of pliability, in the
following sense.

\begin{observation}[Transitivity of pliability]\label{obs:niceIsTransitive}
	Let $\mathcal{A}$ be a class of structures with signatures from a set $\Sigma$.
	Suppose $\mathcal{A}$ is $\param$-pliable
	and for each $k$, $\{\structA \colon \param(\structA)\leq k\}$ is $\param'$-pliable,
	where $\structA$ runs over all structures with signatures in $\Sigma$.
	Then $\mathcal{A}$ is $\param'$-pliable.
\end{observation}
\begin{proof}
    Intuitively, this hold because $\dopt$ is a pseudometric.
	Formally, suppose a class $\A$ is $\param$-pliable.
	Then every $\structA \in \A$ is $\frac{\eps}{2}$-close (in $\dopt$ distance) to some $\structB$ with $\param(\structB) \leq k$ (for some $k$ depending on $\frac{\eps}{2}$).
	By assumption, every  $\structB$ with $\param(\structB) \leq k$ is $\frac{\eps}{2}$-close to some $\structC$ with $\param'(\structC) \leq k'$ (for some $k'$ depending on $\frac{\eps}{2}$ and $k$).
	Hence $\structA$ is $\eps$-close to some structure $\structC$ with $\param'(\structC) \leq k'(\frac{\eps}{2},k(\frac{\eps}{2}))$, which only depends on $\eps$.
\end{proof}

\begin{corollary}\label{cor:amen-arity}
	Let $\mathcal{A}$ be any class of structures. The following are equivalent:
	\begin{itemize}
		\item $\mathcal{A}$ is $\td$-pliable;
		\item $\mathcal{A}$ is $\tw$-pliable;
		\item $\mathcal{A}$ is $\Hadwiger$-pliable.
	\end{itemize}	
\end{corollary}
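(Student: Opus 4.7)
The three easy implications $\td$-pliable $\Rightarrow$ $\tw$-pliable $\Rightarrow$ $\Hadwiger$-pliable follow at once from $\tw(G)\leq\td(G)$ and $\Hadwiger(G)\leq\tw(G)+1$: the approximating structure $\structB$ produced by the stronger pliability automatically satisfies the weaker bound on $\structB$, with the same opt-distance.

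For the reverse directions my plan is to apply Observation~\ref{obs:niceIsTransitive} (transitivity of pliability) twice, using the following two ``subsumption'' claims:
\begin{enumerate}[label=(\roman*)]
  \item For each $h$, the class of all structures $\structA$ (over signatures in the relevant set $\Sigma$) with $\Hadwiger(\structA)\leq h$ is $\tw$-pliable.
  \item For each $k$, the class of all structures $\structA$ with $\tw(\structA)\leq k$ is $\td$-pliable.
\end{enumerate}
Then (i) combined with $\Hadwiger$-pliability of $\mathcal{A}$ yields $\tw$-pliability of $\mathcal{A}$, and a second application with (ii) upgrades this to $\td$-pliability. The Gaifman graphs of the structures in these two classes are $K_{h+1}$-minor-free (for (i)), respectively have treewidth at most $k$ (for (ii)), so by Theorem~\ref{thm:fragileWrt} they are fractionally-$\tw$-fragile, resp.\ fractionally-$\td$-fragile. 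Both claims would thus follow immediately from Lemma~\ref{lem:fragileToNice}, \emph{except} that this lemma is stated only for classes of bounded arity, whereas the signatures of structures in our two classes need not have any a priori arity bound.

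The main (and essentially only) obstacle is to close this gap in Lemma~\ref{lem:fragileToNice}. Inspecting its proof, the bounded-arity hypothesis is used only in the union bound
\[
  \Pr_{X\sim\pi}\bigl[\,X \cap \toset{\tuple{x}} \neq \emptyset\,\bigr] \;\leq\; \eps' \cdot |\toset{\tuple{x}}|
\]
that controls the loss of the overcast $\omega'$ on each tuple $(f,\tuple{x}) \in \tup{\structA}_{>0}$. Since the domain elements appearing in such a tuple form a clique in $\Gaifman(\structA)$, the quantity $|\toset{\tuple{x}}|$ is bounded by the clique number of $\Gaifman(\structA)$, hence by $\Hadwiger(\structA)$ and by $\tw(\structA)+1$. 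Substituting this bound for the arity $r$ throughout the proof of Lemma~\ref{lem:fragileToNice} yields exactly the strengthening we need -- ``bounded arity'' can be replaced by ``bounded clique number of the Gaifman graph'' -- under which (i) and (ii) are immediate and the cycle of implications closes.
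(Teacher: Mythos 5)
Your proposal is correct and follows essentially the same route as the paper's proof: the easy implications come from the chain of inequalities relating $\td$, $\tw+1$, and $\Hadwiger$, and the harder direction reduces via Observation~\ref{obs:niceIsTransitive} to showing pliability of the class of all structures (with arbitrary signatures) whose Gaifman graphs are bounded in the respective parameter, which in turn is fed through Theorem~\ref{thm:fragileWrt} and Lemma~\ref{lem:fragileToNice}. The paper does this in a single transitivity step (Hadwiger directly to treedepth) rather than your two steps through treewidth; that difference is cosmetic.

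There is one place where your write-up is actually \emph{more} careful than the paper. The paper disposes of the arity issue with the remark that ``the arity of symbols with non-zero tuples is bounded by $k$ (high-arity symbols with no non-zero tuples can be ignored),'' which is slightly imprecise: a symbol of arity $100$ can have non-zero tuples with repeated entries using only $k$ distinct domain elements, so its arity need not be bounded. Your reading of the proof of Lemma~\ref{lem:fragileToNice} pins down what actually matters: the union bound only needs $\Pr_{X\sim\pi}[X\cap\toset{\tuple{x}}\neq\emptyset]\leq\eps'\cdot|\toset{\tuple{x}}|$ for the tuples that carry positive value, and for those $\toset{\tuple{x}}$ is a clique in the Gaifman graph, so $|\toset{\tuple{x}}|$ is bounded by the clique number, hence by $\Hadwiger(\structA)$ (or $\tw(\structA)+1$). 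Replacing ``bounded arity'' by ``bounded clique number of the Gaifman graph'' in Lemma~\ref{lem:fragileToNice} is exactly the right strengthening, and it is what both proofs implicitly rely on. Your version states it cleanly.

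So: same approach, modulo a cosmetic factoring into two transitivity steps and a tidier justification of the applicability of Lemma~\ref{lem:fragileToNice} to unbounded-arity signatures.
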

\begin{proof}
	Since $\td(G) \geq \tw(G) + 1 \geq \Hadwiger(G)$ for any graph $G$, each
  bullet point implies the next by Observation~\ref{obs:trivial}.
	It suffices to show that $\Hadwiger$-pliability implies $\td$-pliability.
	By Observation~\ref{obs:niceIsTransitive} it suffices to show that for every $k$,
	the class $\A$ of all structures with Hadwiger number at most $k$ (and arbitrary signatures) is $\td$-pliable.
	These are structures whose Gaifman graphs exclude the clique $K_{k+1}$ as a minor.
	Their Gaifman graphs are thus fractionally-$\td$-fragile by Theorem~\ref{thm:fragileWrt}.
	Since their Gaifman graphs do not include cliques $K_{k+1}$ the arity of symbols 
	with non-zero tuples is bounded by $k$.
	By Lemma~\ref{lem:fragileToNice}, this implies that $\A$ is $\td$-pliable
	(high-arity symbols with no non-zero tuples can be ignored).
\end{proof}

\subsection{From cc-pliability to size-pliability}
To show the second half of Theorem~\ref{thm:niceWrt},
i.e., the equivalence of $\td$-pliability, $\cc$-pliability, and $\size$-pliability
(for structures with bounded signatures),
it will be easier to first focus on the latter two.
Since there are only finitely many distinct signatures of bounded size and arity,
we can focus on a single fixed signature (as finite unions of pliable classes are pliable).

Since $\cc\leq\size$,
by Observation~\ref{obs:trivial} we have that $\size$-pliability implies
$\cc$-pliability. The rest of this section is devoted to proving that
$\cc$-pliability implies $\size$-pliability (for a single fixed signature).
This (and in fact equivalence of the two)
would be trivial if there were only a bounded number of distinct values of tuples,
since then there can be only a bounded number of components up to isomorphism,
and isomorphic components can be merged.

\begin{observation}\label{obs:merge}
	For any structure $\structA$ and numbers $\lambda_1,\dots,\lambda_n \in \Qn$,
	the disjoint union $\lambda_1 \structA \uplus \dots \uplus \lambda_n \structA$ is
	equivalent (i.e., at $\dopt$-distance zero) to $\lambda \structA$, where $\lambda=\lambda_1+\dots+\lambda_n$.
\end{observation}
\begin{proof}
	An overcast in one direction deterministically maps each component $\lambda_i \structA$ to $\lambda \structA$ identically.
	An overcast in the other direction maps $\lambda \structA$ to the component $\lambda_i \structA$ with probability $\lambda_i/\lambda$.
\end{proof}

For a structure $\structA$ with components of bounded size and $\qplus$-values,
we can try to change the values slightly to find a structure $\structB$ at small
edit distance which uses a bounded number of different values (and then proceed
as above). This works if the ratio of the maximum value to the minimum non-zero
value is bounded. If this ratio is large, we could try to change the extremely
small values to zero, hoping the edit distance is small (relative to the
extremely large values). However, this does not always work: consider structures
$\structA$ with few large values and many small values (for example a structure
having $2^i$ tuples of value $2^{n-i}$, for $i=0\dots n$). So the general case
cannot be reduced to the case of finitely many distinct values just by finding a
structure close in edit distance. Nevertheless, instead of requiring the
modified structure $\structB$ to have a bounded number of components up to
isomorphism, it suffices to require a bounded number of components up to
rescaling (two structures $\structB_1,\structB_2$ being the same up to rescaling
if $\structB_1=\lambda \structB_2$ for some $\lambda>0$). This minor
weakening turns out to be sufficient to fix our problem. We formalise this first
as a statement on sequences of vectors of bounded dimension (which will encode a
sequence of components of bounded size).

\begin{lemma}\label{lem:vectors}
	Let $d \in \NN, \eps>0$. There is a $k$ such that for every sequence of vectors $\vvec{1}{},\dots,\vvec{n}{} \in \QQ_{\geq0}^d$,
	there is a sequence $\wvec{1}{},\dots,\wvec{n}{}\in \QQ_{\geq0}^d$ such that for each coordinate $i = 1,\dots,d$, ${\sum_{j=1\dots n} |\vvec{j}{i} - \wvec{j}{i}|} \leq \eps{\sum_{j=1\dots n} \vvec{j}{i}}$, and such that up to rescaling, there are only $k$ distinct vectors in $\wvec{1}{},\dots,\wvec{n}{}$.
\end{lemma}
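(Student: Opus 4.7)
The plan is to prove the lemma by multiplicatively bucketing the vectors' normalized shapes, after a preliminary diagonal rescaling that balances the per-coordinate totals.

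First I would reduce to the case $M_i := \sum_j \vvec{j}{i} = 1$ for every $i$ by rescaling each coordinate $i$ by $1/M_i$ (coordinates with $M_i = 0$ are trivially handled with $\wvec{j}{i} = 0$). Such a diagonal rescaling preserves the equivalence ``same ray up to positive rescaling'' on $\qplus^d$ and transforms the per-coordinate error constraints correctly, so any bound on the number of distinct rays carries back to the original problem. After the reduction, the key inequality
\[\textstyle \sum_j \|\vvec{j}{}\|_\infty \;\leq\; \sum_j \|\vvec{j}{}\|_1 \;=\; \sum_i M_i \;=\; d\]
holds.

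Next, I would set $\delta := \eps/2$ and $T := \lceil \log_{1+\delta}(2d/\eps) \rceil$. For each $j$ with $\vvec{j}{} \neq 0$, let $L_j := \|\vvec{j}{}\|_\infty$ and $\hat{v}^{(j)} := \vvec{j}{}/L_j \in [0,1]^d$. Assign a signature $\sigma(j) \in \{0, 1, \dots, T\}^d$ by $\sigma_i(j) := 0$ if $\hat{v}^{(j)}_i \leq (1+\delta)^{-T}$ (``truncation'' bucket) and $\sigma_i(j) := t$ if $\hat{v}^{(j)}_i \in \bigl((1+\delta)^{-t}, (1+\delta)^{-t+1}\bigr]$ for $t \in \{1,\dots,T\}$. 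Define a canonical direction $d_\sigma \in \qplus^d$ by $(d_\sigma)_i := (1+\delta)^{-\sigma_i+1}$ when $\sigma_i \geq 1$ and $(d_\sigma)_i := 0$ otherwise, and set $\wvec{j}{} := L_j \cdot d_{\sigma(j)}$. The $\wvec{j}{}$'s then lie on at most $(T+1)^d$ rays, a number depending only on $d$ and $\eps$.

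For each coordinate $i$, the error splits by signature value. Indices $j$ with $\sigma_i(j) \geq 1$ satisfy $|\vvec{j}{i} - \wvec{j}{i}| \leq \delta \vvec{j}{i}$ by the multiplicative bucketing, contributing at most $\delta M_i$ in total. Indices $j$ with $\sigma_i(j) = 0$ contribute $\vvec{j}{i} \leq (1+\delta)^{-T} L_j$ each, and sum to at most $d(1+\delta)^{-T} \leq (\eps/2) M_i$ by the key inequality and the choice of $T$. The two pieces sum to at most $\eps M_i$, as required.

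The main obstacle will be precisely this truncation estimate: without the preliminary rescaling, $\sum_j \|\vvec{j}{}\|_\infty$ is only bounded by $\sum_{i'} M_{i'}$, and demanding that the truncated mass in coordinate $i$ stay below $\eps M_i$ would force $T$ (and hence $k$) to grow with the ratios $\sum_{i'} M_{i'}/M_i$, which can be unbounded (for instance for $\vvec{j}{} = (1/j, 1)$ as $n\to\infty$). Rescaling to make all per-coordinate totals equal $1$ flattens these ratios uniformly to $d$, which is exactly what enables a uniform, dimension-and-accuracy-only bound on $T$.
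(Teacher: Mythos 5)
Your proof is correct, and it takes a genuinely different route from the paper's.

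The paper proves the lemma by induction on $d$. It fixes attention on the ratio of the second coordinate to the first, picks the largest threshold $c$ for which the second-coordinate mass of $\{j : \vvec{j}{2} < c\,\vvec{j}{1}\}$ stays below $\frac{\eps}{3}$ of the total, and then deduces, by comparing the two resulting inequalities, that the first-coordinate mass of $\{j : \vvec{j}{2} \geq c'\,\vvec{j}{1}\}$ with $c' := c \cdot \frac{3d}{\eps^2}$ is small. Vectors in the middle band $[c,c')$ span a bounded multiplicative range of ratios and can be rounded to $O(\frac{1}{\eps}\log\frac{d}{\eps})$ levels. The procedure is repeated for coordinate pairs $(1,3),\dots,(1,d)$, and the vectors whose first coordinate got zeroed are handled by the $(d-1)$-dimensional induction hypothesis. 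No preliminary normalization is used; the critical ratio $c$ is what supplies the scale.

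Your argument instead makes a single, non-inductive pass. The preliminary diagonal rescaling to $M_i = 1$ is the key move: it yields the uniform bound $\sum_j \|\vvec{j}{}\|_\infty \leq \sum_i M_i = d$, and that is exactly what lets you choose one global truncation depth $T = \lceil \log_{1+\delta}(2d/\eps) \rceil$ so that each coordinate's truncated contribution, $\sum_{\sigma_i(j)=0} \vvec{j}{i} \leq (1+\delta)^{-T} \sum_j L_j \leq (1+\delta)^{-T} d \leq \frac{\eps}{2} M_i$, is small simultaneously for every $i$. After $\ell_\infty$-normalization, multiplicative bucketing with $T$ levels per coordinate gives at most $(T+1)^d$ rays (plus the zero vector), and the per-coordinate error splits cleanly into a $\delta M_i$ rounding term and an $\frac{\eps}{2} M_i$ truncation term. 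Your observation about what goes wrong without the rescaling (the truncation error depending on the ratios $\sum_{i'} M_{i'}/M_i$) is exactly the point. The two proofs give bounds of the same shape, roughly $\left(\frac{1}{\eps}\log\frac{d}{\eps}\right)^{\Theta(d)}$; yours trades the induction for the up-front normalization, which makes for a shorter and arguably cleaner argument, while the paper's version avoids any rescaling and keeps the approximants coordinate-wise dominated by the originals (a property your rounding-up does not preserve, though the lemma does not require it).
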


\begin{proof}
	\newcommand\mass[1]{\textrm{mass}_{#1}}
	The proof is by induction on $d$.
	Let $d\in\NN, \eps>0$ and consider a sequence $v_1,\dots,v_n \in \QQ_{\geq 0}^d$.
	For $d=1$, the sequence already has only one vector up to rescaling
	(or two, if it contains the zero vector), so let $d\geq 2$.
	
	Let $J = \{1,\dots,n\}$.
	For a subset $X\subseteq J$, denote $\mass{i}(X) := \sum_{j \in X} \vvec{j}{i}$.
	We focus on the first two coordinates and in particular the ratio of the second to the first.
	For $c \in \RR$, let $J_{<c} := \{ j \in J \colon \vvec{j}{2} < c \cdot \vvec{j}{1} \}$.
	Define $J_{\leq c}, J_{>c}, J_{\geq c}$ analogously.
	
	Let $c$ be maximum such that $\mass{2}(J_{<c}) \leq \frac{\eps}{3} \cdot \mass{2}(J)$.
	For $j \in J_{<c}$, let $w_j$ be the vector obtained from $v_j$ by zeroing the 2nd coordinate.
	The resulting difference is
	\[\sum_{j \in J_{<c}} |\vvec{j}{2}-\wvec{j}{2}| \leq \frac{\eps}{3} \sum_{j \in J} \vvec{j}{2}.\]
	
	By maximality of $c$ we have
  \[\mass{2}(J_{\leq c}) > \frac{\eps}{3} \cdot \mass{2}(J).\]
	Observe that the left hand side can be bounded as follows:
  \[\mass{2}(J_{\leq c}) \leq c \cdot \mass{1}(J_{\leq c}) \leq c \cdot \mass{1}(J)\]
	and similarly the right hand side can be bounded as follows, for $c' := c \cdot \frac{3d}{\eps^2}$:
  \[\frac{\eps}{3} \cdot \mass{2}(J) \geq \frac{\eps}{3} \cdot  \mass{2}(J_{\geq c'}) \geq c \cdot \frac{d}{\eps} \cdot \mass{1}(J_{\geq c'}).\]
	Altogether, this implies
  \[ c \cdot \mass{1}(J)  >  c \cdot \frac{d}{\eps} \cdot \mass{1}(J_{\geq c'}),\]
	which after rearranging gives
	\[  \mass{1}(J_{\geq c'}) < \frac{\eps}{d}\cdot \mass{1}(J).\]
	For $j \in J_{\geq c'}$, let $w_j$ be the vector obtained from $v_j$ by zeroing the 1st coordinate.
	The resulting difference is $\sum_{j \in J_{\geq c'}} |\vvec{j}{1}-\wvec{j}{1}| \leq \frac{\eps}{d} \sum_{j \in J} \vvec{j}{1}$.
	
	The only remaining vectors, in $J_{mid} := J \setminus (J_{< c}  \cup J_{\geq
  c'})$, satisfy $c \cdot \vvec{j}{1} \leq \vvec{j}{2} < c' \cdot \vvec{j}{1}$.
  We can round down their 2nd coordinate to $c \cdot \vvec{j}{1}$ times an integer power of $e^{\eps/3}$.
	That~is, for $j \in J_{mid}$, let $w_j$ be the vector obtained from $v_j$ by decreasing the 2nd coordinate to $\wvec{j}{2} := c \cdot e^{a\eps/3} \cdot \vvec{j}{1}$ with $a \in \NN$ maximum such that $\wvec{j}{2} \leq \vvec{j}{2}$.
	Observe that $a \geq 0$ and since $c \cdot e^{a\eps/3} \cdot \vvec{j}{1} \leq \vvec{j}{2} \leq c' \cdot \vvec{j}{1}$,
	we have $e^{a \eps/3} \leq \frac{c'}{c} = \frac{3d}{\eps^2}$ and thus $a \leq \frac{3}{\eps} \cdot \ln(\frac{3d}{\eps^2})$.
	Note also that $1 \geq \frac{\wvec{j}{2}}{\vvec{j}{2}} > e^{-\eps/3}$, hence $\frac{|\vvec{j}{2}-\wvec{j}{2}|}{\vvec{j}{2}} \leq 1-e^{-\eps/3} < \frac{\eps}{3}$, so the resulting difference is $\sum_{j\in J_{mid}} |\vvec{j}{2}-\wvec{j}{2}| \leq \frac{\eps}{3} \sum_{j \in J} \vvec{j}{2}$
	
	To summarise, all vectors $w_j$ satisfy $w_j \leq v_j$ (coordinate-wise) and when limited to their first two coordinates as $\left(\begin{smallmatrix}\wvec{j}{1}\\\wvec{j}{2}\end{smallmatrix}\right)$, are either multiples of $(\begin{smallmatrix}1\\0\end{smallmatrix})$ (if $j \in J_{< c}$), or multiples of $(\begin{smallmatrix}0\\1\end{smallmatrix})$ (if $j \in J_{\geq c'}$),
	or multiples of $(\begin{smallmatrix}1\\c \cdot e^{a\eps/3}\end{smallmatrix})$, for some $a \in \{0,1,\dots,K\}$ for $K := \lfloor \frac{3}{\eps} \cdot \ln(\frac{3d}{\eps^2}) \rfloor$.
	The resulting differences in the first and second coordinate, respectively, are bounded by as
	\[\sum_{j \in J_{\geq c'}} |\vvec{j}{1}-\wvec{j}{1}| \leq \frac{\eps}{d} \sum_{j \in J} \vvec{j}{1}\] 
	\[\sum_{j \in J_{<c}} |\vvec{j}{2}-\wvec{j}{2}| + \sum_{j \in J_{mid}} |\vvec{j}{2}-\wvec{j}{2}| \leq 
	(\frac{\eps}{3} + \frac{\eps}{3}) \sum_{j\in J} \vvec{j}{2}.\] 
	
	We replace the sequence $\vvec{j}{}$ with the sequence $\wvec{j}{}$ and repeat the same process for the 1st and $i$-th coordinate, for $i=3,\dots,d$.
	Since each step only zeroes the 1st coordinate of some vectors and decreases the other coordinates,
	the final resulting sequence $\wvec{j}{}$, when compared to the initial sequence $\vvec{j}{}$ satisfies:
  \[\sum_{j \in J} |\vvec{j}{1}-\wvec{j}{1}| \leq (d-1) \cdot \frac{\eps}{d} \sum_{j \in J} \vvec{j}{1}\]
	\[\sum_{j \in J} |\vvec{j}{i}-\wvec{j}{i}| \leq 
	(\frac{\eps}{2} + \frac{\eps}{2}) \sum_{j\in J} \vvec{j}{i} \quad\mbox{for }i=2,\dots,d.\]
	Each vector $\wvec{j}{}$ either has its 1st coordinate zeroed, or all its other coordinates are determined as $\wvec{j}{1}$ times one of $2+K$ possible ratios.
	Among vectors with $\wvec{j}{1} \neq 0$ there are thus at most $(2+K)^{d-1}$ different vectors, up to rescaling.
	The vectors with $\wvec{j}{1} = 0$ can be inductively reduced as $(d-1)$-dimensional vectors to ${\wvec{j}{}}'$ containing $k(d-1,\frac{\eps}{3})$ distinct vectors up to rescaling (where $k(d-1,\frac{\eps}{3})$ is the constant $k$ obtained by inductive assumption for $d-1$~and~$\frac{\eps}{3}$) and satisfying\looseness=-1
  \[\sum_{j\in J} |\wvec{j}{i} - {\wvec{j}{i}}'| \leq \frac{\eps}{3} \sum_{j\in J}\wvec{j}{i} \leq \frac{\eps}{3} \sum_{j\in J} \vvec{j}{i}.\]
	Altogether, the difference is bounded by $\frac{2\eps}{3}+\frac{\eps}{3}=\eps$ and the number of distinct vectors up to rescaling is bounded by $(2+K)^{d-1}+k(d-1,\frac{\eps}{3})$.
\end{proof}

\begin{lemma}\label{lem:niceComponentToNiceSize}
	For a fixed signature $\sigma$ and $d \in \NN$,
	the class of $\sigma$-structures with maximum connected component size at most
  $d$ is size-pliable.
\end{lemma}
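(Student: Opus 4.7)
The plan is to decompose $\structA$ into its (size $\leq d$) connected components, group them by shape (isomorphism type of the underlying unweighted $\sigma$-structure), use Lemma~\ref{lem:vectors} within each shape to reduce the value vectors to a bounded number of rescaling-classes, then merge by Observation~\ref{obs:merge}.

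\medskip

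For fixed $\sigma, d$, the number of shapes is bounded by some $m = m(\sigma, d)$, and each shape has at most $D := \sum_{f \in \sigma} d^{\ar(f)}$ labelled tuples. Fix a canonical copy $S_t$ of each shape $t$; each component of shape $t$ is then encoded, via an arbitrary isomorphism to $S_t$, by a value vector $v_{t,j} \in \QQ_{\geq 0}^{D_t}$ with $D_t \leq D$. Apply Lemma~\ref{lem:vectors} to each per-shape sequence $(v_{t,j})_j$ with precision $\eps' := \eps / D$; inspection of that lemma's construction shows that $w_{t,j} \leq v_{t,j}$ coordinate-wise, and only $k(D_t,\eps')$ distinct vectors appear up to positive rescaling. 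Let $\structA'$ have the same domain as $\structA$ but with tuple values replaced by the $w_{t,j}$'s; since values have only decreased, the identity map gives $\structA \overcasts \structA'$ at once. Merging the components of $\structA'$ that share a shape and a ray (each such group consists of scalar multiples of a common canonical structure, so Observation~\ref{obs:merge} applies) yields a structure $\structB$ with $\size(\structB) \leq m \cdot k(D,\eps') \cdot d$, bounded by a function of $\sigma, d, \eps$ only. Since $\dopt(\structA',\structB) = 0$, the task reduces to showing $\dopt(\structA,\structA') \leq O(\eps)$.

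\medskip

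The remaining direction---$\opt(\structA',\structC) \geq (1-O(\eps))\opt(\structA,\structC)$ for every $\structC$---is the hard part. Because components are disjoint, both optima split component-wise, and for a component of shape $t$ one can write $\opt(\structA_{t,j},\structC) = \max_g \langle v_{t,j}, \psi_g\rangle$, where $g$ ranges over maps $V(S_t) \to C$ and $\psi_g \in \QQ_{\geq 0}^{D_t}$ records the $\structC$-values on the images of $S_t$'s canonical tuples under $g$. Using an optimiser $g^*_{t,j}$ for $v_{t,j}$ as a feasible choice for $w_{t,j}$ bounds the loss by $\sum_{t,j} \langle v_{t,j} - w_{t,j}, \psi_{g^*_{t,j}}\rangle$, which after crudely pulling out $M_i^{(t)} := \max_g \psi_g^{(i)}$ and invoking the per-coordinate bound from Lemma~\ref{lem:vectors} becomes at most $\eps' \sum_{t,i} M_i^{(t)} V_i^{(t)}$, with $V_i^{(t)} := \sum_j v_{t,j}^{(i)}$. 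On the other hand, choosing $g$ to maximise $\psi_g^{(i)}$ alone already witnesses $\opt(\structA_{t,j},\structC) \geq v_{t,j}^{(i)} M_i^{(t)}$ for every $i$; summing over $j$ and averaging over the $D_t$ coordinates gives $\opt(\structA_{\text{shape }t},\structC) \geq \frac{1}{D_t}\sum_i V_i^{(t)} M_i^{(t)}$. Putting the two bounds together, the total loss is at most $\eps' D\cdot\opt(\structA,\structC) = \eps\cdot\opt(\structA,\structC)$, as desired.

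\medskip

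The main obstacle lies precisely in this last comparison step. The naive attempt to build the reverse overcast directly---sending each source component $\structA'_{t,j}$ to a random target component of the same shape via a transport distribution $p_{jk}$---does not easily succeed, because distinct target components can demand very different coverage patterns across the tuple-type coordinates $i$, whereas a single $p_{jk}$ is forced to meet all of them simultaneously. Working instead at the $\opt$-level, where Lemma~\ref{lem:vectors}'s per-coordinate total-mass control plugs in cleanly, sidesteps this and only pays a dimension factor $D$ that is absorbed upfront by using precision $\eps/D$ in that lemma.
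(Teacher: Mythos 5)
Your decomposition and the use of Lemma~\ref{lem:vectors} followed by Observation~\ref{obs:merge} match the paper's construction, but you finish the argument by a genuinely different route, which is worth noting. The paper pads every component up to size $d$ by dummy vertices and tuples, applies Lemma~\ref{lem:vectors} once, and then proves the $\dopt$ bound \emph{indirectly}: it shows that the rounded structure is close to $\structA$ in the relative \emph{edit} distance $\dedit$ and invokes Lemma~\ref{lem:editToOpt} (which builds a generic overcast by occasionally remapping everything into a single heavy tuple, and requires cleanliness, forcing a small preprocessing step to replace repeated-entry tuples). You group components by isomorphism type rather than padding --- a cosmetic difference --- but then bound $\opt(\structA',\structC) \geq (1-\eps)\opt(\structA,\structC)$ \emph{directly} for every $\structC$: you decompose the optimum over components, bound the loss by $\sum_{t,j}\langle v_{t,j}-w_{t,j}, \psi_{g^*_{t,j}}\rangle$ using the coordinate-wise monotonicity $w_{t,j}\leq v_{t,j}$ (which the proof of Lemma~\ref{lem:vectors} does supply), pull out the per-coordinate maxima $M^{(t)}_i$, and dominate the result by $D\cdot\opt(\structA,\structC)$ by choosing, for each coordinate $i$, a map $g$ maximising $\psi_g^{(i)}$ alone and averaging. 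That inequality chain is correct, and since $\structA'\leq\structA$ gives the trivial direction, it yields $\dopt(\structA,\structA')=O(\eps)$ outright. Your version buys cleanliness (no appeal to edit distance, Lemma~\ref{lem:editToOpt}, or the repeated-entry workaround) at the price of the dimension factor $D$, which you absorb by taking precision $\eps/D$ in Lemma~\ref{lem:vectors} --- morally the same budget the paper spends through $C_\sigma$.
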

\begin{proof}
	We simply present each component $\structA_i$ of $\structA$ as a vector encoding the value of all tuples $(f,\tuple{x})\in \tup{\structA_i}$.
	The dimension of such a vector, for a component of size $d$, is $d':=\sum_{f \in \sigma} d^{\ar(f)}$.
	Smaller components can be treated as components of size $d$ by adding dummy vertices and tuples.
	
	For any $\eps>0$, let $\eps' := \frac{\eps/C_\sigma}{1+\eps/C_\sigma}$, where $C_\sigma = \max_{f\in \sigma}\ar(f)^{\ar(f)}$ .
	The previous lemma guarantees the existence of a number $k=k(\eps',d')$ such that for every $\sigma$-structure $\structA$ with $n$ components of size at most $d$, the corresponding vectors $\vvec{1}{},\dots,\vvec{n}{}$ are approximated by vectors $\wvec{1}{},\dots,\wvec{n}{}$ such that there are at most $k$ distinct vectors up to rescaling and such that, for $i=1\dots d'$, 
	 \[\ {\sum_{j=1\dots n} |\vvec{j}{i} - \wvec{j}{i}|} \leq \eps'{\sum_{j=1\dots n} \vvec{j}{i}}.\]
	 These vectors encode a $\sigma$-structure $\structB$ with only at most $k$ distinct components up to rescaling, all of size at most $d$,
	 which is hence (by Observation~\ref{obs:merge}) equivalent to a $\sigma$-structure $\structB'$ bounded in size by $k \cdot d$.
	 Moreover, the guarantee on $\eps'$ allows us to bound edit distance as follows:
	 \[ \sum_{j=1\dots n} |\vvec{j}{i} - \wvec{j}{i}| \leq \eps' \sum_{j=1\dots n} \vvec{j}{i} \leq  \eps' \sum_{j=1\dots n} \big( \min(\vvec{j}{i},\wvec{j}{i}) + |\vvec{j}{i} - \wvec{j}{i}|\big) \]
	 (for $i=1\dots d'$), hence the edit distance (as defined in Section~\ref{sec:distance}) is
	 \[ \dedit(\structA,\structB) \leq \max_{i=1\dots d'} \frac{\sum_{j=1\dots n} \left|\vvec{j}{i} - \wvec{j}{i}\right|}{\min( \sum_{j=1\dots n} \vvec{j}{i}, \sum_{j=1\dots n} \wvec{j}{i})} \leq \frac{\eps'}{1-\eps'} = \frac{\eps}{C_\sigma}.\]
	 By Lemma~\ref{lem:editToOpt}, $\dopt(\structA,\structB') \leq \dopt(\structA,\structB) + \dopt(\structB,\structB') = \dopt(\structA,\structB) \leq C_\sigma \cdot \dedit(\structA,\structB) \leq \eps$.
	 (While Lemma~\ref{lem:editToOpt} assumes the structures to be loopless, this can be ensured by replacing tuples with repeated elements like $(f,(x_1,x_1,x_2))$, say, with $(f',(x_1,x_2))$ for a new symbol $f'$).
\end{proof}

By Observation~\ref{obs:niceIsTransitive} (transitivity of pliability), we
conclude that for a fixed signature $\sigma$, if a class of $\sigma$-structures
$\A$ is cc-pliable then it is also size-pliable. 
Thus, we have shown equivalence of size-pliability and cc-pliability
(for structures of bounded signatures).

\subsection{From treedepth-pliability to size-pliability}\label{sec:treedepth}
In order to finish the proof of Theorem~\ref{thm:niceWrt},
it remains to show the equivalence of $\td$- and $\size$-pliability.
We do this by extending the above proof for $\cc$- and $\size$-pliability.

One of the main reasons for which treedepth is useful (and easier to work
with than, say, treewidth) is that the only way for a graph of small treedepth
to be large is to have many repeating parts, like in a large star graph (see
e.g.~\cite[Theorem 3.1]{NesetrilM06}). This implies that in a class of graphs of
bounded treedepth, homomorphic cores have bounded size. This does not extend
to weighted graphs or structures in general, but we can approximate the weights or values 
as before.

\begin{lemma}\label{lem:niceTdToNiceSize}
	For a fixed signature $\sigma$ and $d \in \NN$, the class of $\sigma$-structures $\{\structA \colon \td(\structA) \leq d\}$ is size-pliable.
\end{lemma}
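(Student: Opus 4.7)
The plan is to proceed by induction on $d$, extending the argument of Lemma~\ref{lem:niceComponentToNiceSize}. The base case $d = 0$ (empty structure) is trivial. For $d \geq 1$, fix a $\sigma$-structure $\structA$ with $\td(\structA) \leq d$ and a treedepth decomposition forest $T$ of depth $d$. Because the Gaifman graph of $\structA$ lies in the transitive closure of $T$, every non-zero tuple of $\structA$ is supported on a single root-to-leaf chain of $T$. Consequently, $\structA$ decomposes as a disjoint union $\structA_1 \uplus \cdots \uplus \structA_m$ of ``rooted tree substructures,'' one per tree of $T$, where $\structA_i$ has root $r_i$ and $\structA_i - r_i$ has treedepth at most $d-1$.

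The key new tool will be a rooted analogue of Observation~\ref{obs:merge}: for any $\sigma$-structure $\mathbb{S}$ with a distinguished vertex $r$ and positive rationals $\mu_1,\dots,\mu_n$, gluing along $r$ the disjoint rescaled copies $\mu_1 \mathbb{S},\dots,\mu_n \mathbb{S}$ yields a structure at $\dopt$-distance zero from the single copy $(\mu_1+\cdots+\mu_n)\mathbb{S}$ (still rooted at $r$). Indeed, for any right-hand side $\structC$, an optimal homomorphism first picks an image for $r$ and then optimises each glued copy independently, giving
\[
\max_{c\in C} \sum_{j=1}^{n} \mu_j\cdot\opt{\mathbb{S}\mid r\mapsto c,\,\structC}\;=\;\Bigl(\sum_{j=1}^{n}\mu_j\Bigr)\cdot\opt{\mathbb{S},\structC}.
\]
With this in hand, we process each rooted tree substructure $\structA_i$ bottom-up on $T$: at every vertex $v$, consider the sub-subtrees hanging off the children of $v$, each of which is a rooted-at-$v$ structure whose below-$v$ part has strictly smaller treedepth. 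By the inductive hypothesis, applied to a signature enriched with a unary symbol marking the current root, a small perturbation reduces the number of these sub-subtrees up to rescaling to a bounded number. Encoding the remaining scaling factors as vectors of tuple values and invoking Lemma~\ref{lem:vectors} further cuts down the distinct scaling ratios, after which the rooted merging above coalesces each equivalence class into a single rescaled copy. Iterating from the leaves of $T$ upwards, with the $\eps$-budget split across the $d$ levels, produces a structure $\structB$ of size depending only on $\eps$, $d$, and $\sigma$, and Lemma~\ref{lem:editToOpt} finally converts the accumulated edit distance into opt-distance.

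The main obstacle I foresee is that the outer statement concerns unrooted structures, whereas the recursion naturally produces rooted subtree pieces that interact with their ancestors via shared tuples. The natural fix is to strengthen the induction to handle structures equipped with a distinguished root path of length $a$ (so total depth $d+a$), marked in the signature by auxiliary unary symbols; then ``removing the root'' in the recursive step strictly decreases $d$ while keeping the ambient signature bounded. A secondary care point is bounding how the perturbations introduced by Lemma~\ref{lem:vectors} at deeper levels propagate when we go up: since every merging step only rescales existing tuple values by bounded factors, a geometric allocation of $\eps$ over the $d$ levels keeps the cumulative loss under control and yields a bound $k(\eps,d,\sigma)$ on the final size.
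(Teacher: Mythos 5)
Your proposal takes a genuinely different route from the paper, but I don't think it goes through as written; the key missing idea is the paper's ``packing'' trick.

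The paper's proof of this lemma is short and sidesteps the rootedness issues you foresee entirely: for a connected $\structA$ of treedepth $d$, pick $v$ with $\td(\Gaifman(\structA)-v)=d-1$, and form a new signature $\sigma' = \{(f,I) : f\in\sigma,\ I\subseteq\{1,\dots,\ar(f)\}\}$ of lower arities $\ar(f)-|I|$. The ``packed'' $\sigma'$-structure on domain $A\setminus\{v\}$ sets $(f,I)^{\pack(\structA)}(\tuple{x}) := f^\structA(\tuple{x}')$, where $\tuple{x}'$ is $\tuple{x}$ with $v$ reinserted at positions $I$. This removes $v$ from the domain while \emph{losing no information about tuples through $v$} — they live on as lower-arity symbols. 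One then applies the inductive hypothesis directly, for the (still bounded) signature $\sigma'$, and ``unpacks'' at the end, picking up only $+1$ in size. Transitivity (Observation~\ref{obs:niceIsTransitive}) together with the $\cc$-to-$\size$ result then finishes the job.

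Your approach is to instead process the treedepth decomposition bottom-up and merge isomorphic rescaled subtrees, using a rooted analogue of Observation~\ref{obs:merge} and Lemma~\ref{lem:vectors}. The rooted gluing identity you state is correct, but the fix you propose for the central obstacle — marking the root path by auxiliary unary symbols — does not do what you need. A unary symbol can record \emph{which} vertex is the current root, but it cannot encode the values of tuples $f^\structA(v,y_1,\dots)$ involving the root and lower vertices, which is exactly the information you must preserve while the treedepth of the remainder drops; this is precisely what the paper's new symbols $(f,I)$ accomplish and what a unary mark cannot. There is a second gap at the point where you invoke the inductive hypothesis before merging: the hypothesis only guarantees a $\sigma$-structure \emph{close in $\dopt$}, and $\dopt$ is blind to which vertex plays the role of the root (the unary mark may end up spread fractionally, moved, or split across several vertices of the approximant), so the subsequent ``glue along the shared root'' step has no canonical anchor. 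To make the merging route rigorous you would need a strictly stronger \emph{rooted} form of the inductive hypothesis (approximation by a small structure with the same root path kept verbatim), a rooted version of Lemma~\ref{lem:vectors} organised by the finitely many possible shapes of the small subtrees, and an explicit account of how root-path-only tuples are distributed before and recombined after gluing. None of this is impossible, but it is substantial extra machinery; the paper's pack/unpack reduction avoids it entirely by keeping the recursion over honest, unrooted structures with a bounded (and shrinking-arity) signature.
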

\begin{proof}
	\def\pack{\mathrm{pack}}
	\def\unpack{\mathrm{unpack}}
	We prove by induction on $d$ that the statement holds for each signature $\sigma$.
	It suffices to prove the statement for connected $\sigma$-structures of treedepth at most $d$.	
	Indeed, this implies that disconnected $\sigma$-structures of treedepth at most $d$ are 
	  $\cc$-pliable, which we already know implies $\size$-pliability.
	
	For $d=1$, each component of the Gaifman graph is a single vertex and we are done.
	So let $d>1$ and assume that for each signature $\sigma$ and each $\eps>0$, there is a $k=k(d-1,\sigma,\eps)$ such that every $\sigma$-structure with treedepth $\leq d-1$ has a $\sigma$-structure of size $\leq k$ at opt-distance at most $\eps$.
	Let $\sigma$ be a signature and $\structA$ a $\sigma$-structure of treedepth $d$.
	Let $G$ be the Gaifman graph of $\structA$.
	Since it is connected, we can find a vertex $v \in V(G) = A$ such that $\td(G-v)=d-1$.
	
	We now define a new signature $\sigma'$ and a $\sigma'$-structure $\pack(\structA)$ whose Gaifman graph will be $G-v$, but will contain all the information about $\structA$.
	Let $\sigma' = \{(f,I) \colon f \in \sigma, I \subseteq \{1,\dots,\ar(f)\}\}$ and $\ar((f,I)) := \ar(f) - |I|$, for $(f,I)\in \sigma'$.
	For $\tuple{x} \in (A-v)^{\ar((f,I))}$, let $(f,I)^{\pack(\structA)}(\tuple{x}) := f^\structA(\tuple{x'})$ where $\tuple{x'} \in A^{\ar(f)}$ is the tuple obtained from $\tuple{x}$ by introducing $v$ at positions $I$.
	Note that $\sigma'$ is bounded: $|\sigma'| = \sum_{f \in \sigma} 2^{\ar(f)}$.
	
	The $\sigma'$-structure $\pack(\structA)$ has treedepth $d-1$, so by inductive assumption there is a $\sigma'$-structure $\structB$ at opt-distance at most $\eps$ with size at most $k=k(d-1,\sigma',\eps)$.
	We define $\unpack(\structB)$ to be the $\sigma$-structure with domain $B \cup \{v\}$
	and $f^{\unpack(\structB)}(\tuple{x}) := (f,I)^\structB(\tuple{x'})$, where $I$ is the set of positions in $\tuple{x}$ containing $v$ and $\tuple{x'}$ is the tuple obtained by removing these positions.
	It is straightforward to check that $\unpack(\pack(\structA))$ is equal to $\structA$ and that for any $\sigma'$-structures $\structA',\structB'$ we have $\dopt(\unpack(\structA'),\unpack(\structB')) \leq \dopt(\structA',\structB')$, hence 
	$$\dopt(\structA,\unpack(\structB)) = \dopt(\unpack(\pack(\structA)),\unpack(\structB)) \leq \dopt(\pack(\structA),\structB) \leq \eps.$$
	Hence $\unpack(\structB)$ is a $\sigma$-structure at opt-distance $\leq \eps$ from $\structA$ of size $\leq k+1$.
\end{proof}

By Observation~\ref{obs:niceIsTransitive} (transitivity of pliability), this
shows that, for a class of $\sigma$-structures, $\td$-pliability implies
$\size$-pliability. Since $\td\leq\size$, Observation~\ref{obs:trivial} shows
that $\size$-pliability implies $\td$-pliability, thus concluding the proof of Theorem~\ref{thm:niceWrt}:

\begin{corollary}
	A class of $\sigma$-structures is $\td$-pliable if and only if it is size-pliable.
\end{corollary}

\section{Hyperfinite classes are fractionally fragile: proof of Theorem~\ref{thm:hyperfinite}}
\label{sec:hyperfinite}

Recall that class of graphs is if \emph{hyperfinite}
if for every $\eps>0$ there is a $k \in \NN$ such that
every graph in the class can be turned into a graph with
connected components of size at most $k$ by removing
an at-most-$\eps$ fraction of all edges.%
\footnote{In other work, the definition of \emph{hyperfinite} often considers
	the number of removed edges divided by the total number of vertices.
  However, they only deal	with bounded degree graphs (in which the number of
  edges is linear in the number of vertices), which makes
  the two definitions equivalent.}
A class of graphs is \emph{monotone} if it is closed under taking subgraphs.
In this section, we prove the following result.

\begin{theorem*}[Theorem~\ref{thm:hyperfinite} restated]
	Let $\Gg$ be a monotone class of graphs. The following are equivalent:
	\begin{itemize}
		\item $\Gg$ is hyperfinite;
		\item $\Gg$ is fractionally-$\tw$-fragile and has bounded degree;
		\item $\Gg$ is fractionally-$\cc$-fragile;		
		\item $\AOver{\Gg}{(r)}$ is $\cc$-pliable for any $r\geq 2$.
	\end{itemize}
\end{theorem*}

\medskip
\noindent
The last two bullets are shown equivalent by Lemma~\ref{lem:fragileIffNiceGen};
the middle two bullets were shown equivalent by Dvo\v{r}\'ak~\cite[Observation 15, Corollary 20]{Dvorak16}.
It remains to prove their equivalence with the first bullet point.

\begin{lemma}\label{lem:hyperfiniteIsFragile}
	Let $\Gg$ be a monotone class of graphs.
	$\Gg$ is hyperfinite if and only if it is fractionally-$\cc$-fragile.
\end{lemma}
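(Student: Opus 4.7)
My plan is to prove the equivalence in two directions.

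For the direction $(\Leftarrow)$, I would observe that the parameter $\cc$ fits the hypothesis of Lemma~\ref{lem:fragile}: it is monotone, and any graph with $\cc(G)\leq c$ has components of size at most $c$, hence at most $(c-1)|V(G)|/2$ edges, so its average degree is bounded by $c-1$. By Lemma~\ref{lem:fragile} applied with $\param=\cc$, fractional $\cc$-fragility (condition~$(i)$) is equivalent to its weighted-edge version (condition~$(iv)$). Specialising the latter to the uniform edge weight $w\equiv 1$ recovers exactly the definition of hyperfiniteness.

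For the direction $(\Rightarrow)$, suppose $\Gg$ is monotone and hyperfinite. I would first show that $\Gg$ has bounded maximum degree, say $\Delta$: otherwise, monotonicity forces $K_{1,n}\in\Gg$ for arbitrarily large $n$, but reducing $\cc(K_{1,n})\leq k$ requires removing at least $n-k$ of its $n$ edges, contradicting hyperfiniteness as $n\to\infty$. By Lemma~\ref{lem:thin}, proving fractional $\cc$-fragility then reduces to showing: for every $\eps>0$ there is $k$ such that for every $G\in\Gg$ and every $w\colon V(G)\to\Qn$, some $X\subseteq V(G)$ satisfies $w(X)\leq\eps w(V(G))$ and $\cc(G-X)\leq k$. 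After scaling we may assume $w$ is non-negative integer-valued. The case $w=\mathbb{1}_S$ is clean: $G[S]\in\Gg$ by monotonicity, so hyperfiniteness applied to $G[S]$ with parameter $\eps/\Delta$ gives $F\subseteq E(G[S])$ of size at most $\eps|S|/2$ with $\cc(G[S]-F)\leq k$; then $X:=V(F)\cup(V\setminus S)$ is a modulator of $G$ with $\cc(G-X)\leq k$ and $|X\cap S|\leq 2|F|\leq\eps|S|=\eps w(V(G))$.

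For general $w$ I would stratify vertices by weight levels $V_j:=\{v:2^j\leq w(v)<2^{j+1}\}$ and aim to combine per-level applications of the indicator-case argument into a single $X$ with $|X\cap V_j|\leq\eps|V_j|$ for each $j$; the telescoping estimate $w(X)\leq\sum_j 2^{j+1}|X\cap V_j|\leq 2\eps\sum_j 2^j|V_j|\leq 2\eps w(V(G))$ then yields the weighted bound. The main obstacle is this simultaneous combination: a naive union of per-level modulators from hyperfiniteness applied to each $G[V_j]\in\Gg$ need not give a $\cc\leq k$ modulator of $G$, since edges between different levels can merge components into arbitrarily large ones. I expect to handle this by additionally applying hyperfiniteness to a ``cross-level'' subgraph (also in $\Gg$ by monotonicity) and aggregating all modulators, using the bounded degree $\Delta$ to uniformly translate edge modulators into vertex modulators. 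Budgeting the hyperfiniteness parameter across $O(\log\max w)$ weight levels so that both per-level thinness and the final $\cc$-bound are preserved will be the most delicate accounting.
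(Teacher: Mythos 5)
The $(\Leftarrow)$ direction is correct and essentially matches the paper: $\cc$ satisfies the hypotheses of Lemma~\ref{lem:fragile}, so condition~$(iv)$ is equivalent to fractional $\cc$-fragility, and restricting $(iv)$ to the uniform weight $w\equiv 1$ is exactly hyperfiniteness (the paper actually notes hyperfiniteness corresponds to $\{0,1\}$-weights, which is the same thing up to scaling on a monotone class).

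For $(\Rightarrow)$, the bounded-degree observation and the indicator-case reduction are both correct and in the spirit of the paper (the paper works with the edge-weighted version $(iv)$ of Lemma~\ref{lem:fragile}, you use the vertex version $(iii)$; these are interchangeable by Lemma~\ref{lem:fragile}). But the general weighted case is not resolved, and the strategy you sketch — ``budgeting the hyperfiniteness parameter across $O(\log\max w)$ weight levels'' — cannot work as stated. The number of distinct weight levels is unbounded over the class $\Gg$, and the hyperfiniteness function $k=k(\eps)$ only gives you a uniform bound if the parameter $\eps$ you feed it is bounded away from $0$ uniformly over $\Gg$. If you split the budget $\eps$ over $O(\log\max w)$ levels, or need a $\cc$-bound that somehow accounts for all levels at once, the resulting $k$ grows with $\max w/\min w$, violating the requirement that $k$ depend only on $\eps$.

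The paper resolves exactly this obstacle, and the mechanism is worth internalising because you have correctly identified it as the crux. Bucket edges by weight with geometric ratio $\frac{\eps}{6\Delta}$, and set $L:=\lceil 3/\eps\rceil$. By an averaging argument over the $L$ choices of residue $j^*$, you can delete every $L$-th bucket at total cost at most $\frac{\eps}{3}w(E(G))$. This partitions the remaining edges into \emph{blocks} of $L-1$ consecutive buckets, between which there is a multiplicative gap of factor $\frac{\eps}{6\Delta}$. Using this gap and bounded degree $\Delta$, you can delete all cross-block edges incident to each block at cost at most $\frac{\eps}{3}$ of the block's weight (because the cross edges have weight a factor $\frac{\eps}{6\Delta}$ smaller and there are at most $2\Delta$ per edge). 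After both deletions, each connected component lives entirely inside a single block, where weights are within a fixed ratio $\alpha=\left(\frac{\eps}{3\Delta}\right)^{L-1}$ depending only on $\eps$ and $\Delta$, so your uniform-weight case applies with a \emph{single} parameter $\eps'=\frac{\alpha\eps}{3}$ and hence a single $k=k(\eps')$. The point is that the deletion of every $L$-th bucket caps, at $L-1$, the number of levels that ever need to be treated together — you never sum over $O(\log\max w)$ levels. Without this capping device (or an equivalent one), your accounting does not close.
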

\begin{proof}
	Hyperfiniteness of a monotone class $\Gg$ is equivalent to hyperfiniteness of 0-1-edge weighted graphs in $\Gg$:
	\[\forall_{\eps>0} \exists_k \forall_{G\in\Gg} \forall_{w\colon E(G) \to \{0,1\}} \exists_{F\subseteq E(G)}\ w(F)\leq \eps w(E(G))\ \text{and}\ \cc(G-F)\leq k.\]	
	Hence it is trivially implied by the edge version of fractional-$\cc$-fragility (which allows arbitrary nonnegative weights) in Lemma~\ref{lem:fragile}.
	It remains to show the other direction.
	
	By definition of hyperfiniteness, for every $\eps>0$ there is a $k=k(\eps)$ such that
	for all graphs $G \in \Gg$, one can remove a set of edges $F$ with $|F|\leq \eps|E(G)|$ so that $\cc(G-F) \leq k$.
	Observe that graphs in $\Gg$ have degree bounded by $\Delta := 2k(\frac{1}{2})$;
	otherwise, a graph with degree $\geq 2k(\frac{1}{2})+1$ would contain a star with that many edges as a subgraph and removing half of these edges always leaves a component with at least $k(\frac{1}{2})+1$ edges and vertices.
	
	We aim to show that
	\[\forall_{\eps>0} \exists_k \forall_{G\in\Gg} \forall_{w\colon E(G) \to \QQ_{\geq0}} \exists_{F\subseteq E(G)}\ w(F)\leq \eps w(E(G))\ \text{and}\ \cc(G-F)\leq k.\]
	For $\eps>0$, let $\eps'$ be chosen later and let $k'=k(\eps')$.
	Let $G \in \Gg$ and $w \colon E(G) \to \QQ_{\geq 0}$.
	We want to find a set $F \subseteq E(G)$ such that  $w(F) \leq \eps w(E(G))$ and $\cc(G-F) \leq k'$.	
	Note that our task would be trivial if the weights of all edges were within a constant factor $\alpha$ of each other:
	just set $\eps' = \frac{\eps}{\alpha}$, find $F\subseteq E(G)$ such that $|F| \leq \eps' |E(G)|$ and $\cc(G-F) \leq k'$ and conclude that $w(F) \leq \alpha\eps' w(E(G)) = \eps w(E(G))$.
	
	In general, let us partition the edges of $G$ into buckets depending on their weight:
	for $i \in \ZZ$, let $B_i := \{ e \in E(G) \mid
  \left(\frac{\eps}{6\Delta}\right)^{i} \geq w(e) >
  \left(\frac{\eps}{6\Delta}\right)^{i+1}\}$ (edges with weight zero can be
  removed without loss of generality).\footnote{Note that all but a finite
  number of $B_i$'s will be empty.}
	For $L :=\lceil \frac{3}{\eps}\rceil$, we will remove every $L$-th bucket from $G$.
	That is, for $j \in \{0,\dots,L-1\}$, let $B'_j := \bigcup_{i \in \ZZ} B_{iL + j}$.
	Let $j^* \in \{0,\dots,L-1\}$ be such that $w(B'_{j^*})$ is minimum; 
	since $B'_0 \cup \dots \cup B'_{L-1}$ is a partition of $E(G)$, $w(B'_{j^*}) \leq \frac{1}{L} w(E(G)) \leq \frac{\eps}{3} w(E(G))$.
	We can thus remove the edges $B'_{j^*}$ from $G$.
	Since this removes every $L$-th bucket,
	the remaining edges are partitioned into blocks $C_i := B_{iL+j^*+1} \cup \dots \cup B_{iL+j^*+L-1}$ of $L-1$ buckets for $i \in \ZZ$.
	Each block contains weights within a constant factor of each other: 
	$\min\{w(e) \colon e \in C_{i}\} \geq \left(\frac{\eps}{6\Delta}\right)^{L-1} \cdot \max\{w(e) \colon e \in C_i\}$.
	Moreover, since there is a gap of one bucket in between one block and the next,
	$\max\{w(e) \colon e \in C_{i+1}\} < \frac{\eps}{6\Delta} \cdot \min\{w(e) \colon e \in C_i\}$.
	
	The latter property allows us to disconnect the blocks from each other.
	Indeed, 
	for each $C_i$ with increasing $i$ (starting from the smallest $i$ such that $C_i$ is non-empty),
	we shall remove all remaining edges on the boundary of $C_i$:
	\[F_i := \{ e \colon e \in C_j \text{ for some }j > i,\ e \text{ shares a vertex with some }e' \in C_i\}.\]
	Since $|F_i| \leq 2\Delta |C_i|$ and $\max\{w(e) \colon e \in F_i\} < \frac{\eps}{6\Delta} \cdot \min\{w(e) \colon e \in C_i\}$, we have $w(F_i) < \frac{\eps}{3} w(C_i)$.
	In total, for $F := \bigcup_{i \in \ZZ} F_i$, we have $w(F) < \frac{\eps}{3} w(E(G))$.
	For all $i\in \ZZ$, the edges of $C_i - F$ are disjoint from edges of $C_j - F$ for all $j > i$.
	Since the sets $C_i-F$ partition edges of $G-B'_{j^*}-F$, this means that every connected component of 
	$G-B'_{j^*}-F$ is contained in one of the edge sets $C_i-F$.
	
	Finally, since 	$\min\{w(e) \colon e \in C_i - F\} \geq \alpha \cdot \max\{w(e) \colon e \in C_i - F\}$ for $\alpha := \left(\frac{\eps}{3\Delta}\right)^{L-1}$,
	we have reduced our problem to the trivial case when weights are all within a constant factor of each other.
	That is, let $\eps' := \frac{\alpha\eps}{3}$.
	For each $i\in \ZZ$, let $G_i$ be the subgraph of $G-B'_{j^*}-F$ formed from connected components contained in $C_i-F$.
	By assumption, there is a set $F'_i \subseteq E(G_i)$  such that $|F'_i| \leq \eps' |E(G_i)|$ and $\cc(G_i-F'_i) \leq k(\eps') = k'$.
	Then $w(F'_i) \leq \max\{w(e) \colon e \in E(G_i)\} \cdot |F'_i| \leq \frac{1}{\alpha} \cdot \min\{w(e) \colon e \in E(G_i)\} \cdot \eps' \cdot |E(G_i)| \leq \frac{\eps'}{\alpha} w(E(G_i)) = \frac{\eps}{3} w(E(G_i))$.
	In total, for $F' := \bigcup_{i \in \ZZ} F'_i$ we have $w(B'_{j^*} \cup F \cup F') \leq (\frac{\eps}{3} + \frac{\eps}{3} + \frac{\eps}{3}) w(E(G))$
	and $\cc(G-B'_{j^*} - F - F') \leq k'$.
\end{proof}

\section{Dense graphs are pliable: proof of Theorem~\ref{thm:denseIsNice}}
\label{sec:dense}

Our goal is to prove Theorem~\ref{thm:denseIsNice}.

\begin{theorem*}[Theorem~\ref{thm:denseIsNice} restated]
	Let $c>0$ and let $\A$ be a class of graphs with at least $cn^2$ edges.
	Then $\A$ is $\tw$-pliable.
	Consequently, $\HOM(\A,-)$ admits a PTAS.
\end{theorem*}

In order to do so, we prove the following result.

\begin{theorem}\label{thm:dense-size}
	Let $c>0$. The class of (unweighted, undirected) graphs with at least $cn^2$ edges is size-pliable.
\end{theorem}

Theorem~\ref{thm:dense-size} implies Theorem~\ref{thm:denseIsNice}. Indeed, since
$\tw\leq\size$, by Observation~\ref{obs:trivial} we have $\size$-pliability implies
$\tw$-pliability. By Theorem~\ref{thm:main1}, $\tw$-pliability implies a PTAS.

We start with simple examples of dense graphs. Observe that large cliques
can be arbitrarily well approximated by cliques of constant size $\lceil\frac{2}{\eps}\rceil$ (up to
normalising total edge weights).

\begin{example}\label{ex:clique}
	Let $0<\eps<1$ and let $n,k \geq \frac{2}{\eps}$.
	Then $\dopt(K_n,\lambda K_k) \leq \eps$, for $\lambda = \binom{n}{2}/\binom{k}{2}$.
\end{example}

\begin{proof}
	For $n,k\geq 2$, define an overcast $\omega$ by taking a random function $V(K_n) \to V(K_k)$ (each vertex is placed independently uniformly at random).
	Then for each $e \in E(K_k)$,
	\[\EX_{g \sim \omega} |g^{-1}(e)| = \sum_{e' \in E(K_n)} \EX_{g \sim \omega} [g(e')=e] =\binom{n}{2} \frac{2}{k^2} = \lambda \cdot \binom{k}{2} \cdot \frac{2}{k^2} = (1-\frac{1}{k}) \lambda . \]
	Therefore $K_n \overcasts (1-\frac{1}{k}) \lambda K_k$.
	Symmetrically $\lambda K_k \overcasts (1-\frac{1}{n}) K_n$.	
	Since $1-x \geq e^{-2x}$ for $0\leq x\leq \frac{1}{2}$,
	this means $\dopt(K_n, \lambda K_k) \leq \frac{2}{\min(n,k)}$.
	Consequently if $n , k \geq \frac{2}{\eps}$, then $\dopt(K_n, \lambda K_k) \leq \eps$.
\end{proof}

In particular, this means the class $\A$ consisting of all clique graphs is
size-pliable. This corresponds to an easy PTAS for graph $\HOM(\A,-)$: the maximum
graph homomorphism from $K_n$ to $G$ is well approximated by finding the maximum
graph homomorphism from a constant size $K_k$ to $G$ and mapping $K_n$ randomly to the resulting $\leq k$ vertices in $G$.
The situation is very different for Densest Subgraph problems, because they
disallow choosing two equal vertices in $G$ (see Observation~\ref{obs:densest-hard}).

As another important example, consider Erd\H{o}s-R\'{e}nyi random graphs $G(n,p)$ (for constant $p\in(0,1)$; each pair in $\binom{n}{2}$ becomes an edge independently with probability $p$).
Any two such graphs are similar to each other (and in fact to $p K_n$, as well
as to $\lambda K_k$ for constant $k$ and suitable $\lambda$); more precisely, we
have:

\begin{example}\label{ex:random}
	Let $p,\eps>0$ be constants.
	Let $G_1,G_2$ be independent Erd\H{o}s-R\'{e}nyi random graphs $G(n,p)$.
	Then $\Pr[\dopt(G_1,G_2) < \eps] \to 1$ as $n\to\infty$.
\end{example}
\begin{proof}[Proof sketch]
	Let $k$ be a sufficiently large constant depending on $\eps$ only.
	It is sufficient to prove that $\Pr[\dopt(G_1,\lambda K_k) < \frac{\eps}{2}] \to 1$ as $n\to \infty$.
	The rescaling factor here is $\lambda := p \binom{n}{2}/\binom{k}{2}$.
	The number of edges of $G(n,p)$ is concentrated around $p \binom{n}{2}$,
	so just as before a random function gives $G(n,p) \overcasts (1-\frac{1}{k})\lambda K_k \overcasts e^{-\eps/2} \lambda K_k$ with high probability (tending to 1 as $n\to\infty$).
	
	For the other direction, we use the fact that the number of $k$-cliques in $G(n,p)$ is concentrated around the mean $\binom{n}{k} p^{\binom{k}{2}}$
	and, more strongly, the number of $k$-cliques containing any given edge of $G(n,p)$ (conditioned on it being an edge) is concentrated around the mean $\binom{n-2}{k-2} p^{\binom{k}{2}-1}$.
	The concentration is good enough that with high probability, \emph{every} edge of $G(n,p)$ is contained in 
	$(1\pm\frac{\eps}{4}) \binom{n-2}{k-2} p^{\binom{k}{2}-1}$ $k$-cliques (see e.g.~\cite{Spencer99}).
	Thus if we take $\omega$ by mapping $\lambda K_k$ injectively to a random $k$-clique in $G(n,p)$, then w.h.p. for each edge $e$ of $G(n,p)$ we have
	\[ \EX_{g \sim \omega} |g^{-1}(e)| \geq (1-\frac{\eps}{4}) \binom{n-2}{k-2} p^{\binom{k}{2}-1} / \binom{n}{k} p^{\binom{k}{2}} = (1-\frac{\eps}{4}) \frac{k(k-1)}{n(n-1)} p^{-1} = (1-\frac{\eps}{4}) \lambda^{-1}.\]
	Thus $\lambda K_k \overcasts e^{-\eps/2} G_1$ and consequently $\dopt(G_1,\lambda K_k) \leq \frac{\eps}{2}$ w.h.p.
\end{proof}

To show Theorem~\ref{thm:denseIsNice}, we extend the above informal proof to any class of dense graphs.
This is possible because of \emph{Szemer\'edi's regularity lemma}~\cite{Szemeredi78:reglar},
which, very roughly speaking, guarantees that all such graphs are random-like.
This allows to provide similar bounds on the number of $k$-cliques containing any given edge,
a fact known as the \emph{extension lemma}, though we prove a variant that is somewhat tighter than usual.

\begin{remark}\label{rem:tournaments}
Note that the above proof sketch does not work for random tournaments (orientations of cliques): if we try to approximate them by the small graph $\frac{1}{2} \overset\leftrightarrow{K_k}$ (each arc taken with weight $\frac{1}{2}$), then every overcast from it to a tournament will always lose at least half of the total weight.
If instead we tried to take a small random tournament, no overcast to it from the big random tournament would work.
Indeed, Lemma~\ref{lem:niceIsMad} in Section~\ref{sec:non-examples} shows the class of tournaments is not pliable
(neither are ``random tournaments'', i.e., the proof can be adapted to show that
any class which contains a random tournament with constant probability cannot be
pliable)
and in fact the problem $\HOM(\A,-)$ for the class of tournaments $\A$ is hard to approximate, as we
show in Lemma~\ref{lemma:hard-fixed-sign-all} in Section~\ref{sec:hardness}.
This is why, even though variants of the regularity lemma exist for directed graphs and even more general structures, we limit our discussion to undirected graphs (the proofs do extend to $[0,1]$-weighted undirected graphs, however).
\end{remark}

In the rest of this section, we will prove Theorem~\ref{thm:denseIsNice}.
While we only prove this for unweighted graphs, it will be notationally convenient to treat them as $\{0,1\}$-weighted graphs, with $w_G(u,v) := [uv \in E(G)]$.
For sets $U,V \subseteq V(G)$, we denote by $w_G(U,V) := \sum_{u \in U} \sum_{v \in V} w_G(u,v)$ the number of edges between $U$ and $V$ (or their total weight).
The regularity lemma states that every graph can be partitioned into a bounded number of parts so that the bipartite graph between every two parts is random-like in the following strong sense:

\begin{definition}
	A bipartite graph $G=(V_1,V_2,E)$ of density $d := \frac{w_G(V_1,V_2)}{|V_1| |V_2|}$
	is \emph{$\eps$-homogeneous}
	if for all $W_1\subseteq V_1$, $W_2\subseteq V_2$,
	\[w_G(W_1,W_2) = d |W_1| |W_2| \, \pm \, \eps |V_1| |V_2|.\]
	
	For an $n$-vertex graph $G$ and an integer $k$, an \emph{$\eps$-regular $k$-partition} of $G$ is a partition
	$V_1, \dots, V_k$ of $V(G)$ such that
	$|V_i| \in \{\lfloor\frac{n}{k}\rfloor,\lceil\frac{n}{k}\rceil\}$ for $i \in [k]$ and
	the bipartite graph $(V_i, V_j, E(G) \cap V_i \times V_j)$ is $\eps$-homogeneous\footnote{The usual statement of the regularity lemma replaces $\eps$-homogeneity (with additive error) by a notion called \emph{$\eps$-regularity} (with relative error, but holding only for $|W_i|\geq \eps |V_i|$).
		The two are however easily shown to be equivalent, up to a change from $\eps$ to $\eps^{1/3}$.},
	for all $ij \in \binom{[k]}{2}$.
\end{definition}

We use the following strong version of Szemer\'{e}di's regularity lemma (see Theorem 2.2 in~\cite{RodlShacht07a}, Lemma 5.2. in~\cite{LovaszSzegedy07}, or Chapter~9 in~\cite{LovaszBook} for a detailed discussion).

\begin{theorem}[Regularity Lemma]\label{thm:regularity}
	For every $\eps_1 > 0$ and every non-decreasing $f \colon \NN \to \NN$, there is an integer $k$ such that
	for every sufficiently large graph $G$, one can add/remove $\eps_1 |V(G)|^2$ edges to obtain a graph
	which admits a $\frac{1}{f(k')}$-regular $k'$-partition for some $\frac{1}{\eps_1} \leq k'\leq k$.
\end{theorem}

Another way to view this is to define, for a partition $\Pp = (V_1,\dots,V_k)$ of a graph $G$,
the \emph{quotient graph $G_{/\Pp}$} as the weighted graph with vertex set $[k]$ and
weights $w_{G_{/\Pp}}(i,j) := w_G(V_i,V_j)$ for $(i,j) \in [k]^2$.
The quotient graph for an $\eps$-regular partition is then a graph of bounded
size that is \emph{close} to the original graph: the notion of closeness arising
from the definition of $\eps$-homogeneity is known as \emph{cut distance} (see
Chapter~8 in~\cite{LovaszBook}), but later we show the same holds for opt-distance:

\begin{theorem}\label{thm:regularityOptDistance}
	Let $G$ be a graph with density $c := \frac{|E(G)|}{n^2}$.
	For $0<\eps_0<1$, suppose $G$ has an $\eps$-regular $k$-partition $\Pp = (V_1,\dots,V_k)$
	with $\frac{1}{k} \leq \frac{c}{10} \frac{\eps_0}{1+\eps_0}$ and $\eps \leq \left(\frac{1}{k}\right)^{8k^2}$.
	Then $\dopt(G,G_{/\Pp}) \leq \eps_0$.
\end{theorem}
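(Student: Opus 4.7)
The plan is to establish the two overcast relations $G \overcasts e^{-\eps_0} G_{/\Pp}$ and $G_{/\Pp} \overcasts e^{-\eps_0} G$ required for $\dopt(G, G_{/\Pp}) \le \eps_0$. The forward direction $G \overcasts G_{/\Pp}$ (which is strictly stronger) is immediate from the deterministic projection $g \colon V(G) \to [k]$ sending each $v \in V_i$ to $i$: for every tuple $(i, j)$ of $G_{/\Pp}$,
\[ w_G(g^{-1}(i,j)) = \sum_{u \in V_i,\, v \in V_j} w_G(u, v) = w_G(V_i, V_j) = w_{G_{/\Pp}}(i, j),\]
so the cover is exact.

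For the harder direction $G_{/\Pp} \overcasts e^{-\eps_0} G$, I would follow the template from Example~\ref{ex:random}. Let $\alpha := c\eps_0/(10(1+\eps_0))$ be a density threshold and let $H$ be the ``heavy pairs graph'' on $[k]$ with edge set $\{(i, j) : d_{ij} \ge \alpha\}$. Sample $g \colon [k] \to V(G)$ uniformly from the partite $H$-embeddings, i.e.\ maps with $g(i) \in V_i$ for every $i$ and $(g(i), g(j)) \in E(G)$ for every $(i, j) \in E(H)$. Two applications of the counting lemma for $\eps$-regular partitions (once to count all such $g$, once with $g(i_0) = u$ and $g(j_0) = v$ pinned) give, for each heavy edge $(u, v) \in E(V_{i_0}, V_{j_0})$ with ``typical'' endpoints,
\[ \Pr_\omega[\,g(i_0) = u,\ g(j_0) = v\,] = \frac{1 \pm o(1)}{d_{i_0 j_0}\,|V_{i_0}|\,|V_{j_0}|} = \frac{1 \pm o(1)}{w_G(V_{i_0}, V_{j_0})},\]
so the expected cover is approximately $w_{G_{/\Pp}}(i_0, j_0)/w_G(V_{i_0}, V_{j_0}) = 1$. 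The super-exponentially small $\eps \le k^{-8k^2}$ is chosen so that, across the $\binom{k}{2}$ applications of the counting/extension lemma (each paying a factor $1/\alpha$ in error), the cumulative multiplicative error stays well below $\eps_0$.

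Two families of edges escape the above estimate, and handling them is the main obstacle: internal edges with both endpoints in the same $V_i$ (no partite embedding can map two coordinates into one part), and light-pair edges in pairs with $d_{ij} < \alpha$ (for which pinning leaves the cover collapsing to $d_{ij} < e^{-\eps_0}$). I would reroute through a cleaned intermediate $G^{\mathrm{hvy}}$ obtained from $G$ by zeroing all internal and light-pair edges, together with the analogous $G_{/\Pp}^{\mathrm{hvy}}$. The total removed weight is at most $n^2/k + \alpha k^2 (n/k)^2 \le 2\alpha n^2 = \Oh(\eps_0) \cdot |E(G)|$ by the hypothesis $1/k \le \alpha$, so Lemma~\ref{lem:editToOpt} gives $\dopt(G, G^{\mathrm{hvy}})$ and $\dopt(G_{/\Pp}, G_{/\Pp}^{\mathrm{hvy}})$ both $\Oh(\eps_0)$. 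The partite $H$-embedding overcast above then yields $\dopt(G^{\mathrm{hvy}}, G_{/\Pp}^{\mathrm{hvy}}) = \Oh(\eps_0)$, and the triangle inequality closes the argument. The hardest technical step will be the per-edge version of the extension lemma: the marginal estimate above must hold for all but a negligible fraction of heavy edges (not merely on average), so that the remaining atypical edges can be absorbed into the cleaning step; the super-small $\eps$ bound is exactly what makes this possible.
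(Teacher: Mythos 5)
Your plan matches the paper's proof in all essentials: the forward overcast is the deterministic projection, the backward one samples a random partite embedding of the heavy-pairs graph (the paper uses threshold $d_{ij}\ge 1/k$ rather than your $\alpha$, but both fit the hypothesis), justified by a Counting Lemma for the normalizing constant and an Extension Lemma for the per-edge marginals, with internal, light-pair, and extension-lemma-violating edges removed beforehand. The only notable difference is bookkeeping: the paper cleans $G$ alone and chains $G_{/\Pp}\overcasts e^{-\eps_0/2}G'\overcasts e^{-\eps_0}G$ (keeping the budget one-sided, since $G\overcasts G_{/\Pp}$ is free), whereas your triangle inequality spends the $\eps_0$ across three terms and so would need slightly more slack in the choice of $k$, but this is a routine adjustment and does not affect the substance of the argument.
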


With this view it is easy to see that classes of dense graphs are pliable. Formally:
\begin{proof}[Proof of Theorem~\ref{thm:denseIsNice} assuming Theorem~\ref{thm:regularityOptDistance}]
	Let $\A$ be a class of graphs with $\geq cn^2$ edges.
	We want to show that for every $\eps_0 > 0$ there is a $k$ such that every $G \in \A$ has a weighted graph $H$ of size at most $k$ with $\dopt(G,H) \leq \eps_0$.
	
	For $\eps_0>0$, let $\eps_1 := \frac{c}{10} \cdot \frac{\eps_0/2}{1+\eps_0/2}$.
	Note that we can assume that all sufficiently large graphs $G \in \A$ have no loops: if $|V(G)| \geq \frac{1}{\eps_1}$, then the number of loops is at most $|V(G)| \leq \frac{1}{c|V(G)|} |E(G)| \leq \frac{\eps_1}{c} |E(G)|$.
	Hence by removing them we obtain a graph $G'$ such that $\dedit(G,G') \leq \frac{|E(G)|-|E(G')|}{\min(|E(G)|,|E(G')|)} \leq \frac{\eps_1/c}{1-\eps_1/c} \leq \eps_0/20$.
	By Lemma~\ref{lem:editToOpt}, $\dopt(G,G') \leq \eps_0/5$ (the direction $G \overcasts G'$ is trivial, while the other direction only requires $G'$ to have no loops).
	
	Let $f(k) := k^{8k^2}$.
	By the Regularity Lemma (Theorem~\ref{thm:regularity}), there is an integer $k \geq \frac{1}{\eps_1}$ such that for every graph $G$ of size $>k$, one can 
	add/remove $\eps_1 |V(G)|^2$ edges to obtain a graph $H$ which admits an $\frac{1}{f(k')}$-regular $k'$-partition $\Pp$ for some $\frac{1}{\eps_1} \leq k' \leq k$.
	If $G \in \A$, then $G$ has at least $cn^2$ edges, so $\dedit(G,H) \leq \frac{\eps_1}{c-\eps_1} \leq \eps_0/20$.
	Since we can assume that $G$ is loopless, by Lemma~\ref{lem:editToOpt}, $\dopt(G,H) \leq \eps_0/5$.
	By Theorem~\ref{thm:regularityOptDistance}, $\dopt(H,H_{/\Pp}) \leq \eps_0/2$.
	Hence $H_{/\Pp}$ is the graph of size at most $k$ we seek, at opt-distance at most $\eps_0$ from $G$.
\end{proof}

The strategy of the proof of Theorem~\ref{thm:regularityOptDistance} is very similar to Example~\ref{ex:random}.
One direction is trivial: an overcast from $G$ to $G_{/\Pp}$ is given simply by deterministically mapping all of $V_i$ to $i$, for $i \in [k]$.
For the other direction, we will take a subgraph $F$ of $G_{/\Pp}$ obtained by removing edges of small weight (keeping $F$ close to $G_{/\Pp}$) and removing weights, and then map $G_{/\Pp}$ to a random copy of $F$ in $G$.
We need to estimate the number of such copies (this is known as the \emph{counting lemma}) and, more generally, the number of such copies containing any given edge of $G$ (the \emph{extension lemma}).
Both are standard lemmas in the theory of dense graph limits, in particular our proof of the counting lemma mimics Lemma~10.22 in~\cite{LovaszBook}.
However, we will prove a version of the extension lemma with somewhat tighter bounds than usual (depending on all $\binom{k}{2}$ edge densities between parts of the regularity partition).

For a graph $F$ on vertex set $[k] \defeq \{1,\dots,k\}$, we will treat $F$ as a subset of $\binom{[k]}{2}$.
For a partition $\Pp = (V_1,\dots,V_k)$ of a graph $G$, a \emph{$\Pp$-map} is a function
$g \colon [k] \to V(G)$ such that $g(i) \in V_i$ for all $i \in [k]$.
We denote $\hom_g(F,G) := \prod_{ij \in F} w_G(g(i), g(j))$; for $\{0,1\}$-weighted graphs, this is equal to 1 if $g$ is a homomorphism from $F$ to $G$ and 0 otherwise.

Let us first observe two consequences of $\eps$-homogeneity.
First, the notion can be extended from subsets $W_1 \subseteq V_1$ to any
function $f \colon V_1 \to [0,1]$ simply by considering subsets where the function takes at least a given value (here $\|f\|_1 \defeq \sum_x f(x)$):
\begin{observation}\label{obs:homogenousFunction}
	Let $G=(V_1,V_2,E)$ be $\eps$-homogeneous of density $d$.
	Then for every $f \colon V_1 \to [0,1]$ and $g \colon V_2 \to [0,1]$,
	\[\sum\limits_{x_1 \in V_1} \sum\limits_{x_2 \in V_2} f(x_1) g(x_2) w_G(x_1,x_2) =
		d \|f\|_1 \|g\|_1  \  \pm \  \eps |V_1| |V_2|.\]
\end{observation}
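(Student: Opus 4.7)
The plan is to reduce the $[0,1]$-valued case to the $\{0,1\}$-valued case (i.e. the definition of $\eps$-homogeneity) via the standard layer-cake representation. For $f\colon V_1\to[0,1]$ and $g\colon V_2\to[0,1]$, write $f(x_1)=\int_0^1[f(x_1)\geq s]\,ds$ and $g(x_2)=\int_0^1[g(x_2)\geq t]\,dt$, and define the level sets $W_1^s\defeq\{x_1\in V_1\colon f(x_1)\geq s\}\subseteq V_1$ and $W_2^t\defeq\{x_2\in V_2\colon g(x_2)\geq t\}\subseteq V_2$.

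Substituting these representations into the double sum and exchanging the order of summation and integration (everything is nonnegative and finite, so this is routine), the left-hand side becomes
\[
\int_0^1\!\!\int_0^1\Big(\sum_{x_1\in V_1}\sum_{x_2\in V_2}[x_1\in W_1^s][x_2\in W_2^t]\,w_G(x_1,x_2)\Big)\,ds\,dt=\int_0^1\!\!\int_0^1 w_G(W_1^s,W_2^t)\,ds\,dt.
\]
Now applying $\eps$-homogeneity to every pair $(W_1^s,W_2^t)$ gives $w_G(W_1^s,W_2^t)=d\,|W_1^s|\,|W_2^t|\pm\eps|V_1||V_2|$, and since the error term is a constant with respect to $s,t$, integrating over the unit square preserves it as $\pm\eps|V_1||V_2|$.

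It remains to observe that the main term factors and integrates back to $\|f\|_1\|g\|_1$: by Fubini and the layer-cake identity, $\int_0^1|W_1^s|\,ds=\sum_{x_1\in V_1}\int_0^1[f(x_1)\geq s]\,ds=\|f\|_1$, and analogously for $g$. Combining yields $d\|f\|_1\|g\|_1\pm\eps|V_1||V_2|$, as required. There is no real obstacle here; the only thing to be careful about is that the additive error $\eps|V_1||V_2|$ does not blow up under integration, which is immediate because the domain of integration has measure one. (If one prefers to avoid integrals, the same argument goes through by approximating $f,g$ by step functions $\frac{1}{N}\sum_{i=1}^N[f\geq i/N]$ and letting $N\to\infty$, which only costs an additional $O(1/N)$ in the main term.)
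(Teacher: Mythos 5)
Your proof is correct and follows essentially the same route as the paper: both use the layer-cake representation $f(x_1)=\int_0^1[f(x_1)\geq s]\,ds$, exchange sum and integral, apply $\eps$-homogeneity to each pair of level sets, and integrate back, noting that the error term has measure-one domain so does not inflate.
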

\begin{proof}\allowdisplaybreaks[1]
	For $y \in [0,1]$, let $V_1^y := \{ x \in V_1 \colon f(x) \geq y \}$ and define $V_2^y$ analogously for $g$.
	Notice that $f(x) = \int_0^1 [y \leq f(x)] \,dy = \int_0^1 [x \in V_1^y] \,dy$.
	Hence
	\begin{gather*}
		\sum_{x_1 \in V_1} \sum_{x_2 \in V_2} f(x_1) g(x_2) w_G(x_1,x_2) = \\
		\int_0^1 \int_0^1 \sum_{x_1 \in V_1} \sum_{x_2 \in V_2} [x_1 \in V_1^{y_1}]  [x_2 \in V_2^{y_2}] w_G(x_1,x_2) \,dy_1 \,dy_2 = \\
		\int_0^1 \int_0^1 w_G(V_1^{y_1}, V_2^{y_2}) \,dy_1 \,dy_2 =\\
		\int_0^1 \int_0^1 \big( d |V_1^{y_1}| |V_2^{y_2}| \pm \eps |V_1| |V_2|\big) \,dy_1 \,dy_2 =\\
		d \cdot \Big(\int_0^1 |V_1^{y_1}| \,dy_1\Big) \Big(\int_0^1 |V_2^{y_2}| \,dy_2\Big) \pm \eps |V_1| |V_2|.
	\end{gather*}
	Since $\displaystyle\int_0^1 |V_1^y| dy = \int_0^1 \sum_{x\in V_1} [y \leq
  f(x)] dy = \sum_{x\in V_1} f(x) = \|f\|_1 $ and analogously for $g$, the claim follows.
\end{proof}

Second, while we cannot say much about any one fixed vertex, we can make similar approximations for most vertices:

\begin{observation}\label{obs:homogeneousPointed}
	Let $G=(V_1,V_2,E)$ be $\eps$-homogeneous with density $d$.
	For every $g \colon V_2 \to [0,1]$,
	there are at least $(1-2\sqrt{\eps}) |V_1|$ vertices $x_1$ in $V_1$ such that 
	$\sum_{x_2} g(x_2) w_G(x_1, x_2) = d \|g\|_1 \  \pm \  \sqrt{\eps} |V_2|$.
\end{observation}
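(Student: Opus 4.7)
The plan is to apply the preceding Observation~\ref{obs:homogenousFunction} to indicator functions of the ``bad'' sets of vertices $x_1 \in V_1$ and derive a contradiction on their size from the approximate-count guarantee. For a fixed $g\colon V_2 \to [0,1]$, denote $h(x_1) := \sum_{x_2 \in V_2} g(x_2) w_G(x_1,x_2)$. Define the two sides of potential failure:
\[ V_1^+ := \{x_1 \in V_1 \colon h(x_1) > d\|g\|_1 + \sqrt{\eps}|V_2|\}, \quad V_1^- := \{x_1 \in V_1 \colon h(x_1) < d\|g\|_1 - \sqrt{\eps}|V_2|\}. \]
The aim is to show $|V_1^+| < \sqrt{\eps}|V_1|$ and $|V_1^-| < \sqrt{\eps}|V_1|$, which immediately yields the claim.

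The key step is to apply Observation~\ref{obs:homogenousFunction} with $f$ taken to be the indicator function of $V_1^+$ (resp. $V_1^-$) and the given $g$. This yields
\[ \sum_{x_1 \in V_1^+} h(x_1) = \sum_{x_1 \in V_1} f(x_1) \sum_{x_2} g(x_2) w_G(x_1, x_2) = d |V_1^+| \|g\|_1 \,\pm\, \eps |V_1||V_2|. \]
On the other hand, the definition of $V_1^+$ gives $\sum_{x_1 \in V_1^+} h(x_1) > |V_1^+|(d\|g\|_1 + \sqrt{\eps}|V_2|)$. Combining with the upper bound from the displayed identity, the $d|V_1^+|\|g\|_1$ terms cancel, leaving $|V_1^+|\sqrt{\eps}|V_2| < \eps |V_1||V_2|$, i.e.\ $|V_1^+| < \sqrt{\eps}|V_1|$. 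The argument for $V_1^-$ is symmetric: using the lower bound from the displayed identity together with $\sum_{x_1 \in V_1^-} h(x_1) < |V_1^-|(d\|g\|_1 - \sqrt{\eps}|V_2|)$ yields $|V_1^-| < \sqrt{\eps}|V_1|$.

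The only mild subtlety is that the additive error in Observation~\ref{obs:homogenousFunction} is $\eps|V_1||V_2|$ while we want the failure bound to be in terms of $|V_1|$ only; this is exactly why the square-root appears, and why the statement loses a factor of $\sqrt{\eps}$ relative to the base homogeneity parameter. There are no real obstacles beyond this bookkeeping: the proof is a standard Markov-style conversion of an ``in total'' estimate to a ``for most vertices'' estimate, leveraging that the function $h$ is itself well-approximated in aggregate by Observation~\ref{obs:homogenousFunction}.
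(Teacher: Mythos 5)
Your proof is correct and follows essentially the same route as the paper's: define the sets of vertices where the sum is too large or too small, apply Observation~\ref{obs:homogenousFunction} to their indicator functions, and deduce each set has size less than $\sqrt{\eps}|V_1|$ via the same Markov-type cancellation. The only cosmetic difference is the order in which the two bad sets are handled.
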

\begin{proof}
	Let $W_1^{-}$ be the set of those $x_1$ in $V_1$ for which the sum is too small:
	\[\textstyle\sum_{x_2} g(x_2) w_G(x_1, x_2) <  d  \|g\|_1  - \sqrt{\eps} |V_2|.\]
	Let $f \colon V_1 \to [0,1]$ be the characteristic function of $W_1^{-}$.
	Then
	\[\textstyle\sum_{x_1} \sum_{x_2} f(x_1) g(x_2) w_G(x_1, x_2) < \|f\|_1 \cdot \left(d \|g\|_1  - \sqrt{\eps} |V_2|\right).\]
	By Observation~\ref{obs:homogenousFunction}, this implies $\|f\|_1 \cdot \sqrt{\eps} |V_2| < \eps |V_1| |V_2|$.
	Hence $|W_1^{-}| = \|f\|_1 < \sqrt{\eps} |V_1|$.
	We can define and bound $W_1^{+}$ analogously.
	Then $V_1 \setminus (W_1^{-} \cup W_1^{+})$ is a set of size at least $(1-2\sqrt{\eps}) |V_1|$ as claimed.
\end{proof}

The counting lemma says that the number of $\Pp$-maps that are homomorphisms from $F$ to $G$ is close to what one would expect in a purely random graph with the same densities.
Note that the number of all $\Pp$-maps $g \colon [k] \to V(G)$ is exactly $\prod_{i \in [k]} |V_i|$.

\begin{lemma}[Counting Lemma]\label{lem:counting}
	Let $\Pp=(V_1,\dots,V_k)$ be an $\eps$-regular $k$-partition of an $n$-vertex graph $G$.
	Let $d_{ij} := \frac{w_G(V_i,V_j)}{|V_i| |V_j|}$.
	For each $F\subseteq \binom{[k]}{2}$,
	\[  \sum_g \hom_g(F,G) =
		\Big(\prod_{i \in [k]} |V_i|\Big)
		\Big(\prod_{ij \in F} d_{ij} \  \pm \  \eps |F|\Big),\]
	where the sum is over all $\Pp$-maps $g \colon [k] \to V(G)$.
\end{lemma}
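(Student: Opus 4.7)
The plan is to proceed by telescoping: substitute each factor $w_G(g(i),g(j))$ with its density $d_{ij}$ one edge at a time, and bound the error at each step using the separable-function form of $\eps$-homogeneity from Observation~\ref{obs:homogenousFunction}. Formally, I would enumerate the edges of $F$ as $e_1,\dots,e_m$ with $e_s = i_s j_s$, and for $0\leq t\leq m$ define
\[S_t \ :=\ \sum_g \Big(\prod_{s\leq t} w_G(g(i_s),g(j_s))\Big)\Big(\prod_{s > t} d_{i_s j_s}\Big),\]
so $S_0 = \big(\prod_i|V_i|\big)\prod_{ij\in F} d_{ij}$ and $S_m = \sum_g \hom_g(F,G)$. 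By the triangle inequality, it is enough to show $|S_t - S_{t-1}| \leq \eps \prod_i |V_i|$ for each $t$; summing then gives the claimed $\pm \eps|F|\prod_i|V_i|$ error.

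Next I would unpack $S_t - S_{t-1}$. Writing $ij = e_t$, $c := \prod_{s>t} d_{i_s j_s} \in [0,1]$, and $\phi_g := \prod_{s < t} w_G(g(i_s),g(j_s)) \in [0,1]$, we have
\[S_t - S_{t-1}\ =\ c \sum_g \phi_g \cdot \big(w_G(g(i),g(j)) - d_{ij}\big).\]
The crucial structural point is that each edge $e_s$ with $s<t$ is distinct from $e_t$, so it contributes to $\phi_g$ a factor that involves \emph{at most one} of the coordinates $g(i)$, $g(j)$. Hence $\phi_g$ decomposes as $A(g_{-ij}) \cdot B(g(i),g_{-ij}) \cdot C(g(j),g_{-ij})$, where $g_{-ij}$ is the restriction of $g$ to $[k]\setminus\{i,j\}$, and $A, B, C$ collect the $w_G$ factors from edges avoiding $\{i,j\}$, edges incident to $i$ only, and edges incident to $j$ only, respectively. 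Each of $A, B, C$ is a product of values in $[0,1]$ and hence lies in $[0,1]$.

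The final step is to fix $g_{-ij}$ and view $b(x_i) := B(x_i, g_{-ij})$ and $c(x_j) := C(x_j, g_{-ij})$ as $[0,1]$-valued functions. Observation~\ref{obs:homogenousFunction}, applied to the $\eps$-homogeneous bipartite graph between $V_i$ and $V_j$ of density $d_{ij}$, yields
\[\Big|\sum_{x_i\in V_i}\sum_{x_j\in V_j} b(x_i) c(x_j) \big(w_G(x_i,x_j) - d_{ij}\big)\Big|\ \leq\ \eps |V_i||V_j|.\]
Summing this over $g_{-ij}$ weighted by $A(g_{-ij})\in[0,1]$, and using $\sum_{g_{-ij}} A(g_{-ij}) \leq \prod_{s\neq i,j}|V_s|$ and $c\leq 1$, gives $|S_t - S_{t-1}| \leq \eps \prod_s |V_s|$, closing the telescoping argument.

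I expect the only delicate part to be the bookkeeping of the decomposition $\phi_g = A\cdot B\cdot C$ and verifying that the factors lie in $[0,1]$ so that Observation~\ref{obs:homogenousFunction} applies directly; once this is set up, the rest is a short calculation, essentially mirroring the standard counting lemma in dense graph theory (cf.~Lemma~10.22 of~\cite{LovaszBook}), but presented with the additive $\eps|F|$ error matching the homogeneity convention used here.
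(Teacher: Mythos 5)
Your proposal is correct and follows essentially the same telescoping route as the paper: replace one $w_G(g(i),g(j))$ factor at a time by $d_{ij}$, fix the coordinates outside $\{i,j\}$, factor the remaining weights into a constant $A(g_{-ij})\in[0,1]$ times two $[0,1]$-valued functions $b(x_i), c(x_j)$ (possible because each already-processed edge meets $\{i,j\}$ in at most one endpoint), and invoke Observation~\ref{obs:homogenousFunction} for an error of $\eps|V_i||V_j|$ per step. The only superficial differences are cosmetic: you telescope from $S_0=\prod d_{ij}\cdot\prod|V_i|$ up to $S_m=\sum_g\hom_g(F,G)$ with a fixed ordering of edges, while the paper inducts on a growing set $F'\subseteq F$ in the reverse direction, bundling already-replaced densities into the functions $h,f,g$ via $w'_G$; both setups yield the identical per-step bound and total $\pm\eps|F|\prod_i|V_i|$.
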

\begin{proof}
	Let us write $\sum_{(x_i)_{i \in [k]}}$ as a shorthand for $\sum_{x_1 \in V_1} \dots \sum_{x_k \in V_k}$.
	We wish to approximate 
	\[\sum_{(x_i)_{i \in [k]}} \prod_{ij \in F} w_G(x_i,x_j).\]
	We do so by replacing each factor $w_G(x_i,x_j)$ by its average value $d_{ij}$, one by one.
	That is, we prove for all subsets $F' \subseteq F$ by induction that 
	\begin{equation}\label{eq:counting}\tag{*}
	\sum_{(x_i)_{i \in [k]}} \prod_{ij \in F} w_G(x_i,x_j) \quad = \quad
		\sum_{(x_i)_{i \in [k]}} \prod_{ij \in F-F'} w_G(x_i,x_j) \prod_{ij \in F'} d_{ij} \quad \pm \quad |F'| \cdot \eps  \prod_{i \in [k]} |V_i|.
	\end{equation}
	
	Clearly this is true initially for $F' = \emptyset$ and eventually by reaching $F'=F$ we will have proved that 
	\[\begin{gathered}
		\sum_{(x_i)_{i \in [k]}} \prod_{ij \in F} w_G(x_i,x_j) \quad = 	\quad 
		\sum_{(x_i)_{i \in [k]}} \prod_{ij \in F} d_{ij} \quad \pm \quad |F| \cdot \eps  \prod_{i \in [k]} |V_i|;
\end{gathered}\]
	which proves the claim, as $\sum_{(x_i)_{i \in [k]}} 1 = \prod_{i \in [k]} |V_i|$. 
	
	To prove the induction step, suppose \eqref{eq:counting} holds for some $F' \subset F$ and let $ab \in F-F'$.
	Let $w'_G(x_i,x_j)$ denote $w_G(x_i,x_j)$ if $ij \not\in F'$ and $d_{ij}$ otherwise.
	Then the 
  left-hand-side
  in \eqref{eq:counting} is
	\[\begin{gathered}
		\sum_{(x_i)_{i \in [k]}} \prod_{ij \in F} w'_G(x_i,x_j) = \\
		\sum_{(x_i)_{i \in [k]-a-b}} h \sum_{x_a} \sum_{x_b} f(x_a) g(x_b) w'_G(x_a,x_b),
	\end{gathered}\]
	where for any fixed choice of $(x_i)_{i \in [k]-a-b}$, we let
	\[\begin{gathered}
			h := \prod_{\substack{ij \in F-F'\\i \neq a, j \neq b}} w'_G(x_i,x_j), \quad\quad
			f(x_a) := \prod_{\substack{ij \in F-F'\\i = a, j \neq b}} w'_G(x_a,x_j), \quad\quad
			g(x_b) := \prod_{\substack{ij \in F-F'\\i \neq a, j = b}} w'_G(x_i,x_b).	
	\end{gathered}\]
	Since $h,f,g \leq 1$, the claim then follows from Observation~\ref{obs:homogenousFunction}:
	replacing $w_G(x_a,x_b)$ with $d_{ab}$ adds an error of at most
	$\sum_{(x_i)_{i \in [k]-a-b}} h  \cdot \eps |V_a| |V_b| \leq \eps |V_a| |V_b| \cdot \sum_{(x_i)_{i \in [k]-a-b}} 1 = \eps \prod_{i\in [k]} |V_i|$.
\end{proof}

\begin{lemma}[Extension Lemma]\label{lem:extension}
	Let $\Pp=(V_1,\dots,V_k)$ be an $\eps$-regular $k$-partition of an $n$-vertex graph $G$.
	Let $d_{ij} := \frac{w_G(V_i,V_j)}{|V_i| |V_j|}$.
	For each $F\subseteq \binom{[k]}{2}$ and each $ab \in F$,
	all but $2 k \sqrt{\eps} |V_a| |V_b|$ edges $x_a x_b \in V_a \times V_b$ satisfy
	\[
		\sum_{g} \hom_g(F,G) = 
		\Big(\prod_{i \in [k]-a-b} |V_i|\Big) \cdot
		\Big( w_G(x_a,x_b) \cdot \prod_{ij \in F-ab} d_{ij}
			\quad \pm \quad
		\sqrt{\eps} |F|\Big)
	\]
	where the sum is over all $\Pp$-maps $g \colon [k] \to V(G)$ such that $g(a)=x_a$ and $g(b)=x_b$.
\end{lemma}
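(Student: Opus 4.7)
The plan is to follow the same ``replace $w_G(x_i, x_j)$ by $d_{ij}$ one edge at a time'' strategy as in the proof of the Counting Lemma (Lemma~\ref{lem:counting}), but with the two coordinates $x_a \in V_a$ and $x_b \in V_b$ now pinned. First I would pull out the factor $w_G(x_a, x_b)$ from $\hom_g(F, G)$ and partition $F - ab$ into the \emph{internal} edges $F_I = \{ij \in F - ab : i, j \notin \{a, b\}\}$ and the \emph{boundary} edges (those incident to $a$ or $b$). Write $J := \{l \in [k] - a - b : al \in F \text{ or } bl \in F\}$ for the set of coordinates linked by $F$ to $\{a, b\}$; note $|F_I| + |J| \leq |F|$.

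For Phase~1 I would replace the factors indexed by $F_I$ one at a time, using Observation~\ref{obs:homogenousFunction} exactly as in the proof of Lemma~\ref{lem:counting}. Since $x_a, x_b$ are fixed and appear only as values of some $[0,1]$-valued factors $f(x_i), g(x_j)$ arising in each step, the argument goes through unchanged, each replacement incurring error at most $\eps \prod_{l \neq a, b}|V_l|$ (the ``outer'' factor is now over $l \neq a, b$ rather than over all of $[k]$). After all $|F_I|$ replacements we are within $|F_I| \cdot \eps \cdot \prod_{l \neq a, b}|V_l|$ of the quantity
\[ \Big(\prod_{ij \in F_I} d_{ij}\Big) \cdot \sum_{(x_l)_{l \neq a, b}} \prod_{aj \in F,\, j \neq b} w_G(x_a, x_j) \prod_{bj \in F,\, j \neq a} w_G(x_b, x_j). \]

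The remaining sum factorizes across coordinates $l \neq a, b$ as $\prod_l \alpha_l(x_a, x_b)$, where $\alpha_l = |V_l|$ when $l \notin J$, $\alpha_l$ is a degree count ($\deg_l(x_a)$ or $\deg_l(x_b)$) when exactly one of $al, bl$ lies in $F$, and $\alpha_l = |N(x_a) \cap N(x_b) \cap V_l|$ when both do. Phase~2 is to show that $\alpha_l/|V_l|$ is close to $d_{al}^{[al \in F]} d_{bl}^{[bl \in F]}$ for every $l \in J$ simultaneously, outside an exceptional set of pairs. For the one-sided cases, Observation~\ref{obs:homogeneousPointed} with $g \equiv 1$ gives the bound for all but $2\sqrt{\eps}|V_a|$ choices of $x_a$ (symmetrically for $x_b$), independently of the other coordinate. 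For an $l$ with $al, bl \in F$, I would apply Observation~\ref{obs:homogeneousPointed} twice: first with $g \equiv 1$ on $V_l$ to estimate $\deg_l(x_b)$ for most $x_b$, then with $g(x_l) := w_G(x_b, x_l)$ (which lies in $[0,1]$) to estimate $\alpha_l = d_{al}\deg_l(x_b) \pm \sqrt{\eps}|V_l|$ for most $x_a$ given~$x_b$.

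The hard part is precisely this last step: the set of ``good'' $x_a$ depends on $x_b$ through the function $g(x_l) = w_G(x_b, x_l)$, so Observation~\ref{obs:homogeneousPointed} cannot be invoked uniformly. The two-step argument resolves this by pairing one condition uniform in $x_a$ (the $\deg_l(x_b)$ estimate) with one that rules out $O(\sqrt{\eps})|V_a|$ values for each fixed $x_b$, producing $O(\sqrt{\eps})|V_a||V_b|$ bad pairs per $l$; a union bound over $|J| \leq k$ coordinates leaves at most $O(k\sqrt{\eps})|V_a||V_b|$ exceptional pairs, matching the statement. For a good pair, the elementary inequality $|\prod_l a_l - \prod_l b_l| \leq \sum_l |a_l - b_l|$ on $[0,1]$-valued sequences converts the per-coordinate $O(\sqrt{\eps})$ errors into an $O(|J|\sqrt{\eps}) \prod_{l \neq a, b}|V_l|$ error on the Phase~2 product; adding the Phase~1 contribution gives total error $O((|F_I|\eps + |J|\sqrt{\eps})) \prod_{l \neq a, b}|V_l| = O(|F|\sqrt{\eps}) \prod_{l \neq a, b}|V_l|$, which after multiplying by $w_G(x_a, x_b)$ yields the claimed approximation.
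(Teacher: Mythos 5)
Your proof is correct and follows essentially the same route as the paper's: replay the counting-lemma replacement on the internal edges, factorize what remains across coordinates, and handle edges incident to $a$ or $b$ via Observation~\ref{obs:homogeneousPointed}, with the two-step argument (degree estimate uniform in $x_a$, followed by the conditional estimate given $x_b$) to deal with the dependence of the good set of $x_a$ on $x_b$. The only differences are expository: you factorize the boundary product up front and bound it with the telescoping inequality, whereas the paper replaces the boundary edges $ac$ and $cb$ one at a time after the internal phase, but these are the same computation; and you are somewhat more explicit in distinguishing the cases $l\in J$ with one versus two incident edges of $F$, which the paper glosses over notationally.
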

\begin{proof}
	The argument is the same as in the counting lemma, except that edges incident to $a,b$ have to be handled differently.	
	First note that for every $c \in [k]-a-b$ and every fixed $x_b \in V_b$, by
  Observation~\ref{obs:homogeneousPointed} (with $g(x_c):=w_G(x_c,x_b)$), the
  following holds for all but at most $2\sqrt{\eps}|V_a|$ vertices $x_a$ in $V_a$:
	\begin{equation}\label{eq:triangle}\tag{**}
	\sum_{x_c} w_G(x_a,x_c) w_G(x_c, x_b) = d_{ac} \Big(\sum_{x_c} w_G(x_c,x_b)\Big) \pm \sqrt{\eps}|V_c|.
	\end{equation}
	For each $c \in [k]-a-b$  and each $x_b \in V_b$, we will ignore those edges going to $x_a \in V_a$ that fail \eqref{eq:triangle}.
	
	Similarly for each $c \in [k]-a-b$, by Observation~\ref{obs:homogeneousPointed} (with $g(x_c):=1$)
	the following holds for all but at most $2\sqrt{\eps} |V_b|$ vertices $x_b \in V_b$:
	\begin{equation}\label{eq:homogeneousDegree}\tag{***}
	\sum_{x_c} w_G(x_b, x_c) = d_{bc} |V_c| \pm \sqrt{\eps}|V_c|.
	\end{equation}
	We ignore all edges $x_ax_b \in E(V_a,V_b)$ incident to $x_b$ for which \eqref{eq:homogeneousDegree} fails.
	Thus for all but $\leq 2 \cdot k \cdot |V_b| \cdot 2\sqrt{\eps}|V_a|$ edges $x_ax_b \in E(V_a,V_b)$,  \eqref{eq:triangle} and \eqref{eq:homogeneousDegree} hold for all $c \in [k]-a-b$.	
	
	Fix any such $x_a \in V_a, x_b \in V_b$. We wish to approximate
	\[\begin{gathered}
		 \sum_{(x_i)_{i \in [k]-a-b}} \prod_{ij \in F} w_G(x_i,x_j) = \\	
		 w_G(x_a,x_b) \cdot \sum_{(x_i)_{i \in [k]-a-b}} \prod_{ij \in F-ab} w_G(x_i,x_j)=\dots
	 \end{gathered}\]
	Just as in the proof of the counting lemma, we replace factors $w_G(x_i,x_j)$ by $d_{ij}$ one by one.
	We first do this for pairs in $F_{0} := \{ij \in F \mid i,j \neq a,b\}$, since the argument works without change, incurring an error of $\pm \eps |F_{0}| \prod_{i\in[k]-a-b} |V_i|$ (we denote this by $\simeq$ for simplicity and sum up the errors at the end of the proof).
	Since $d_{ij}$ does not depend on the choice of $x_i \in V_i, x_j \in V_j$, we can rearrange:
	\[ \dots\simeq w_G(x_a,x_b) \Big(\prod_{ij \in F_{0}} d_{ij}\Big) \cdot \sum_{(x_i)_{i \in [k]-a-b}} \prod_{i \in [k]-a-b} w_G(x_a,x_i) w_G(x_i,x_b)= \dots\]
	
	Then, for each $c \in [k]-a-b$ we can replace $ac$ by isolating the factors that depend on $x_c$ and applying \eqref{eq:triangle} (as before $w'_G(x_a,x_i)$ denotes either $w_G(x_a,x_i)$ or $d_{ai}$ depending on whether we have already replaced $ai$):
	\begin{multline*}
		 \dots = w_G(x_a,x_b) \Big(\prod_{ij \in F_{0}} d_{ij}\Big) \cdot
		\sum_{x_c} w_G(x_a,x_c) w_G(x_c, x_b) \ \cdot \\
		\sum_{(x_i)_{i \in [k]-a-b-c}} \prod_{i \in [k]-a-b-c} w'_G(x_a,x_i) w_G(x_i,x_b) \simeq \dots
	\end{multline*}
	Having thus replaced all edges $ac$ for $c \in [k]-a-b$,
	the only remaining edges are of the form $ib$ for $i \in [k]-a-b$,
	so by denoting $F_1 := \{ ij \in F \mid i,j \neq b\}$ the expression becomes:
	\[\begin{gathered}
		\dots \simeq w_G(x_a,x_b) \Big(\prod_{ij \in F_{1}} d_{ij}\Big) \cdot 
		\sum_{(x_i)_{i \in [k]-a-b}} \prod_{i \in [k]-a-b} w_G(x_i,x_b) = \\
		w_G(x_a,x_b) \Big(\prod_{ij \in F_{1}} d_{ij}\Big) \cdot 
		\prod_{i \in [k]-a-b} \Big(\sum_{x_i \in V_i} w_G(x_i,x_b)\Big) \simeq \\
		w_G(x_a,x_b) \Big(\prod_{ij \in F} d_{ij}\Big) \cdot 
		\prod_{i \in [k]-a-b} |V_i|,
	\end{gathered}\]
	where the last approximation follows from \eqref{eq:homogeneousDegree}.
	For each of the $|F|$ approximations used, the incurred additive error on the whole expression was at most $\pm \sqrt{\eps} \prod_{i \in [k]-a-b} |V_i|$.
\end{proof}

We are now ready to prove Theorem~\ref{thm:regularityOptDistance}.
The proof strategy was outlined above: map $G_{/\Pp}$ to a random copy of $F$ in $G$, where $F$ marks heavy-enough edges of  $G_{/\Pp}$.
\begin{proof}[Proof of Theorem~\ref{thm:regularityOptDistance}]
	Let $G$ be a graph with density $c := \frac{E(G)}{n^2}$.
	Let $\eps_0<1$, $\frac{1}{k} \leq \frac{c}{10} \frac{\eps_0}{1+\eps_0}$ and $\eps \leq \left(\frac{1}{k}\right)^{8k^2}$,
	and suppose $G$ has an $\eps$-regular $k$-partition $\Pp = (V_1,\dots,V_k)$.
	We claim that $\dopt(G,G_{/\Pp}) \leq \eps_0$.
	As mentioned above, $G \overcasts G_{/\Pp}$ holds trivially, so it remains to show an overcast from $G_{/\Pp}$ to $e^{-\eps_0}G$.
	
	Let $d_{ij} := \frac{w_G(V_i,V_j)}{|V_i||V_j|}$ for $ij \in [k]^2$.
	Let $F \subseteq \binom{[k]}{2}$ be the set of edges $ij$ such that $i \neq j$ and $d_{ij} \geq \frac{1}{k}$.
	Note that $\prod_{ij \in F} d_{ij} \geq (\frac{1}{k})^{|F|} \geq (\frac{1}{k})^{k^2} \geq \eps^{1/8}$.	
	Let $G'$ be the subgraph of $G$ obtained by removing:
	\begin{itemize}
		\item $E(G[V_i])$, for $i \in [k]$ (the total weight removed in this step is $\leq k \left(\frac{n}{k}\right)^2$)
		\item $E_G(V_i,V_j)$, for $ij \not\in F$ (their total weight is $\leq k^2 \cdot \frac{1}{k} \cdot \left(\frac{n}{k}\right)^2$)
		\item edges of weight $<\eps^{1/8}$ (if $G$ is $[0,1]$-weighted; their total weight is $\leq \eps^{1/8} n^2$)
		\item edges $x_a x_b \in V_a \times V_b$ for which the Extension Lemma~\ref{lem:extension} does not hold, for each $ab \in F$ (their total weight is $\leq |F| \cdot 2 k\sqrt{\eps} \left(\frac{n}{k}\right)^2$).
	\end{itemize}
	The total weight of removed edges is
	\[\|G\|_1-\|G'\|_1 \leq n^2 (\frac{1}{k} + \frac{1}{k} + \eps^{1/8} + 2k \sqrt{\eps})
		\leq n^2 \cdot \frac{5}{k}.
	\]
	By our assumption on $k$, $\frac{5}{ck} \leq \frac{1}{2} \frac{\eps_0}{1+\eps_0} < 1$.
	Since $\|G\|_1 \geq cn^2$, $\dedit(G,G') \leq \frac{\|G\|_1-\|G'\|_1}{\min(\|G\|_1,\|G'\|_1)} \leq \frac{n^2 \cdot \frac{5}{k}}{n^2(c-\frac{5}{k})} = \frac{5/ck}{1-5/ck} \leq \eps_0/2.$ 
	Therefore, by Lemma~\ref{lem:editToOpt}, $G' \overcasts e^{-\eps_0/2}G$ (this
  direction requires only $G'$ to be loopless, which is true because we removed $E(G[	V_i])$ for all $i$).
	Thus it remains to show that $G_{/\Pp} \overcasts e^{-\eps_0/2}G'$.
	
	We define an overcast $\omega$ from $G_{/\Pp}$ to $(1-\frac{\eps_0}{1+\eps_0})G'$ as follows:
	every $\Pp$-map $g \colon [k] \to V(G)$ is taken with probability proportional to $\hom_g(F,G)$;
	that is, $\omega(g) := \hom_g(F,G) / N$ where by the Counting Lemma~\ref{lem:counting}, the normalisation factor is (using $\prod_{ij \in F} d_{ij} \geq \eps^{1/8}$):
	\[N := \sum_g \hom_g(F,G) =
		\Big(\prod_{i \in [k]} |V_i|\Big)
		\Big(\prod_{ij \in F} d_{ij} \  \pm \  \eps |F|\Big)
		\leq 
		\Big(\prod_{i \in [k]} |V_i|\Big)
		\Big(\prod_{ij \in F} d_{ij}\Big)
		(1 + \eps^{7/8} k^2).
	\]
	To verify that $\omega$ is indeed an overcast, we need to check that for each edge $uv$ of $G'$
	\[ \EX_{g \sim \omega} w_{G_{/\Pp}}(g^{-1}(uv)) \geq (1-\eps') w_G(u,v) .\]
	Let $a,b \in [k]$ be such that $u \in V_a$ and $v \in V_b$.
	By the Extension Lemma~\ref{lem:extension}, since we removed from $G'$ edges
  that do not satisfy it and edges with $w_G(u,v) < \eps^{1/8}$, we have:
	\[\begin{gathered}
		N \cdot \EX_{g \sim \omega} w_{G_{/\Pp}}(g^{-1}(uv)) =
		\sum_{g\colon g(a)=u,g(b)=v} \hom_g(F,G) w_{G_{/\Pp}}(a,b) = \\
		w_G(V_a,V_b) \sum_{g\colon g(a)=u,g(b)=v} \hom_g(F,G) = \\
		d_{ab} |V_a| |V_b| \Big(\prod_{i \in [k]-a-b} |V_i|\Big)
		\Big( w_G(u,v) \prod_{ij \in F-ab} d_{ij}
			\quad \pm \quad
		\sqrt{\eps} |F|\Big) \geq \\
	 	\Big(\prod_{i\in[k]} |V_i| \Big) \Big(w_G(u,v) \prod_{ij \in F} d_{ij} \  - \  \sqrt{\eps} k^2\Big)
	 	\geq \\
	 	\Big(\prod_{i\in[k]} |V_i| \Big) w_G(u,v) \Big(\prod_{ij \in F} d_{ij}\Big)
	 		(1 - \eps^{1/4} k^2).
	\end{gathered}\]
	(The last inequality follows from $w_G(u,v) \cdot \prod_{ij \in F} d_{ij} \geq \eps^{1/8} \cdot \eps^{1/8} = \eps^{1/4}$).
	Dividing by the upper bound on $N$, we conclude:
	\[\EX_{g \sim \omega} w_{G_{/\Pp}}(g^{-1}(uv))  \geq w_G(u,v) \frac{1 - \eps^{1/4} k^2}{1 + \eps^{7/8} k^2} \]
	The ratio here can be bounded quite brutally: 
	\[ \geq  \frac{1 - \eps^{1/4} k^2}{1 + \eps^{1/4} k^2} \geq 1-2\eps^{1/4}k^2 \geq 1-\frac{2}{k} \geq 1 - \frac{1}{2}\frac{\eps_0}{1+\eps_0} \geq \frac{1}{1+\eps_0/2} \geq e^{-\eps_0/2}.\]
	This concludes the proof that $G_{/\Pp} \overcasts e^{-\eps_0/2}G'$ and hence $\dopt(G_{/\Pp},G) \leq \eps_0$.
\end{proof}

\section{Hardness of approximation}
\label{sec:hardness}

We show that $\HOM(\A,-)$, where $\A$ is the class of all tournaments (orientations of cliques),
has no PTAS. 
This holds under the \emph{Gap Exponential Time Hypothesis} (Gap-ETH)~\cite{Manurangsi17:icalp,Dinur2016MildlyER}
which states that no $2^{o(n)}$-time algorithm can distinguish between a satisfiable 3SAT formula and one which is not even $(1-\varepsilon)$-satisfiable for some constant $\varepsilon>0$. 

In fact we only require the following weaker conjecture:
\begin{conjecture}\label{conj:pih}
	There exists an $\eps>0$ such that
	given a $\{0,1\}$-valued Max-2-CSP instance with $k$ variables and alphabet size $n$
	no $f(k) \cdot n^{\Oh(1)}$ time algorithm can distinguish between the following two cases:
	\begin{itemize}
		\item there is an assignment satisfying every constraint;
		\item no assignment satisfies more than $(1-\eps)$ constraints.
	\end{itemize}
\end{conjecture}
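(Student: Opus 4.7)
The conjecture as stated is the Parameterised Inapproximability Hypothesis (PIH), adopted as an axiom rather than established from more basic assumptions; the text calls it \emph{weaker than Gap-ETH} and points to \cite{Manurangsi17:icalp, Dinur2016MildlyER}. So my plan would not be to prove it outright but to justify it as a consequence of Gap-ETH. Concretely, the task reduces to exhibiting a gap-preserving reduction from Gap-3SAT on $N$ variables to Gap-Max-2-CSP parameterised by the number of variables $k$ and alphabet size $n$, such that a hypothetical $f(k)\cdot n^{\Oh(1)}$-time approximation contradicts Gap-ETH for every computable $f$.

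The first stage I would carry out is the standard conversion from ternary to binary constraints. Given a 3SAT formula $\phi$ with $N$ variables, for each clause introduce a fresh auxiliary variable whose value is intended to encode the joint truth assignment to two of its three literals, and add binary consistency constraints linking the auxiliary variable to the three original literals. A routine soundness analysis shrinks the Gap-ETH gap $\eps_0$ only by a constant factor and produces a $\{0,1\}$-valued Max-2-CSP on $\Theta(N)$ variables over constant alphabet.

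The second stage is \emph{block grouping}. Partition the $\Theta(N)$ variables into $k$ blocks of size $s = \Theta(N/k)$ and replace each block by a single super-variable taking values in the product alphabet of size $n = 2^{\Oh(s)}$, encoding a joint assignment to the entire block. A binary constraint between two variables lying in different blocks becomes a binary constraint between the corresponding super-variables; constraints whose endpoints lie in the same block are absorbed as additional admissible pairs in that block's value space. This is a bijection between the original and reduced assignments, so the constant satisfiability gap is preserved. The resulting instance has the form required by the conjecture: $k$ variables, alphabet $n$, $\{0,1\}$-valued 2-ary constraints.

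The main obstacle is the quantifier ``for every $f$'' built into the conjecture. The resolution is a diagonal parameter-tuning argument: given any computable $f$, choose $k = k(N)$ tending to infinity slowly enough that simultaneously $f(k(N)) = 2^{o(N)}$ and $N/k(N) = o(N)$ (for instance $k(N)$ taken to be an iterated logarithm whose depth depends on the growth rate of $f$). Then an $f(k)\cdot n^{\Oh(1)}$-time $(1-\eps)$-approximation for the reduced Max-2-CSP would run in time $f(k(N)) \cdot 2^{\Oh(N/k(N))} = 2^{o(N)}$ and would distinguish satisfiable from $(1-\Theta(\eps_0))$-satisfiable Gap-3SAT, contradicting Gap-ETH. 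Packaging this non-uniform parameter choice into a single uniform hardness statement, together with careful bookkeeping of the alphabet blow-up and soundness loss through the two stages, is the technical core of the cited reductions.
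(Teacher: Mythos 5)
The statement you are asked to prove is a conjecture, not a theorem: the paper deliberately adopts it as a hypothesis and gives no proof of it, only the remark that Gap-ETH implies it, with citations to Chalermsook et al., Dinur--Manurangsi, Lokshtanov et al., and Bhattacharyya et al. Your decision to justify it as a consequence of Gap-ETH rather than prove it outright is therefore the right stance, and your three-stage outline (clause gadgets to go from 3SAT to a 2-CSP, block grouping to trade variable count for alphabet size, diagonalization over $f$) is precisely the route those references take; the direct proofs cited by the paper (Lokshtanov et al., Bhattacharyya et al.) are closest to what you describe. The one genuinely underdeveloped step is the treatment of within-block constraints: \emph{``absorbed as additional admissible pairs in that block's value space''} is not a precise rule, and it is exactly here that the soundness bookkeeping lives. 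One clean option is to restrict each super-variable's alphabet to block-assignments consistent with all intra-block constraints; an elementary calculation then shows the gap, measured over the remaining cross-block constraints, does not collapse (it in fact improves, since intra-block constraints are forced to be satisfied). Another is to attach each intra-block constraint to a designated cross-block pair. Either way you should also say explicitly that the 3SAT instance is sparsified ($m=\Theta(N)$ clauses, built into the standard Gap-ETH statement) and that the partition is balanced, so the alphabet is $2^{\Theta(N/k)}$ and the diagonalization gives $f(k)\cdot n^{\Oh(1)} = 2^{o(N)}$. These are the details the cited papers supply; your sketch omits them but does not contradict them.
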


Gap-ETH implies Conjecture~\ref{conj:pih}: this follows from a proof by Chalermsook et al.~\cite{CCK20}, in fact with a much larger approximation gap, which was further improved by Dinur and Manurangsi~\cite{DinurPasin18}.
Direct proofs for the above simpler version can be found in~\cite{LokshtanovRSZ20} and~\cite[Appendix A]{Bhattacharyya18}.
Lokshtanov et al.~\cite{LokshtanovRSZ20} moreover propose the Parameterised Inapproximability Hypothesis, stating that the above promise problem is W[1]-hard.

The problem can be rephrased as a minor variation of Densest-$k$-Subgraph (sometimes known as Maximum Colored Subgraph Isomorphism):

\begin{observation}
	\label{obs:densest-hard}	
	Conjecture~\ref{conj:pih} is equivalent to the following.
	There is an $\eps>0$ such that no $f(k)\cdot n^{\Oh(1)}$ time algorithm can,
	given $k$, a graph $G$ on $n$ vertices, and a proper $k$-colouring $c$ of it,
	distinguish between the following two cases:
	\begin{itemize}
		\item $G$ contains a $k$-clique $v_1,\dots,v_k$ (without loss of generality $c(v_i)=i$);
		\item every $k$-tuple $v_1,\dots,v_k$ with $c(v_i)=i$ induces a subgraph on $<(1-\eps)\binom{k}{2}$ edges in $G$.
	\end{itemize}
\end{observation}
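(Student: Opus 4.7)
The plan is to exhibit an essentially tautological bijective correspondence in both directions, built on the fact that an assignment to a $k$-variable Max-2-CSP is the same kind of object as a ``colorful'' $k$-tuple in a properly $k$-colored graph.

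Given a properly $k$-colored graph $G$ with color classes $V_1,\dots,V_k$, I will construct a Max-2-CSP $\Phi_G$ with variables $x_1,\dots,x_k$, per-variable alphabets $\Sigma_i := V_i$, and for every pair $\{i,j\} \in \binom{[k]}{2}$ the binary $\{0,1\}$-valued constraint
\[
	f_{ij}(a,b) := [\{a,b\} \in E(G)], \qquad a \in \Sigma_i,\ b \in \Sigma_j.
\]
Conversely, given a Max-2-CSP $\Phi$ with variables $x_1,\dots,x_k$ and alphabets $\Sigma_i$ of size at most $n$ with a $\{0,1\}$-valued constraint $f_{ij}$ on every pair $\{i,j\}$, I will construct a graph $G_\Phi$ with vertex set $V(G_\Phi) := \bigsqcup_i \Sigma_i$, proper $k$-coloring $c(v) := i$ for $v \in \Sigma_i$, and edges $\{\{a,b\} : a \in \Sigma_i, b \in \Sigma_j, i \neq j, f_{ij}(a,b)=1\}$. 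The coloring is proper by construction (edges only go between distinct $\Sigma_i,\Sigma_j$). In both reductions, assignments $h$ with $h(i) \in \Sigma_i$ biject with $k$-tuples $(v_1,\dots,v_k)$ satisfying $c(v_i)=i$, and under this bijection the number of constraints satisfied by $h$ equals $|E(G[\{v_1,\dots,v_k\}])|$. Consequently ``satisfies every constraint'' matches ``$\{v_1,\dots,v_k\}$ is a $k$-clique'', and ``no assignment satisfies more than a $(1-\eps)$ fraction of the $\binom{k}{2}$ constraints'' matches ``every colorful $k$-tuple induces fewer than $(1-\eps)\binom{k}{2}$ edges''.

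The parameters transfer correctly: $k$ is preserved in both directions, and the alphabet size/graph size differ only by a factor of $k$ (the forward reduction produces a graph of size $\sum_i |\Sigma_i| \leq kn$, absorbed into the $f(k)\cdot n^{O(1)}$ running time). The gap $\eps$ transfers verbatim, provided the CSP instance carries a constraint on every one of the $\binom{k}{2}$ pairs of variables.

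The only minor obstacle is this last point: the CSPs produced by standard proofs of Conjecture~\ref{conj:pih} (via Gap-ETH, cf.~\cite{CCK17:focs,DinurPasin18,LokshtanovRSZ20,Bhattacharyya18}) are already of this dense form, so no adjustment is needed; alternatively, one may pad an arbitrary Max-2-CSP with $m$ constraints to a $\binom{k}{2}$-constraint instance by adding trivial always-$1$ constraints, which weakens the gap from $\eps$ to $\eps m / \binom{k}{2}$ but leaves it a positive constant as long as the initial instance had $m = \Omega(k^2)$ constraints. Either way, a $f(k)\cdot n^{O(1)}$-time algorithm for one promise problem (with some constant $\eps$) yields one for the other (with a possibly different constant $\eps'$), which establishes the claimed equivalence.
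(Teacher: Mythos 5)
Your proof is essentially the paper's argument spelled out in full; the paper disposes of this observation with the single parenthetical remark giving exactly the variable~$\leftrightarrow$~colour-class~$\leftrightarrow$~constraint-as-bipartite-graph correspondence you describe, so the core approach is identical. The one point where you go beyond the paper is worth flagging as genuine added care rather than a detour: you notice that the CSP~$\to$~graph direction is only literally exact when the CSP instance has a constraint on every one of the $\binom{k}{2}$ pairs of variables, and you correctly observe that one can either appeal to the Gap-ETH-derived instances already being of that dense form or pad with always-satisfied constraints, which preserves a positive constant gap as long as the instance has $\Omega(k^2)$ constraints. The paper silently elides this, implicitly treating Conjecture~\ref{conj:pih} as being about instances with constraints on all pairs, so your parameter accounting is the more rigorous of the two; the only thing I would sharpen is to state explicitly that the intended reading of Conjecture~\ref{conj:pih} (which the reduction really needs) is the all-pairs version, rather than leaving the $m = \Omega(k^2)$ caveat as a hedge.
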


(Indeed, the $k$ variables in the Max-2-CSP correspond to $v_1,\dots,v_k$, the set of vertices coloured $i$ is the alphabet for variable $v_i$, and the edges between two colour sets define a constraint).
As a side note, we remark that an inspection of the proof of~\cite[Theorem~5.12]{CCK20} gives that Gap-ETH implies that the above is hard even if the soundness case is strengthened as follows, for \emph{any} constant $\delta>0$:
\begin{itemize}
	\item every $k$-tuple $v_1,\dots,v_k$ (regardless of colours) induces a subgraph on $<\delta\binom{k}{2}$ edges in $G$.
\end{itemize}

The problem in Observation~\ref{obs:densest-hard} is almost a maximum graph homomorphism problem on cliques,
except that, crucially, the mapping $i \mapsto v_i$ is forced to be injective.
To show that $\HOM(\A,-)$ is hard for the class $\A$ of tournaments,
intuitively, we use the fact that a map from a random tournament to itself must map most arcs to themselves (and is hence approximately injective).
This is formalised as follows.

\begin{lemma}
\label{lemma:oriented-clique}
For every $\delta > 0$, there exists constants $0 < \lambda < \delta$ and $N\geq 1$ such that the following holds.
For every $k\geq N$, there is an orientation $\structA$ of the clique of size $k$ such that every mapping $g:A\to A$ of $\structA$ to itself with $\val(g)\geq(1-\lambda)\binom{k}{2}$ must map at least $(1-\delta)\binom{k}{2}$ arcs to themselves.
\end{lemma}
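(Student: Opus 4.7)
The plan is to take $\structA$ to be a uniformly random tournament on $[k]$ (each pair $\{u,v\}$ oriented independently with probability $1/2$ each way) and establish the statement by the probabilistic method, with $\lambda$ a small constant depending only on $\delta$. Any $g\colon A \to A$ with $|F(g)| \geq (1-\delta/3)k$—where $F(g)$ is the set of fixed points—already maps at least $\binom{(1-\delta/3)k}{2} \geq (1-\delta)\binom{k}{2}$ arcs (the fixed–fixed ones) to themselves, so for such $g$ the conclusion is automatic. It therefore suffices to show that with probability tending to $1$ there is no $g$ with $|F(g)| < (1-\delta/3)k$ satisfying $\val(g) \geq (1-\lambda)\binom{k}{2}$, and I will proceed by a union bound over the at most $k^k$ candidate maps.

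Fix such a $g$ and classify each unordered pair $\{u,v\}$ according to $\phi(\{u,v\}) := \{g(u),g(v)\}$: type~A if $g(u)=u$ and $g(v)=v$; type~B if $\phi$ swaps $u$ and $v$; type~C if $g(u)=g(v)$; and type~D otherwise. Types B and C never contribute to $\val(g)$, so if $T_B + T_C > \lambda\binom{k}{2}$ the deterministic bound $\val(g) \leq \binom{k}{2} - T_B - T_C < (1-\lambda)\binom{k}{2}$ settles the matter. Otherwise, the orientations of a type-D pair $e$ and of $\phi(e) \neq e$ are independent fair bits, so each type-D pair is correctly mapped with probability exactly $1/2$; a direct calculation then gives $\mathbb{E}[\val(g)] = T_A + |E_D|/2 \leq \tfrac{1}{2}(\binom{k}{2} + T_A) \leq (1-\delta/4)\binom{k}{2}$ for $\delta$ sufficiently small, using $T_A = \binom{|F(g)|}{2} \leq (1-\delta/3)^2\binom{k}{2}$. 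Hence for $\lambda < \delta/8$ the bad event forces $\val(g)$ to deviate above its mean by $\Omega(\delta k^2)$.

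What remains is a tail bound matching this deviation. Encoding orientations by independent Rademachers $\sigma_e \in \{\pm 1\}$, one has $\val(g) - \mathbb{E}[\val(g)] = \tfrac{1}{2}\sum_{e \in E_D} \tau_e\, \sigma_e \sigma_{\phi(e)}$ for some signs $\tau_e \in \{\pm 1\}$ determined by $g$, a homogeneous degree-$2$ Rademacher chaos whose symmetric coefficient matrix $A$ satisfies $\|A\|_F^2 = O(|E_D|) = O(k^2)$. For $g$ a permutation, $\phi$ is a bijection on $\binom{A}{2}$, each orientation bit lies in at most two constraints, so $\|A\|_{\mathrm{op}} = O(1)$, and Hanson--Wright (or McDiarmid with bit-wise Lipschitz constants at most $2$) yields $\Pr[\val(g) - \mathbb{E}[\val(g)] \geq c\delta k^2] \leq \exp(-\Omega(\delta^2 k^2))$, easily absorbing the $k!$ candidate permutations. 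The main technical obstacle is the non-injective regime, where large fibres of $g$ can blow up $\|A\|_{\mathrm{op}}$; I intend to handle it by first noting that any fibre of size $\geq \sqrt{2\lambda}\,k$ already forces $T_C > \lambda\binom{k}{2}$ and lands in the deterministic case, and then stratifying the union bound by $s := k - |F(g)|$: among the $\binom{k}{s}(k-1)^s \leq k^{2s}$ such maps, only orientations touching the size-$\leq 2s$ set $S \cup g(S)$ affect $\val(g) - T_A$, and—combined with the fibre bound—this keeps $\sum_b c_b^2$ small enough that a stratum-tailored McDiarmid estimate beats $k^{2s}$ for every $s$. Summing the resulting per-$g$ bound $\exp(-\Omega(\delta^2 k^2))$ over all $k^k = \exp(O(k\log k))$ candidates drives the failure probability to $o(1)$ as $k \to \infty$, producing the desired tournament $\structA$ for every $k \geq N$.
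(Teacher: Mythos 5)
Your high-level plan (take a uniformly random tournament, show by a union bound over the $k^k$ candidate maps that with positive probability none is bad) is the same as the paper's, and your type classification, expectation bound $\mathbb{E}[\val(g)] \leq (m + T_A)/2$, and Hanson--Wright treatment of the permutation case are all fine. The gap is in the non-injective regime, and I don't think your sketched fix closes it. You cap every vertex fibre at $\sqrt{2\lambda}\,k$, which gives $|\phi^{-1}(e')| \leq 4\lambda k^2$ for every pair $e'$, and hence per-bit Lipschitz constants $c_{e'}$ as large as $\Theta(\lambda k^2)$. For a stratum with $s = \Theta(k)$ non-fixed points the variance proxy is then $\sum_{e'} c_{e'}^2 \leq \max c_{e'} \cdot \sum c_{e'} = O(\lambda k^2 \cdot s k) = O(\lambda k^4)$, while the required deviation is only $t = \Theta(\delta k^2)$. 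McDiarmid's bound $\exp(-2t^2/\sum c_{e'}^2)$ is then $\exp\bigl(-O(\delta^2/\lambda)\bigr)$, a \emph{constant}, which cannot absorb the $k^{2s} = e^{\Theta(k\log k)}$ maps in the stratum. The obstruction is structural, not a matter of tuning: a single pair $e'$ can genuinely appear in $\Theta(\lambda k^2)$ constraints, so one cannot get a sub-Gaussian tail with $k^2$-scale variance from bounded-differences alone.

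The paper sidesteps concentration entirely. For a fixed $g$ it lets $F$ be the (at most $\lambda m$) arcs $g$ maps incorrectly, notes that the correctly-mapped non-identical arcs number at least $(\delta - \lambda)m$ and are all type D, and then bounds the probability that all of these constraints hold simultaneously. Each such constraint says $\sigma_e$ must agree (with a fixed sign) with $\sigma_{\phi(e)}$ for a distinct pair $\phi(e)$; since $\phi$ has out-degree at most one, the constraint graph on pairs decomposes into components that are trees or unicyclic graphs, so a spanning forest carries at least half the constraints, and processing its edges one at a time gives a factor of $1/2$ each. Hence $\Pr \leq 2^{-(\delta-\lambda)m/2}$ for every $g$, injective or not. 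Combined with the $\sum_{i\leq \lambda m}\binom{m}{i} \leq 2^{H(\lambda)m}$ choices of $F$ and the $k^k = 2^{k\log_2 k}$ choices of $g$, the whole union is $2^{-(\frac{\delta}{2} - \frac{\lambda}{2} - H(\lambda))m + k\log_2 k} < 1$ once $\lambda$ is small and $k$ is large. If you want to rescue your concentration approach, you would at minimum have to condition on the orientations of all pairs in the image $S \cup g(S)$ first (after which the remaining bits decouple into bounded-degree constraints), which amounts to re-deriving the paper's spanning-forest count by different means; the unconditional McDiarmid/Hanson--Wright route does not go through.
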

\begin{proof}
For $\delta>0$, denote $m:=\binom{k}{2}$ and choose $N\geq 1$ such that $k\log_2 k\leq \frac{\delta}{3}m$, for all $k\geq N$. 
Let $\lambda>0$ be constant to be chosen later. 
Let $\structA$ be a random orientation of the clique of size $k$ with $k\geq N$ (each edge is independently oriented in either direction with probability $\frac{1}{2}$). 
We will show that with positive probability $\structA$ admits no map $g: A \to A$ to itself with the properties that $\val(g)\geq(1-\lambda)m$ but less than
  $(1-\delta)\binom{k}{2}$ arc of $\structA$ are mapped identically by $g$.

If a map as above existed, it would imply the existence of a set $F$ of arcs of $\structA$ with $|F|\leq \lambda m$ 
and a mapping $g:A\to A$ such that $g$ maps all the arcs of $\structA-F$ correctly,
and such that $g$ maps less than $(1-\delta)\binom{k}{2}$ vertex pairs identically.
Let us bound the probability that there exist such $F,g$.
The number of possible $F$ is $\leq \sum_{i=0}^{\lambda m}\binom{m}{i} \leq 2^{H(\lambda)m}$.
The number of possible $g$ is $\leq k^{k}$.
For fixed $F,g$, if $g$ maps less than $(1-\delta)\binom{k}{2}$ vertex pairs identically,
then the number of remaining arcs of $\structA-F$ is at least $(1-\lambda)m-(1-\delta)\binom{k}{2} = (\delta-\lambda)m$;
the probability that all these arcs are mapped correctly by $g$ 
is at most $\frac{1}{2}^{(\delta-\lambda)m/2}$
(each of these arcs is mapped correctly with probability $\frac{1}{2}$; since the function $g$ forms cycles on the set of arcs, the events for individual arcs are not independent, but if we ignore one arc from each cycle they are; since cycles have length at least 2, we ignore at most $\frac{1}{2}$ of these arcs).
Hence in total the probability that some such $F,g$ exist is at most 
	\[2^{H(\lambda)m}  \cdot k^k \cdot  2^{-(\delta-\lambda)m/2}=\\
	2^{-(\frac{\delta}{2}-\frac{\lambda}{2}-H(\lambda))m} \cdot 2^{k\log_2 k} \leq 2^{-(\frac{\delta}{2}-\frac{\lambda}{2}-H(\lambda)-\frac{\delta}{3})m}.\]
This is less than $1$ by taking $\lambda$ small enough so that $\frac{\delta}{6}-\frac{\lambda}{2}-H(\lambda) > 0$.	
\end{proof}

This allows us to make the reduction.

\begin{lemma}
	\label{lemma:hard-fixed-sign-all}
	For every $\delta > 0$, there exists constants $0 < \lambda < \delta$ and $N\geq 1$ such that the following holds.	
	Given $k\geq N$, a graph $G$ on $n$ vertices, and a proper $k$-colouring $c$ of $G$,
	we can compute in $f(k)\cdot n^{\Oh(1)}$ time an orientation $\structA$ of the clique of size $k$ and a directed graph $\structB$ such that  
	\begin{itemize}
		\item if $G$ contains a clique of size $k$, then $\opt{\structA,\structB}=\binom{k}{2}$, 
		\item if every $v_1,\dots,v_k$ in $G$ with $c(v_i)=i$ induce $<(1-2\delta)\binom{k}{2}$ edges,
		then $\opt{\structA,\structB}<(1-\lambda)\binom{k}{2}$.
	\end{itemize}
\end{lemma}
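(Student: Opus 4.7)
The plan is to reduce from the promise problem of Observation~\ref{obs:densest-hard}. Given $\delta > 0$, I first invoke Lemma~\ref{lemma:oriented-clique} to obtain constants $\lambda \in (0, \delta)$ and $N \geq 1$ such that, for every $k \geq N$, there exists an orientation $\structA$ of the complete graph on $[k]$ with the rigidity property that every self-map of value at least $(1-\lambda)\binom{k}{2}$ fixes at least $(1-\delta)\binom{k}{2}$ arcs. Since $\structA$ depends only on $k$, it can be found by brute force in $f(k)$ time, independent of $n$. On input $(G, c)$ with $G$ on $n$ vertices, construct a directed graph $\structB$ with vertex set $V(G)$ where $f^\structB(u, v) := 1$ iff $uv \in E(G)$ and $(c(u), c(v))$ is an arc of $\structA$; this takes $O(n^2)$ time. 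Intuitively, each colour class in $\structB$ represents the corresponding vertex of $\structA$, and every edge of $G$ gets the unique orientation consistent with $\structA$.

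Completeness is immediate: if $G$ contains a clique $v_1, \dots, v_k$ with $c(v_i) = i$, then $h(i) := v_i$ satisfies $\val(h) = \binom{k}{2}$. For soundness, suppose $h \colon [k] \to V(G)$ achieves $\val(h) \geq (1-\lambda)\binom{k}{2}$, and set $g := c \circ h \colon [k] \to [k]$. By the construction of $\structB$, whenever an arc $(i,j)$ of $\structA$ contributes to $\val(h)$, the pair $(c(h(i)), c(h(j))) = (g(i), g(j))$ is again an arc of $\structA$, so viewed as a self-map of $\structA$, $\val(g) \geq \val(h) \geq (1-\lambda)\binom{k}{2}$. Applying Lemma~\ref{lemma:oriented-clique}, $g$ fixes $\geq (1-\delta)\binom{k}{2}$ arcs; equivalently, $S := \{i \in [k] : c(h(i)) = i\}$ satisfies $\binom{|S|}{2} \geq (1-\delta)\binom{k}{2}$.

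To finish, extend the partial assignment $i \mapsto h(i)$ (for $i \in S$) to a full sequence $v_1, \dots, v_k$ with $c(v_i) = i$ for all $i \in [k]$, by choosing arbitrary colour-$i$ vertices when $i \notin S$ (one may assume each colour class is non-empty). By the soundness hypothesis, $G[\{v_1, \dots, v_k\}]$ has fewer than $(1-2\delta)\binom{k}{2}$ edges, which upper-bounds the contribution to $\val(h)$ from the $\binom{|S|}{2}$ self-fixed arcs; the remaining at most $\delta\binom{k}{2}$ arcs contribute at most $\delta\binom{k}{2}$. Therefore $\val(h) < (1-\delta)\binom{k}{2} \leq (1-\lambda)\binom{k}{2}$, the desired contradiction. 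The substantive probabilistic work is already packaged into Lemma~\ref{lemma:oriented-clique}; the main conceptual step here is the projection $g := c \circ h$, which reduces a high-value map into $\structB$ to a high-value self-map of the rigid tournament $\structA$, thereby forcing $h$ to be approximately colour-preserving so that the soundness bound on coloured $k$-tuples in $G$ can be invoked.
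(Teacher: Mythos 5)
Your proof is correct and follows essentially the same route as the paper's: construct $\structB$ by orienting edges of $G$ consistently with $\structA$ via the colouring, then project a high-value map $h$ through $c$ to obtain a high-value self-map $g = c\circ h$ of $\structA$, and invoke the rigidity of $\structA$ from Lemma~\ref{lemma:oriented-clique}. The only cosmetic difference is that you argue the soundness by contradiction (bounding $\val(h)$ directly from the number of edges in the extracted coloured $k$-tuple), whereas the paper argues the contrapositive by exhibiting a coloured tuple inducing many edges; the underlying counting is the same.
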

\begin{proof}
For $\delta>0$, let $\lambda$ and $N$ be as in Lemma~\ref{lemma:oriented-clique}. 
Given $k\geq N$, $G$ and a proper $k$-colouring $c:V(G)\to\{1,\dots,k\}$ of $G$, we start by computing an orientation $\structA$ of the clique on the set of colours $\{1,\dots,k\}$ as in Lemma~\ref{lemma:oriented-clique} (in time depending on $k$ only). 
The directed graph $\structB$ has vertex set $V(G)$ and $(u,v)$ is an arc in $\structB$ iff $\{u,v\}\in E(G)$ and $({c(u)}, {c(v)})$ is an arc in $\structA$. 
Suppose that $G$ contains a clique $\{v_1,\dots,v_k\}$ of size $k$.
Without loss of generality $c(v_i)=i$.
Then $\opt{\structA,\structB}=\binom{k}{2}$ via the mapping $h(i)\defeq v_i$.

Assume now that $\opt{\structA,\structB} \geq (1-\lambda)\binom{k}{2}$,
so there is a mapping $g:A\to B$ with $\val(g)\geq (1-\lambda)\binom{k}{2}$.
Note that $c:B\to A$ is a homomorphism from $\structB$ to $\structA$. 
It follows that the mapping $c\circ g:A\to A$ from $\structA$ to itself has $\val(c\circ g)\geq (1-\lambda)\binom{k}{2}$. 
By Lemma~\ref{lemma:oriented-clique}, we have 
that $c\circ g$ maps at least $(1-\delta)\binom{k}{2}$ arcs to themselves. 
Let $F$ be the set of arcs that are not mapped to themselves by $c \circ g$ (so $|F| \leq \delta\binom{k}{2}$).
Let $F'$ be the set of arcs of $\structA$ that are mapped incorrectly by $g$ (so $|F'|\leq \lambda\binom{k}{2}$).  
The remaining arcs, $\structA-F-F'$, satisfy the following:
their number is at least $(1-\delta-\lambda)\binom{k}{2} \geq (1-2\delta)\binom{k}{2}$;
they are mapped by $g$ to some arcs in $\structB$ and hence to some edge in~$G$;
and if $i \in\{1,\dots,k\}$ is an endpoint of any of these arcs, then $c(g(i))=i$.
We can hence take $v_i \defeq g(i)$ if $i$ is not isolated in $\structA-F-F'$
and take an arbitrary $v_i$ with $c(v_i)=i$ otherwise;
the resulting $k$-tuple induces at least $(1-2\delta)\binom{k}{2}$ edges in $G$ and satisfies $c(v_i)=i$.
\end{proof}

From Observation~\ref{obs:densest-hard} (with some $\eps>0$) and Lemma~\ref{lemma:hard-fixed-sign-all} (with $\delta=\frac{\eps}{2}$) we conclude:
\begin{corollary}\label{cor:tournamentsAreHard}
	Assuming Conjecture~\ref{conj:pih}, there is a constant $\lambda>0$ such that $\HOM(\A,-)$
	for the class of tournaments $\A$ 
	has no $(1-\lambda)$-approximation running in time $f(|\structA|)(|\structA|+|\structB|)^{\Oh(1)}$.
\end{corollary}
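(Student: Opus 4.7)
The plan is to derive the corollary by composing the hardness from Observation~\ref{obs:densest-hard} with the gap-preserving reduction of Lemma~\ref{lemma:hard-fixed-sign-all}. Concretely, I would fix $\eps>0$ witnessing the hardness in Observation~\ref{obs:densest-hard} (under Conjecture~\ref{conj:pih}), set $\delta := \eps/2$, and let $\lambda$ and $N$ be the resulting constants from Lemma~\ref{lemma:hard-fixed-sign-all}. I claim this $\lambda$ works for the corollary. The argument is a standard contrapositive: I assume towards contradiction that some algorithm $\mathcal{M}$ returns a $(1-\lambda)$-approximation for $\HOM(\A,-)$ in time $f(|\structA|)(|\structA|+|\structB|)^{O(1)}$ and use it to decide the promise problem of Observation~\ref{obs:densest-hard} within the time budget forbidden by Conjecture~\ref{conj:pih}.

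The reduction is immediate from Lemma~\ref{lemma:hard-fixed-sign-all}. Given an input $(k,G,c)$ as in Observation~\ref{obs:densest-hard} with $k\geq N$ (smaller $k$ are handled by brute force in $f(k)$ time), I apply the lemma to produce in time $f(k)\cdot n^{O(1)}$ a tournament $\structA$ of size $k$ and a digraph $\structB$ with $|\structB|\leq n^{O(1)}$, and then run $\mathcal{M}$ on $(\structA,\structB)$. In the \emph{completeness} case, $G$ contains a rainbow $k$-clique, so $\opt{\structA,\structB}=\binom{k}{2}$ and $\mathcal{M}$ must return a value at least $(1-\lambda)\binom{k}{2}$. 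In the \emph{soundness} case, every $k$-tuple $v_1,\dots,v_k$ with $c(v_i)=i$ induces fewer than $(1-\eps)\binom{k}{2} = (1-2\delta)\binom{k}{2}$ edges, so by the second bullet of Lemma~\ref{lemma:hard-fixed-sign-all} we have $\opt{\structA,\structB}<(1-\lambda)\binom{k}{2}$, and the value returned by $\mathcal{M}$ is strictly below this threshold. Thus thresholding at $(1-\lambda)\binom{k}{2}$ distinguishes the two cases.

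Finally, I would check the running-time bookkeeping: the reduction takes $f(k)\cdot n^{O(1)}$ time, $\mathcal{M}$ runs in $f'(|\structA|)(|\structA|+|\structB|)^{O(1)} = f'(k)\cdot n^{O(1)}$ time, and the postprocessing is a single comparison. The total time is of the form $g(k)\cdot n^{O(1)}$, which contradicts Conjecture~\ref{conj:pih} via Observation~\ref{obs:densest-hard}.

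There is essentially no obstacle here, since both the hardness source (Observation~\ref{obs:densest-hard}) and the reduction (Lemma~\ref{lemma:hard-fixed-sign-all}) have already been established; the only minor care needed is to make sure the gap parameters line up ($\delta=\eps/2$ ensures $1-2\delta=1-\eps$ exactly matches the soundness guarantee of the colored-clique problem), and to confirm that the size $|\structA|=k$ appearing in the assumed running time of $\mathcal{M}$ depends only on the parameter $k$ of the source instance, so the combined algorithm stays in the FPT-like regime $g(k)\cdot n^{O(1)}$ that the conjecture forbids.
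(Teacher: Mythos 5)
Your proposal is correct and follows exactly the paper's approach: the paper also derives the corollary by taking $\eps$ from Observation~\ref{obs:densest-hard}, invoking Lemma~\ref{lemma:hard-fixed-sign-all} with $\delta=\eps/2$, and thresholding the approximation algorithm's output at $(1-\lambda)\binom{k}{2}$. The paper simply states this composition in one sentence, whereas you spell out the gap-matching ($1-2\delta=1-\eps$) and the $g(k)\cdot n^{O(1)}$ time bookkeeping, all of which are correct.
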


In particular, assuming Gap-ETH, there is no PTAS (and actually no \emph{FPT approximation scheme}) for $\HOM(\A,-)$.

\section{Open questions}
\label{sec:open}

\paragraph{Dichotomy}
Our results, in particular Lemma~\ref{lem:fragileIffNice}, lead us to believe
that perhaps the next question has a positive answer. 

\begin{question}\label{q:dichotomy}
  Does $\rCSP(\Gg)$ admit a PTAS for every $r$ if and only if $\Gg$ is
  fractionally-$\tw$-fragile?
\end{question}

\noindent
A concrete open question concerns cographs, i.e., graphs of clique-width two:
Are cographs pliable? More generally, are graphs of bounded clique-width
pliable? Note that while certain techniques and results transfer from
structurally sparse graphs to dense graphs (clique-width being an example), this
fails for pliability as pliability is not closed under complementation since all
classes of dense graphs are pliable.

Some example cases where it would be important to show hardness of approximation
(or at least integrality gaps for constant levels of the Sherali-Adams
hierarchy) in order to shed light on Question~\ref{q:dichotomy} are classes of unbounded average degree or classes of 3-regular graphs with unbounded girth. In fact, we do not know of any examples of non-pliable classes of structures $\A$ for which $\HOM(\A,-)$ admits a PTAS and thus it is non inconceivable that pliability captures all tractable cases.

\paragraph{Constant-factor approximation}
Instead of PTASes one can of course ask about the existence of \emph{some} constant-factor approximation.
For fixed signatures, $\HOM$ always admits a simple constant-factor approximation: essentially map everything randomly to the densest $r$-tuple, where $r$ is the maximum arity.
For the general $\rCSP(\Gg)$ problem the situation is more interesting:
in general (when $\Gg$ is the class of all graphs) a constant-factor approximation is impossible;
on the other hand for any monotone class of bounded average degree, there is again a simple solution:
because such classes have bounded degeneracy, the edge set can be partitioned into a constant number of trees, where the problem can be solved exactly.
The results of \cite{DinurPasin18} imply that if the average degree is too high, the problem is again hard.
Can a dichotomy be shown?

\paragraph{Weak hyperfiniteness}
As shown in Theorem~\ref{thm:hyperfinite},
monotone hyperfinite classes are fractionally-$\tw$-fragile and have bounded degree.
The vertex version of hyperfiniteness is called ``weakly hyperfinite'' in~\cite{NesetrilM12chapter}
or ``fragmentable'' in~\cite{EdwardsM94,EdwardsFarr12}.
Is is strictly weaker: stars satisfy it, despite having unbounded degree.
Ne\v{s}et\v{r}il and Ossona de Mendez~\cite{NesetrilM12chapter} proved that for a monotone class of graphs $\Gg$ of bounded average degree, $\Gg$ is weakly hyperfinite if and only if for every $d\in \NN$, $\{G \in \Gg \colon \max\deg(G)\leq d\}$ is hyperfinite.
This suggests a possible extension to graphs of unbounded degree:
are monotone weakly hyperfinite classes fractionally-$\tw$-fragile?
This would imply a conjecture of Dvo\v{r}\'{a}k~\cite{Dvorak16}, that all graph classes with strongly sublinear separators are fractionally-$\tw$-fragile.
However, it is not even known whether all monotone weakly hyperfinite classes have bounded average degree.

\paragraph{Efficient PTAS}
As mentioned in the introduction, both dense graphs and hyperfinite graphs can be approximated by constant-size descriptions, and in fact by constant-size random samples.
Since $\size$-pliability also approximates with constant-size descriptions,
this suggests there may be a general way to sample from such structures
to give constant-time approximations (for an appropriate input model).
In particular, can property-testing results for hyperfinite graphs be extended to fractionally-$\tw$-fragile graphs?
Perhaps this could be a way to obtain EPTASes\footnote{Efficient PTAS (EPTAS) is a PTAS whose running time is $\Oh(n^c)$ for a constant $c$ independent of $\varepsilon$.} for $\rCSPs$ with fixed alphabets.
Our methods seems unlikely to give an EPTAS directly.
The analogous question in the exact setting is as follows; we believe it to be open.

\begin{question}\label{q:noEPTAS}
	Is 3-colouring fixed-parameter tractable when parameterised by the treewidth of the core (the smallest homomorphically equivalent subgraph)?
	That is, given a graph $G$ which is promised to have a core of treewidth at most $k$, can we decide its 3-colourability in time $f(k) |G|^{\Oh(1)}$ for some function $f$?
\end{question}

In this question treewidth could also be replaced by size, in which case an algorithm with running time $\Oh(n)^k$ is trivial (test every $k$-subgraph for 3-colourability).
A similar algorithm for treewidth is due to Dalmau et al.~\cite{DalmauKV02}, see also~\cite[Theorem 3.1]{Grohe07:jacm}.

\section*{Acknowledgements}
The authors would like to thank the referees of the extended
abstract~\cite{rwz21:soda} and (in particular) of this full version for many
detailed and thoughtful comments.

{\small
\bibliographystyle{plainurl}
\bibliography{rwz23}
}

\appendix

\section{Farkas' lemma and proofs of Proposition~\ref{prop:overcast} and Lemma~\ref{lem:thin}} \label{sec:Farkas}
\label{sec:farkas}

Farkas' lemma is the fundamental duality for systems of linear inequalities.
\begin{lemma}[\protect{Farkas' lemma~\cite[Corollary~7.1d]{Schrijver86:ILP}}]\label{lem:farkasOriginal}
	Let $A$ be an $m \times n$ rational matrix and $\bar{b} \in \QQ^m$. Then, exactly one of the two holds:
	\begin{itemize}
		\item $A\bar{x} = \bar{b}$ for some $\bar{x} \in \QQ^n$ with $\bar{x}\geq 0$, or
		\item $A^T\bar{y} \geq 0$ and $\bar{b}^T\bar{y} < 0$ for some $\bar{y} \in \QQ^m$.
	\end{itemize}
\end{lemma}

\noindent
For the duality between the existence of overcasts and the overcast relation $\overcasts$, we use Farkas' lemma in the following form:

\begin{lemma}[Farkas' lemma, variant 1]\label{lem:farkas1}
	Let $A$ be an $m \times n$ rational matrix and $\bar{b} \in \QQ^m$.
	Exactly one of the following holds:
	\begin{itemize}
		\item there are $x_i\in\QQ_{\geq 0}$ ($i=1,\dots,n$) such that $\sum_i x_i = 1$ and $\sum_i A_{i,j} x_i \geq b_j$ for $j=1,\dots,m$;
		\item there are $y_j\in\QQ_{\geq 0}$ ($j=1,\dots,m$) such that $\sum_j A_{i,j} y_j < \sum_j b_j y_j$ for $i=1,\dots,n$.	
	\end{itemize}
\end{lemma}
\begin{proof}
	\newcommand{\slack}{s}
	The first condition is equivalent to the existence of a solution in variables $x_i \in \QQ_{\geq 0}$ ($i=1,\dots,n)$ and $\slack_j \in \QQ_{\geq 0}$ ($j=1,\dots,m$) of the following system:
	\begin{align*}
		\sum_i x_i &= 1\\
		\sum_i A_{i,j} x_i - \slack_j &= b_j \quad \quad \text{(for $j=1,\dots,m$)}.
	\end{align*}
	By Lemma~\ref{lem:farkasOriginal}, this system has a solution if and only if the following system has no solution	in variables $z \in \QQ$ and $y_j \in \QQ^m$ for $j=1,\dots,m$:
	\begin{align*}
		z + \sum_j A_{i,j} y_j &\geq 0 \quad \quad \text{(for $i=1,\dots,n$)}\\
		-y_j &\geq 0 \quad \quad \text{(for $j=1,\dots,m$)}\\
		z + \sum_j b_j y_j &< 0
	\end{align*}
	Equivalently, there are no $y_j' = -y_j \in \QQ_{\geq 0}^m$ such that
	$\sum_j A_{i,j} y_j' \leq z < \sum_j b_j y_j'$ (for $i=1,\dots,n$).
\end{proof}

\begin{proposition*}[Proposition~\ref{prop:overcast} restated]
	Let $\structA$ and $\structB$ be $\sigma$-structures.
	Then, $\structA \overcasts \structB$ if and only if there is an overcast from $\structA$ to $\structB$. 
\end{proposition*}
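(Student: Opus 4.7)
The plan is to prove both directions, one by direct computation and one by LP duality (Farkas' Lemma, variant~1 from Appendix~\ref{sec:farkas}).

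For the ``if'' direction, assume $\omega$ is an overcast from $\structA$ to $\structB$, let $\structC$ be arbitrary, and let $h \colon B \to C$ be any map. The natural construction is the randomised composition $h \circ g$ for $g \sim \omega$. I would compute $\EX_{g \sim \omega} \val(h \circ g)$ by swapping the summation over $(f,\tuple{y}) \in \tup{\structA}$ with the expectation, re-indexing via $\tuple{x} = g(\tuple{y})$, and collecting the inner sum into $f^{\structA}(g^{-1}(\tuple{x}))$. The defining inequality of an overcast then bounds this from below by $\sum_{(f,\tuple{x}) \in \tup{\structB}} f^{\structC}(h(\tuple{x})) f^{\structB}(\tuple{x}) = \val(h)$. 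Hence some $g$ achieves $\val(h \circ g) \geq \val(h)$, so $\opt{\structA,\structC} \geq \val(h)$; taking the supremum over $h$ yields $\structA \overcasts \structB$.

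For the ``only if'' direction, the existence of an overcast is a feasibility question for the system: find nonnegative $\omega(g)$ indexed by $g \in B^A$ with $\sum_g \omega(g) = 1$ and $\sum_g \omega(g) f^{\structA}(g^{-1}(\tuple{x})) \geq f^{\structB}(\tuple{x})$ for each $(f,\tuple{x}) \in \tup{\structB}$. This is precisely the format handled by Lemma~\ref{lem:farkas1} (variant~1 of Farkas' Lemma). Its infeasibility is equivalent to the existence of nonnegative numbers $y_{(f,\tuple{x})}$, indexed by $(f,\tuple{x}) \in \tup{\structB}$, satisfying
\[\sum_{(f,\tuple{x})} f^{\structA}(g^{-1}(\tuple{x}))\, y_{(f,\tuple{x})} \ <\ \sum_{(f,\tuple{x})} f^{\structB}(\tuple{x})\, y_{(f,\tuple{x})} \quad\text{for every } g \in B^A.\]

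The key observation is that the $y$'s are exactly the data of a $\sigma$-structure $\structC$ on domain $C \defeq B$, via $f^{\structC}(\tuple{x}) \defeq y_{(f,\tuple{x})}$ (set to $0$ for tuples not in $\tup{\structB}$). Under this identification, the left-hand side unfolds to $\val(g)$ (viewing $g$ as a map $A \to C$), while the right-hand side equals $\val(\id_B)$ for the identity map $\id_B \colon B \to C$. So non-existence of an overcast gives a $\structC$ with $\val(g) < \val(\id_B)$ for every $g \colon A \to C$, forcing $\opt{\structA,\structC} < \val(\id_B) \leq \opt{\structB,\structC}$, which contradicts $\structA \overcasts \structB$. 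The only technical point is ensuring the index sets align with the Farkas statement (in particular, treating tuples outside $\tup{\structB}$ by setting the associated $y$ to zero), but this is bookkeeping rather than a real obstacle.
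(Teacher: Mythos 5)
Your proof is correct and follows essentially the same approach as the paper's: the ``if'' direction is the same averaging argument over $g \sim \omega$ (computing $\EX_{g\sim\omega}\val(h\circ g)$ and using the overcast inequality), and the ``only if'' direction uses exactly the same variant of Farkas' Lemma, packaging the dual variables $y(f,\tuple{x})$ into a $\sigma$-structure on domain $B$ to extract the contradicting $\structC$.
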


\begin{proof}
First, suppose that there exists an overcast $\omega$ from $\structA$ to $\structB$. 
Let $\structC$ be a $\sigma$-structure. Then, if $h$ is a maximum-value mapping from $\structB$ to $\structC$ we have
\begin{align*}
\opt{\structB,\structC} = \sum_{(f,\tuple{x})\in \tup{\structB}}f^{\structB}(\tuple{x})f^{\structC}(h(\tuple{x})) 
&\leq \sum_{(f,\tuple{x})\in \tup{\structB}} \left(\sum_{g \in B^A}\omega(g)f^{\structA}(g^{-1}(\tuple{x})) \right) f^{\structC}(h(\tuple{x}))\\
&= \sum_{g \in B^A}\omega(g) \left( \sum_{(f,\tuple{x})\in \tup{\structB}} f^{\structA}(g^{-1}(\tuple{x}))f^{\structC}(h(\tuple{x})) \right)\\
&= \sum_{g \in B^A}\omega(g) \left( \sum_{(f,\tuple{y})\in \tup{\structA}} f^{\structA}(\tuple{y}) f^{\structC}(h(g(\tuple{y}))) \right)
\end{align*}
and hence there exists $g \in B^A$ such that $\opt{\structB,\structC} \leq
  \sum_{(f,\tuple{y})\in \tup{\structA}} f^{\structA}(\tuple{y})
  f^{\structC}(h(g(\tuple{y})))=\val(h\circ g) \leq \opt{\structA,\structC}$.
  Therefore, $\structA \overcasts \structB$. For the converse implication, we
  shall use Lemma~\ref{lem:farkas1}.
If there is no overcast from $\structA$ to $\structB$,
this means there are no numbers $\omega(g) \in \QQ_{\geq 0}$ (for $g \in B^A$)
such that $\sum_g \omega(g)=1$ and $\sum_{g \in B^A} \omega(g) f^{\structA}(g^{-1}(\tuple{x})) \geq f^\structB(\tuple{x})$ for $(f,\tuple{x}) \in \tup{\structB}$.
By the lemma above, this is equivalent to the existence of $y(f,\tuple{x}) \in \QQ_{\geq 0}$ (for $(f,\tuple{x})\in \tup{\structB}$)
such that 
$$\sum_{(f,\tuple{x}) \in \tup{\structB}} f^{\structA}(g^{-1}(\tuple{x})) y(f,\tuple{x}) <
  \sum_{(f,\tuple{x}) \in \tup{\structB}}  f^\structB(\tuple{x}) y(f,\tuple{x})
  \quad\quad \text{for all }g \in B^A.$$

\noindent
Let $\structB_{\bar y}$ be the $\sigma$-structure with domain $B$ such that $f^{\structB_{\bar y}}(\tuple{x})=y(f,\tuple{x})$, for all $(f,\tuple{x})\in \tup{\structB}$ ($f^{\structB_{\bar y}}(\tuple{x})=0$ otherwise). 
By the above, $\opt{\structA,\structB_{\bar y}}<\opt{\structB,\structB_{\bar y}}$ and hence $\structA \mathbin{\not\overcasts} \structB$. 
\end{proof}

For the duality between $\eps$-thin distributions of modulators and weights
avoiding any $\eps$-small modulator, 
we use a variant of Farkas' lemma obtained by taking $-A$ and $-b$ in
Lemma~\ref{lem:farkas1}.

\begin{lemma}[Farkas' lemma, variant 2]\label{lem:farkas2}
	Let $A$ be an $m \times n$ rational matrix and let $\bar{b} \in \QQ^m$.
	Exactly one of the following holds:
	\begin{itemize}
		\item there are $x_i\in\QQ_{\geq 0}$ ($i=1,\dots,n$) such that $\sum_i x_i = 1$ and $\sum_i A_{i,j} x_i \leq b_j$ for $j=1,\dots,m$;
		\item there are $y_j\in\QQ_{\geq 0}$ ($j=1,\dots,m$) such that $\sum_j A_{i,j} y_j > \sum_j b_j y_j$ for $i=1,\dots,n$.	
	\end{itemize}
\end{lemma}

\begin{lemma*}[Lemma~\ref{lem:thin} restated]
	Let $\mathcal{F}$ be a family of subsets of a set $V$. The following are equivalent:
	\begin{itemize}
		\item there is an $\eps$-thin distribution $\pi$ of sets $X \in \mathcal{F}$ (i.e., for all $v \in V$, $\Pr_{X\sim \pi} [v \in X]  \leq \eps$);
		\item for all non-negative weights $\left(w(v)\right)_{v\in V}$, there is an $X \in \mathcal{F}$ such that $w(X) \leq \eps \cdot w(V)$.
	\end{itemize}
\end{lemma*}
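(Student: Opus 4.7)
The plan is to prove the two directions of the equivalence separately. The forward direction, from the existence of an $\eps$-thin distribution to the combinatorial condition on weights, is a straightforward averaging argument: given any non-negative weight function $w$ and an $\eps$-thin $\pi$, I will compute
\[ \EX_{X \sim \pi}[w(X)] \;=\; \sum_{v \in V} w(v) \Pr_{X \sim \pi}[v \in X] \;\leq\; \eps \cdot w(V), \]
and then conclude that at least one $X \in \mathcal{F}$ in the support of $\pi$ attains $w(X) \leq \eps \cdot w(V)$.

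The backward direction is where LP duality enters, and I will deduce it directly from Lemma~\ref{lem:farkas2} (the second variant of Farkas' lemma stated in Appendix~\ref{sec:farkas}). The set-up is to take rows indexed by sets $X \in \mathcal{F}$ and columns indexed by elements $v \in V$, with matrix entries $A_{X,v} := [v \in X]$ and right-hand side $b_v := \eps$. Then the first alternative of Lemma~\ref{lem:farkas2} becomes the existence of non-negative coefficients $x_X$ summing to $1$ with $\sum_X [v \in X]\, x_X \leq \eps$ for every $v$, which is exactly an $\eps$-thin distribution over $\mathcal{F}$. The second alternative becomes the existence of non-negative weights $y_v$ satisfying $\sum_v [v \in X]\, y_v > \eps \cdot \sum_v y_v$ for every $X \in \mathcal{F}$, i.e., weights $w := y$ such that every $X \in \mathcal{F}$ has $w(X) > \eps \cdot w(V)$. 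This second alternative is precisely the negation of the combinatorial condition in the lemma, so the equivalence follows.

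There is essentially no main obstacle beyond correctly matching the strict/non-strict inequalities in Farkas to the hypothesis: the combinatorial condition asserts the existence of an $X$ with $w(X) \leq \eps \cdot w(V)$, whose negation (over all $w$) is the existence of a $w$ with $w(X) > \eps \cdot w(V)$ for every $X$, which is exactly the strict-inequality alternative provided by Lemma~\ref{lem:farkas2}. The only implicit assumption is finiteness of $V$ and $\mathcal{F}$, which is harmless in all of our applications (graphs are finite, and the family of modulators of interest is finite). Thus the whole proof is short: one expectation line for one direction, and one invocation of Farkas with the matrix $A_{X,v} = [v \in X]$ and bound $\eps$ for the other.
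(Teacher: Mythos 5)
Your proof is correct and uses essentially the same approach as the paper: both reduce to Lemma~\ref{lem:farkas2} with the matrix $A_{X,v}=[v\in X]$ and right-hand side $b_v=\eps$, so that the two alternatives of Farkas correspond exactly to the two bullet points. The only stylistic difference is that you spell out the easy (forward) direction separately with a short averaging argument, whereas the paper lets the ``exactly one'' form of Farkas deliver both directions at once; the outcome is the same.
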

\begin{proof}
	The first item is equivalent to the existence of numbers $\pi(X) \in \QQ_{\geq 0}$ for $X \in \mathcal{F}$ such that $\sum_X \pi(X) = 1$ and for all $v \in V$, $\sum_X [v \in X] \cdot \pi(X) \leq \eps$.
	By Lemma~\ref{lem:farkas2}, this holds if and only if there are no numbers $w(v) \in \QQ_{\geq 0}$ for $v \in V$ such that for all $X \in \mathcal{F}$, $\sum_v [v \in X] \cdot w(v) > \sum_v \eps \cdot w(v)$.
\end{proof}

\section{Proof of Proposition~\ref{prop:overcast-sa}}
\label{sec:overcast-sa}

Let SA$_k(\structA,\structC)$ denote the Sherali-Adams linear programming relaxation of $\HOM(\structA,\structC)$, given in Figure~\ref{fig:sa} in Section~\ref{sec:tract}.
Recall that $\optfrac{k}{\structA,\structC}$ denotes its optimum value and
we write $\structA \overcasts_k \structB$ if $\optfrac{k}{\structA,\structC} \geq \optfrac{k}{\structB,\structC}$ for all structures $\structC$ with the same signature as $\structA$ and $\structB$.

\begin{proposition*}[Proposition~\ref{prop:overcast-sa} restated]
	Let $\structA$ and $\structB$ be $\sigma$-structures and $k\geq\max_{f\in\sigma}\ar{f}$. 
	If there is an overcast from $\structA$ to $\structB$ then $\structA\overcasts_k \structB$. 
\end{proposition*}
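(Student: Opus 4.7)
The plan is to start with an overcast $\omega$ from $\structA$ to $\structB$ and a feasible Sherali--Adams solution $\lambda$ for SA$_k(\structB,\structC)$, and use $\omega$ to pull $\lambda$ back to a feasible SA$_k(\structA,\structC)$ solution $\lambda'$ whose objective value is at least that of $\lambda$. This directly yields $\optfrac{k}{\structA,\structC}\geq \optfrac{k}{\structB,\structC}$, as desired. The natural definition is, for $X\in\binom{A}{\leq k}$ and $s\colon X\to C$,
\[
\lambda'(X,s)\ :=\ \sum_{g\in B^A} \omega(g) \sum_{\substack{t\colon g(X)\to C\\ t\circ g|_X \,=\, s}} \lambda(g(X),t).
\]
This is well-defined because $|g(X)|\leq |X|\leq k$, so $\lambda(g(X),\cdot)$ is one of the SA variables of $(\structB,\structC)$.

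First I would verify that $\lambda'$ is feasible. Non-negativity is immediate from $\omega(g),\lambda\geq 0$. For normalization, summing $\lambda'(X,s)$ over all $s\colon X\to C$ lets each $t\colon g(X)\to C$ contribute exactly once (namely when $s=t\circ g|_X$), so the inner double sum collapses to $\sum_t \lambda(g(X),t)=1$; then $\sum_g \omega(g)=1$. For the marginalization constraint, fix $X\subseteq Y$ in $\binom{A}{\leq k}$ and $s\colon X\to C$. Summing $\lambda'(Y,r)$ over $r\colon Y\to C$ with $r|_X=s$, swapping sums, and reindexing by $u:=t|_{g(X)}$ reduces the claim to the marginalization constraint for $\lambda$ applied to the pair $g(X)\subseteq g(Y)$ in $\binom{B}{\leq k}$, which holds since $\lambda$ is feasible.

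Second I would compare objective values. Expanding the objective of $\lambda'$ and substituting the definition gives, after exchanging sums,
\[
\mathrm{val}(\lambda')\ =\ \sum_{g\in B^A}\omega(g)\sum_{(f,\tuple{x})\in\tup{\structA}} f^\structA(\tuple{x}) \sum_{t\colon \toset{g(\tuple{x})}\to C} \lambda(\toset{g(\tuple{x})},t)\, f^\structC(t(g(\tuple{x}))),
\]
using that once $t$ is fixed, $s=t\circ g|_{\toset{\tuple{x}}}$ is determined, and $g(\toset{\tuple{x}})=\toset{g(\tuple{x})}$. Grouping the tuples $\tuple{x}$ by their image $\tuple{y}:=g(\tuple{x})$ and noting $\sum_{\tuple{x}\in g^{-1}(\tuple{y})} f^\structA(\tuple{x})=f^\structA(g^{-1}(\tuple{y}))$, this becomes
\[
\sum_{(f,\tuple{y})}\Bigl(\sum_g \omega(g)\,f^\structA(g^{-1}(\tuple{y}))\Bigr)\sum_{t\colon \toset{\tuple{y}}\to C}\lambda(\toset{\tuple{y}},t)\,f^\structC(t(\tuple{y})).
\]
The overcast inequality $\sum_g \omega(g)\, f^\structA(g^{-1}(\tuple{y}))\geq f^\structB(\tuple{y})$ then applies termwise (the remaining factors are non-negative) and yields exactly $\mathrm{val}(\lambda)$.

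The main obstacle is getting the definition of $\lambda'$ right so that both feasibility and the overcast bound fall out cleanly; once we index the pulled-back mass by pairs $(g,t)$ with $s$ determined as $t\circ g|_X$, the marginalization check and the key regrouping by $\tuple{y}=g(\tuple{x})$ are essentially mechanical, mirroring the proof of Proposition~\ref{prop:overcast} for the ordinary $\opt$ functional.
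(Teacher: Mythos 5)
Your proof is correct and follows essentially the same approach as the paper: pull back a feasible Sherali--Adams solution for $(\structB,\structC)$ to one for $(\structA,\structC)$ via the overcast, then bound the objective using the overcast inequality. The only cosmetic difference is that you average the pulled-back solution over all $g$ in the support of $\omega$, whereas the paper first applies the overcast inequality at the level of objective values, uses an averaging argument to select a single $g$, and then constructs $\lambda'$ from that one $g$; your variant avoids the ``there exists $g$'' step by working directly with the mixture, which is feasible by convexity, and both routes require verifying the same marginalization constraint for $g(X)\subseteq g(Y)$.
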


\begin{proof}
Let $\structC$ be an arbitrary $\sigma$-structure, $\omega$ be an overcast from $\structA$ to $\structB$ and $\lambda$ be an optimal 
solution to SA$_k(\structB,\structC)$. (Recall that for a tuple $\tuple{x}$ we denote by $\toset{\tuple{x}}$ the set of elements appearing in $\tuple{x}$.)
We have that 
\begin{align*}
\optfrac{k}{\structB,\structC} &= \sum_{(f,\tuple{x}) \in \tup{\structB}, \, s:\toset{\tuple{x}} \to C} \lambda(\toset{\tuple{x}},s) f^{\structB}(\tuple{x}) f^{\structC}(s(\tuple{x}))\\
&\leq \sum_{(f,\tuple{x}) \in \tup{\structB},\, s:\toset{\tuple{x}} \to C} \left(\sum_{g \in B^A} \omega(g)f^{\structA}(g^{-1}(\tuple{x})) \right) \lambda(\toset{\tuple{x}},s) f^{\structC}(s(\tuple{x}))\\
&= \sum_{g \in B^A} \omega(g) \left( \sum_{(f,\tuple{x}) \in \tup{\structB}, \, s:\toset{\tuple{x}} \to C} \lambda(\toset{\tuple{x}},s) f^{\structA}(g^{-1}(\tuple{x})) f^{\structC}(s(\tuple{x})) \right)\\
&= \sum_{g \in B^A} \omega(g) \left( \sum_{(f,\tuple{y}) \in \tup{\structA}, \, s:g(\toset{\tuple{y}}) \to C} \lambda(g(\toset{\tuple{y}}),s) f^{\structA}(\tuple{y}) f^{\structC}(s(g(\tuple{y}))) \right)
\end{align*}
and hence there is $g:A\to B$ such that 
\begin{align}
\optfrac{k}{\structB,\structC} &\leq \sum_{(f,\tuple{y}) \in \tup{\structA}, \, s:g(\toset{\tuple{y}}) \to C} \lambda(g(\toset{\tuple{y}}),s) f^{\structA}(\tuple{y}) f^{\structC}(s(g(\tuple{y})))\label{eq:supp}
\end{align}

For $Y\in\binom{A}{\leq k}$ and $r: Y \to C$, we define
\[
\lambda'(Y,r)=
\begin{cases}
\lambda(g(Y),s) &\quad \text{if there exists $s:g(Y)\to C$ such that $s \circ g = r$}\\
0 &\quad \text{otherwise} 
\end{cases}
\]

Note that $\lambda'$ is a feasible solution of SA$_k(\structA,\structC)$. 
Indeed, for $Y\in\binom{A}{\leq k}$, we have
$$ \sum_{r: Y \to C}\lambda'(Y,r)=\sum_{s: g(Y) \to C}\lambda'(Y,s\circ g)=\sum_{s: g(Y) \to C}\lambda(g(Y),s)=1.$$
Moreover, let $Z\subseteq Y\in \binom{A}{\leq k}$, and $r: Z \to C$. 
If there is no $s:g(Z)\to C$ such that $s \circ g = r$, then 
$$ \lambda'(Z,r) = 0 = \sum_{t:Y \to C,\, t|_Z = r}\lambda'(Y,t).$$
If such a mapping $s$ exists, then 
\begin{align*}
\sum_{t:Y \to C,\, t|_Z = r}\lambda'(Y,t)
&= \sum_{s':g(Y) \to C,\, s'|_{g(Z)}=s} \lambda'(Y,s'\circ g)\\
&= \sum_{s':g(Y) \to C,\, s'|_{g(Z)}=s} \lambda(g(Y),s')\\
&= \lambda(g(Z),s)\\
&= \lambda'(Z,r). 
\end{align*}

Since $\lambda'$ is feasible and by (\ref{eq:supp}), we conclude that 
\begin{align*}
\optfrac{k}{\structA,\structC} &\geq \sum_{(f,\tuple{y}) \in \tup{\structA}, \, r:\toset{\tuple{y}} \to C} \lambda'(\toset{\tuple{y}},r) f^{\structA}(\tuple{y}) f^{\structC}(r(\tuple{y}))\\
&= \sum_{(f,\tuple{y}) \in \tup{\structA}, \, s:g(\toset{\tuple{y}}) \to C} \lambda'(\toset{\tuple{y}},s\circ g) f^{\structA}(\tuple{y}) f^{\structC}(s(g(\tuple{y})))\\
&= \sum_{(f,\tuple{y}) \in \tup{\structA}, \, s:g(\toset{\tuple{y}}) \to C} \lambda(g(\toset{\tuple{y}}),s) f^{\structA}(\tuple{y}) f^{\structC}(s(g(\tuple{y})))\\
&\geq \optfrac{k}{\structB,\structC}.
\end{align*}
\end{proof}

\end{document}